\newtheorem{theorem}{Theorem}[section]
\newtheorem{lemma}{Lemma}[section]
\newtheorem{definition}{Definition}[section]
\newtheorem{remark}{Remark}[section]
\newcommand{\ind}{\stackrel{\mathrm{ind}}{\sim}}
\newcommand{\vx}{\boldsymbol{x}}
\newcommand{\vX}{\boldsymbol{X}}
\newcommand{\vZ}{\boldsymbol{Z}}
\newcommand{\bbeta}{\boldsymbol{\beta}}
\newcommand{\btheta}{\boldsymbol{\theta}}
\numberwithin{equation}{section}
\theoremstyle{plain}
\numberwithin{equation}{section}
\renewcommand\paragraph{\@startsection{paragraph}{4}{\z@}%
            {-2.5ex\@plus -1ex \@minus -.25ex}%
            {1.25ex \@plus .25ex}%
            {\normalfont\normalsize\bfseries}}
\title{Local False Discovery Rate Based Methods for Multiple Testing of One-Way Classified Hypotheses}
\author[1]{Sanat K. Sarkar\thanks{sanat@temple.edu}}
\author[2]{Zhigen Zhao\thanks{zhaozhg@temple.edu}}
\affil[1,2]{Department of Statistics, Operations, and Data Science, Temple University, Philadelphia, PA, 19122, USA}
\begin{document}
\maketitle

\begin{abstract}
  This paper continues the line of research initiated in \cite{Liu:Sarkar:Zhao:2016} on developing a novel framework for multiple testing of hypotheses grouped in a one-way classified form using hypothesis-specific local false discovery rates (Lfdr's). It is built on an extension of the standard two-class mixture model from single to multiple groups, defining hypothesis-specific Lfdr as a function of the conditional Lfdr for the hypothesis given that it is within an important group and the Lfdr for the group itself and involving a new parameter that measures grouping effect. This definition captures the underlying group structure for the hypotheses belonging to a group more effectively than the standard two-class mixture model. Two new Lfdr based methods, possessing meaningful optimalities, are produced in their oracle forms. One,  designed to control false discoveries across the entire collection of hypotheses, is proposed as a powerful alternative to simply pooling all the hypotheses into a single group and using commonly used Lfdr based method under the standard single-group two-class mixture model. The other is proposed as an Lfdr analog of the method of \cite{Benjamini:Bogomolov:2014} for selective inference. It controls Lfdr based measure of false discoveries associated with selecting groups concurrently with controlling the average of within-group false discovery proportions across the selected groups. Simulation studies and real-data application show that our proposed methods are often more powerful than their relevant competitors.
\end{abstract}

{\bf Keywords: } False Discovery Rate, Grouped Hypotheses, Large-Scale Multiple Testing

\section{Introduction}\label{sec:introduction}

Modern scientific studies aided by high-throughput technologies, such as those related to brain imaging, microarray analysis, astronomy, atmospheric science, drug discovery, and many others, are increasingly relying on large-scale multiple testing as an integral part of statistical investigations focused on high-dimensional inference. With many of these investigations, notably in genome-wide association and neuroimaging studies, giving rise to testing of hypotheses that appear in groups, the multiple testing paradigm seems to be shifting from testing single to multiple groups of hypotheses. These groups, forming at single or multiple levels creating respectively one- or multiway classified hypotheses, can occur naturally due to the underlying biological or experimental
process or be created using internal or external information capturing certain specific features
of the data. Several newer questions arise with this paradigm shift. We will focus in this paper on the following two questions related to one-way classified hypotheses that seem relevant in light of what is available in the literature:
\begin{enumerate}
\item[Q1.] For multiple testing of hypotheses grouped into a one-way classified form, how to effectively capture the underlying group/classification structure, instead of simply pooling all the hypotheses into a single group, while controlling overall false discoveries across all individual hypotheses?
\item[Q2.] For hypotheses grouped into a one-way classified form in the context of post-selective inference where groups are selected before testing the hypotheses in the selected groups, how to effectively capture the underlying group/classification structure to control the expected average of false discovery proportions across the selected groups?
\end{enumerate}

Progress has been made toward answering Q1 (\cite{Hu:Zhao:Zhou:2010, Nandi:Sarkar:2018}) and Q2 (\cite{Benjamini:Bogomolov:2014}) for one-way classified hypotheses in the framework of Benjamini-Hochberg (BH, \cite{Benjamini:Hochberg:1995}) type false discovery rate (FDR) control. However, research addressing these questions based on local false discovery rate (Lfdr) (\cite{Efron:Tibshirani:Storey:Tusher:2001}) based methodologies are largely absent, excepting the work of \cite{Cai:Sun:2009} and a recent work of \cite{Liu:Sarkar:Zhao:2016} where a method has been proposed in its oracle form to answer the following question slightly different from Q1: When making important discoveries within each group is as important as making those discoveries across all hypotheses, how to maintain a control over falsely discovered hypotheses within each group while controlling it across all hypotheses?

The fact that an Lfdr based approach with its Bayesian/empirical Bayesian and decision theoretic foundation can yield powerful multiple testing method controlling false discoveries effectively capturing dependence as well as other structures of the data in single- and multiple-group settings has been demonstrated before  (\cite{Sun:etal:2006, Sun:Cai:2007, Efron:2008, Ferkingstad:2008, Sarkar:Zhou:Ghosh:2008, Sun:Cai:2009, Cai:Sun:2009, Zhao:2010, Hu:Zhao:Zhou:2010,  Zhao:Hwang:2012, Zablocki:etal:2014, Zhao:Sarkar:2015, Ignatiadis:etal:2016, kwonf}). However, the work of \cite{Liu:Sarkar:Zhao:2016} is fundamentally different from these works in that it takes into account the sparsity of signals both across groups and within each active group. Consequently, the effect of a group's importance in terms of its Lfdr can be explicitly factored into an Lfdr based importance measure of each hypothesis within that group.

In this article, we continue the line of research initiated in \cite{Liu:Sarkar:Zhao:2016} to answer Q1 in an Lfdr framework. More specifically, we borrow ideas from \cite{Liu:Sarkar:Zhao:2016} in developing methodological steps to present a unified group-adjusted multiple testing framework for one-way classified hypotheses that introduces a grouping effect into overall false discoveries across all individual hypotheses. This new insight clearly demonstrates how the group effect influences the existing testing method. We also provide an answer to Q2 from the Bayes/empirical Bayes perspective. To the best of our knowledge, this is the first result to study the FDR control for the selection from the Bayesian/empirical Bayesian perspective.

The paper is organized as follows. In Section \ref{sec:literature}, we present the current state of knowledge closely related to the present work, before presenting our Lfdr based methods answering Q1 and Q2 in Section \ref{sec:method}. In Section \ref{sec:numerical}, we introduce a Gibbs sampler using a hierarchical Bayes approach to estimate the model parameters and demonstrate the performances of the proposed methods based on these estimates and compare it to many existing methods. We further applied the proposed method answering Q1 to the Adequate Year Progress (AYP) data set [\cite{Liu:Sarkar:Zhao:2016}]. We conclude the paper with a few remarks in Section \ref{sec:concluding}.

\section{Literature Review}\label{sec:literature}
Suppose there are $N$ hypotheses that appear in $m$ non-overlapping families/groups, with $H_{ij}$ being the $j$th hypothesis in the $i$th group ($i=1, \ldots, m; j=1, \ldots, n_i$). We refer to such a layout of hypotheses as one-way classified hypotheses.

Let $X_{ij}$ be the test statistic/$p$-value associated with $H_{ij}$, and $\theta_{ij}$ be the binary parameter indicating the truth ($\theta_{ij} = 0$) or falsity ($\theta_{ij} = 1$) of $H_{ij}$. The Lfdr corresponding to $H_{ij}$, defined by the posterior probability $P(\theta_{ij}=0|\vX)$, where $\vX = \{X_{ij}, i=1, \ldots, m; j=1, \ldots, n_i\}$, is the basic ingredient for constructing Lfdr based approaches controlling false discoveries. The single-group case (or the case ignoring the group structure) has been considered extensively in the literature, notably by \cite{Sun:Cai:2007, Cai:Sun:2009} and \cite{He:Sarkar:Zhao:2015} who focused on constructing methods that are optimal, at least in their oracle forms. These oracle methods correspond to Bayes multiple decision rules under a single-group two-class mixture model (\cite{Efron:Tibshirani:Storey:Tusher:2001, Newton:etal:2004, Storey:2002, zhao:2022}) that minimize marginal false non-discovery rate (mFNR), a measure of false non-discoveries closely related to the notion of false non-discoveries (FNR) introduced in \cite{Genovese:Wasserman:2002} and \cite{Sarkar:2004}, subject to controlling marginal false discovery rate (mFDR), a measure of false discoveries closely related to the BH FDR and the positive FDR (pFDR) of \cite{Storey:2002}. Multiple-group versions of single-group Lfdr based approaches to multiple testing have started getting attention recently, among them the following seem more relevant to our work.

\cite{Cai:Sun:2009} extended their work from single to multiple groups (one-way classified hypotheses) under the following model: with $i$ taking the value $k$ with some prior probability $\pi_k$, $(X_{ij},\theta_{ij})$, $j=1, \ldots, n_i$, given $i=k$, are assumed to be iid random pairs with
\[
X_{kj}|\theta_{kj} \sim (1-\theta_{kj})f_{k0} + \theta_{kj} f_{k1},
\]
for some given densities $f_{k0}$ and $f_{k1}$, and $\theta_{kj} \sim Bernoulli(p_k)$.
They developed a method, which in its oracle form minimizes mFNR subject to controlling mFDR and is defined in terms of thresholding the conditional Lfdr's: CLfdr$^i(X_{ij}) = (1-p_{i})f_{i0}(X_{ij})/f_i(X_{ij})$, where $ f_i(X_{ij}) = (1- p_i)f_{i0}(X_{ij}) + p_i f_{i1}(X_{ij})$, for $j=1, \ldots, n_i$, $i=1, \ldots, m$, before proposing a data-driven version of the oracle method that asymptoticaly maintains the original oracle properties. It should be noted that the probability $p_k$ relates to the size of group $k$ and provides little information about the importance of the group itself. \cite{Ferkingstad:2008} brought the grouped hypotheses setting into testing a single family of hypotheses in an attempt to empower typical Lfdr based thresholding approach by leveraging an external covariate. They partitioned the $p$-values into a number of small bins (groups) according to ordered values of the covariate. With the underlying two-class mixture model defined separately for each bin depending on the corresponding value of the covariate, they defined the so called covariate-modulated Lfdr as the posterior probability of a null hypothesis given the value of the covariate for the corresponding bin. They estimated the covariate-modulated Lfdr in each bin using a Bayesian approach before proposing their thresholding method, not necessarily controlling an overall measure of false discoveries such as the mFDR or the posterior FDR. An extension of this work from single to multiple covariates can be seen in \cite{Zablocki:etal:2014, Scott:etal:2015}. Recently, \cite{Cai:Sun:Wang:2019} developed a novel grouped hypotheses testing framework for two-sample multiple testing of the differences between two highly sparsed mean vectors, having constructed the groups to extract sparsity information in the data by using a carefully constructed auxiliary covariate. They proposed an Lfdr based optimal multiple testing procedure controlling FDR as a powerful alternative to standard  procedures based on the sample mean differences.

A sudden upsurge of research has taken place recently in selective/post-selection inference due to its importance in light of the realization by the scientific community that the lack of reproducibility of a scientist's work is often caused by his/her failure to account for selection bias. When multiple hypotheses are simultaneously tested in a selective inference setting, it gives rise to a grouped hypotheses testing framework with the tested groups being selected from a given set of groups of hypotheses. \cite {Benjamini:Bogomolov:2014} introduced the notion of the expected average of false discovery proportion across the selected groups as an  appropriate error rate to control in this setting and proposed a method that controls it. Since then, a few papers have been written in this area (\cite{Peterson:etal:2016} and \cite{Heller:Chatterjee:Krieger:Shi:2017}); however, no research has been produced yet in Lfdr framework.

\begin{remark}\label{remark:1}
\rm When grouping of hypotheses occurs, an assumption can be made that the importance of a hypothesis is influenced by that of the group it belongs to. The Lfdr under the standard two-class mixture model, however, does not help in assessing a group's influence on {\it true} importance of its hypotheses. This has been the main motivation behind the work of \cite{Liu:Sarkar:Zhao:2016}, who considered a group-adjusted two-class mixture model that yields an explicit representation of each hypothesis-specific Lfdr as a composition of its group-adjusted form and the Lfdr for the group it is associated with. It allowed them to produce a method that provides a separate control over within-group false discoveries for truly important groups in addition to having a control of false discoveries across all individual hypotheses. This paper motivates us to proceed further with the development of newer Lfdr based multiple testing methods for one-way classified hypotheses as described in the following section.
\end{remark}

\section{Proposed Methods}\label{sec:method}

This section presents the two Lfdr based methods we propose in this article to answer Q1 and Q2. The development of the methods takes place under the model introduced in \cite{Liu:Sarkar:Zhao:2016}, which extends the standard two-class mixture model (Efron et al., 2001) from single to multiple groups. For completeness, we will recall this model here, with a different name, along with the formulas for different types of Lfdr associated with it before developing the methods.

\subsection{\it Model and Lfdr Formulas}
Following are the two basic ingredients in building the aforementioned model in \cite{Liu:Sarkar:Zhao:2016}: (i) expressing each $\theta_{ij}$ as $\theta_{ij} = \theta_{i\centerdot}\cdot \theta_{j|i}$, with $\theta_{i \centerdot} = I(\sum_{j=1}^{n_i} \theta_{ij} > 0)$ indicating the truth $(\theta_{i\centerdot}=0)$ or falsity $(\theta_{i\centerdot}=1)$ of $H_{i\centerdot} = \cap_{j=1}^{n_i} H_{ij}$, and $\theta_{j|i} =\theta_{ij}|\theta_{i\cdot}=1
$, which reflects the underlying group structure of the hypotheses; and (ii) use of the following distribution for the $\theta_{j|i}$'s given $\theta_{i \centerdot} = 1$, as an adjustment of the product Bernoulli distribution for the binary states of a set of hypotheses given that the group they belong to is important:
\begin{definition}\label{def:trunc:bernoulli}
\noindent {\rm [Truncated Product Bernoulli (TPBern ($\pi, n$)).]} {\it A set of $n$ binary variables $Z_1, \ldots, Z_n$ with the following joint probability distribution is said to have a TPBern ($\pi, n$) distribution:
\begin {eqnarray}
  P(Z_1=z_1, \ldots, Z_n=z_n) & =& \frac {1}{1-(1-\pi)^{n}} \prod_{i=1}^n \left \{ \pi^{z_i}(1-\pi)^{1-z_i} \right \} I\left (\sum_{i=1}^n z_i >0 \right ) \nonumber \\ & = & \frac {(1-\pi)^n}{1-(1-\pi)^{n}} \left ( \frac{\pi}{1-\pi}\right )^{\sum_{i=1}^n z_i} I\left (\sum_{i=1}^n z_i >0 \right ).  \nonumber
                                                                                                                                                                              \end {eqnarray}
                                                                                                                                                                            }
\end{definition}

The model is stated in the following:

\begin{definition}\label{def:model:1}
  \noindent {\rm [Group-Adjusted Two-Class Mixture Model for One-Way Classified Hypotheses (One-Way GAMM)]}. {\it 
For each $i = 1, \ldots, m$,
\label{def:model1}
\begin{eqnarray*}
  \left\{ \begin{array}{l}  X_{ij}~|~\theta_{i\centerdot},\theta_{j|i}  \stackrel {ind} \sim (1-\theta_{i \centerdot} \cdot \theta_{j|i})f_{0}(x_{ij})+\theta_{i \centerdot} \cdot \theta_{j|i}f_{1}(x_{ij}), \textrm{ for some densities $f_{0}$ and $f_{1}$,}\\
    P(\theta_{j|i} =0|~\theta_{i\centerdot} = 0) = 1, \textrm{ for $j=1, \ldots, n_i$};\\
    	(\theta_{1|i}, \ldots, \theta_{n_i|i})~|~\theta_{i\centerdot} =1  \sim TPBern(\pi_{2i}, n_i),\\
    \theta_{i\centerdot} \sim Bern(\pi_{1}).
  \end{array}\right.
\end{eqnarray*}}
The sets $(X_{ij}, \theta_{i\centerdot}, \theta_{j|i}, j=1, \ldots, n_i)$, for $i = 1, \ldots, m$, are mutually independent.
\end{definition}

Let
\[
  {\rm Lfdr}_{ij}(\pi_{1}, \pi_{2i}) \equiv {\rm Lfdr}_{ij}(\vx;\pi_{1}, \pi_{2i}) = Pr (\theta_{ij}=0~|~\vX=\vx ),
  \]
  \[
    {\rm Lfdr}_{i\centerdot} (\pi_{1}, \pi_{2i}) \equiv {\rm Lfdr}_{i\centerdot} (\vx; \pi_{1}, \pi_{2i}) = Pr (\theta_{i \centerdot}=0~|~\vX=\vx ),
    \]
    and
    \[
      {\rm Lfdr}_{j|i}(\pi_{1},\pi_{2i}) \equiv {\rm Lfdr}_{j|i}(\vx;\pi_{1},\pi_{2i})= Pr (\theta_{j|i}=0~|~\theta_{i \centerdot} = 1,\vX=\vx),
    \]
respectively, be the local FDRs under One-Way GAMM corresponding to $H_{ij}$, $H_{i \centerdot}$, and $H_{ij}$ conditional on $H_{i\centerdot}$ being  false. It is easy to see that
\begin{eqnarray}\label{eqn:decomposition}
  {\rm Lfdr}_{ij}(\pi_{1}, \pi_{2i})  & = & 1 - [1 - {\rm Lfdr}_{i\centerdot} (\pi_{1},\pi_{2i})] [1- {\rm Lfdr}_{j|i}(\pi_{1},\pi_{2i})],
\end{eqnarray}
showing how a hypothesis specific local FDR factors into the local FDR for the group and that for the hypothesis conditional on the group being important.

Let ${\rm Lfdr}^*_{ij}(\pi_{2i}) = [(1-\pi_{2i})f_0(x_{ij})]/m_i(x_{ij})$, with $m_i(x) = (1-\pi_{2i})f_0(x) + \pi_{2i} f_1(x)$, be the local FDR corresponding to $H_{ij}$ under the standard two-class mixture model. Then, as noted in \cite{Liu:Sarkar:Zhao:2016}, and also shown in Appendix of this article using alternative and simpler arguments, ${\rm Lfdr}_{j|i}(\pi_{1},\pi_{2i})$ and  ${\rm Lfdr}_{i\centerdot}(\pi_{1}; \pi_{2i})$ can be written explicitly in terms of the ${\rm Lfdr}^*_{ij}(\pi_{2i})$'s as follows:
\begin{eqnarray}\label{eqn:Lfdr:ji}
  {\rm Lfdr}_{j|i}(\pi_{1},\pi_{2i}) \equiv  {\rm Lfdr}_{j|i}(\pi_{2i}) = \frac {{\rm Lfdr}^*_{ij} (\pi_{2i}) - {\rm Lfdr}^*_{i\centerdot} (\pi_{2i})}{1 - {\rm Lfdr}^*_{i\centerdot} (\pi_{2i})},
\end {eqnarray}
and
\begin{eqnarray}\label{eqn:Lfdr:idot}
  {\rm Lfdr}_{i\centerdot}(\pi_{1}; \pi_{2i}) \equiv {\rm Lfdr}_{i\centerdot}(\lambda_{i}; \pi_{2i}) = \frac{{\rm Lfdr}^*_{i\centerdot}(\pi_{2i})}{{\rm Lfdr}^*_{i\centerdot}(\pi_{2i}) + \lambda_i [1- {\rm Lfdr}^*_{i \centerdot}(\pi_{2i})]},
\end{eqnarray} where ${\rm Lfdr}^*_{i\centerdot} (\pi_{2i}) = \prod_{j=1}^{n_i}{\rm Lfdr}^*_{ij} (\pi_{2i})$, and
\begin{equation}\label{eqn:lambda}
  \lambda_i = \frac{\pi_{1}}{1-\pi_{1}} \div \frac{1-(1-\pi_{2i})^{n_i}}{(1-\pi_{2i})^{n_i}}. \end{equation}

\begin{remark} \rm The parameter $\lambda_i$ plays a key role in One-Way GAMM.
As noted from (3.3), $$\frac{Pr (\theta_{i \centerdot} = 1 | \vX)}{Pr (\theta_{i \centerdot} = 0 | \vX)} = \lambda_i \;
\times \;  \frac{Pr^*(\theta_{i \centerdot} = 1 | \vX)}{Pr^* (\theta_{i \centerdot} = 0 | \vX)},
$$ with $Pr^*$ denoting the probability under the standard two-class mixture model. That is,
\begin{eqnarray} \left.
  \begin{array}{r}
   \mbox{Posterior odds of significance } \\
   \mbox{of group $i$ under one-way GAMM} \end{array}
\right. & = & \lambda_i \times \left.
  \begin{array}{l}
   \mbox{Posterior odds of significance} \\
   \mbox{of group $i$ under no group structure} \end{array}\right.
\nonumber
\end{eqnarray}
In other words, $\lambda_i$ can be seen to act as a `group effect' in One-Way GAMM.
When $\lambda_i=1$, ${\rm Lfdr}_{ij}(\pi_{1}, \pi_{2i})$ reduces to ${\rm Lfdr}^*_{ij}(\pi_{2i})$, and so One-Way GAMM with $\lambda_i=1$ for all $i$ represents the case of `no group effect'.
As $\lambda_i$ increases from $1$, the posterior odds of the $i$th group being important increases under one-way grouping, which is likely to make our proposed procedures developed under One-Way GAMM more powerful in the sense of making more discoveries than those developed under the standard two-class mixture model. \end{remark}

We are now ready to develop our methods under One-Way GAMM.

\subsection{\it {Methods Answering Q1 and Q2}}

Let $\delta_{ij}(\vX) \in \{0,1\}$ be the decision rule associated with $\theta_{ij}$. Similar to $\theta_{ij}$, we express $\delta_{ij}(\vX)$ as follows: $\delta_{ij} (\vX) =\delta_{i\centerdot}(\vX)\cdot \delta_{j|i}(\vX)$, with $\delta_{i \centerdot}(\vX) = I(\sum_{j=1}^{n_i}\delta_{ij}(\vX) > 0) \in \{0,1\}$ and $\delta_{j|i}(\vX) = \delta_{ij}(\vX)/\max(\delta_{i\centerdot} (\vX), 1)$.

This article focuses on developing $\delta_{ij}(\vX)$ for $i=1, \ldots, m$, $j=1, \ldots, n_i$, controlling the following error rates at a given level $\alpha$ under One-Way GAMM: (i) The posterior expected proportion of false discoveries across all hypotheses, referred to as the total posterior FDR (PFDR$_T$), defined below \begin{eqnarray}
{\rm PFDR}_T = E \left [  \frac {\sum_{i=1}^m\sum_{j=1}^{n_i} (1- \theta_{ij})\delta_{ij}(\vX)}{\max \left \{\sum_{i=1}^m\sum_{j=1}^{n_i} \delta_{ij}(\vX), 1 \right \}} \; \bigg | \vX \right ], \end{eqnarray} to answer Q1; and (ii) the posterior expected average false discovery proportion across selected groups, referred to as
the selective posterior FDR (PFDR$_{\mathcal{S}}$), defined below \begin{eqnarray}   {\rm PFDR}_{S}  =   E \left [ \frac {1}{|\mathcal{S}|} \sum_{i \in \mathcal{S}} \frac{ \sum_{j=1}^{n_i}(1-\theta_{ij})\delta_{ij}(\vX)}{\max \left \{\sum_{j=1}^{n_i} \delta_{ij}(\vX), 1 \right \}} \bigg | \vX \right ], \end {eqnarray} with $\mathcal{S}$ being the set of indices for the selected groups, to answer Q2. The expectations in (3.5) and (3.6) are taken with repect to the $\theta_{ij}$'s given $\vX$.

\begin {remark} \rm Some remarks regarding the methods to be developed in the next subsection are worth making at this point. Hiding the symbol $\vX$ in the $\delta$'s for notational convenience, we first note that $E \left [  \sum_{j=1}^{n_i} (1- \theta_{ij})\delta_{ij}| \vX \right ] = \sum_{j=1}^{n_i} {\rm Lfdr}_{ij}(\lambda_i, \pi_{2i})\delta_{ij}$ can be expressed as either
\begin {eqnarray} \sum_{j=1}^{n_i} \left [ 1- \frac{\lambda_i [1- {\rm Lfdr}^*_{ij}(\pi_{2i})]}{\lambda_i+ (1-\lambda_i){\rm Lfdr}^*_{i\centerdot}(\pi_{2i})} \right ] \delta_{ij}, \end {eqnarray} as noted by using (3.2) and (3.3) in (3.1), or as
\begin {eqnarray}
  R_i\delta_{i\centerdot}  \left \{ 1- \left [1 - {\rm Lfdr}_{i\centerdot}(\lambda_i, \pi_{2i}) \right ] \left [1- {\rm PFDR}_{W_i} \right ] \right \},
\end{eqnarray}
where ${\rm PFDR}_{W_i} =  \sum_{j=1}^{n_i}\delta_{j|i}{\rm Lfdr}_{j|i}(\pi_{2i})/\max \left (R_i, 1 \right )$, with $R_i = \sum_{j=1}^{n_i} \delta_{j|i}$, is the within-group posterior FDR for group $i$, as noted from (3.1). However, we'll be using (3.7) in the expression for PFDR$_T$ and determine $\delta_{ij}$'s that will provide a single-stage approach to controlling this error rate, as opposed to \cite{Liu:Sarkar:Zhao:2016} where they use (3.8) to develop a two-stage Lfdr based approach to controlling not only ${\rm PFDR}_{T}$ but also PFDR$_{W_i}$, for each $i$. While determining $\delta_{ij}$'s controlling PFDR$_{\mathcal{S}}$, which, as said in Introduction, will produce for the first time an Lfdr analog of \cite {Benjamini:Bogomolov:2014}, we will consider controlling it along with controlling a measure of false selection of groups. In other words, our approach to controlling PFDR$_{\mathcal{S}}$ will be a two-stage one relying on its expression in terms of (3.8). \end{remark}

The above discussions provide a {\bf G}roup {\bf A}djusted {\bf TE}esting (GATE) framework for one-way classified hypotheses allowing us to produce Lfdr based algorithms (in their oracle forms) answering Q1 and Q2. We commonly refer to these algorithms as One-Way GATE algorithms.

\subsubsection{Answering Q1}

Before we present an algorithm in its oracle form answering Q1, it is important to note the following theorem that drives the development of it with some optimality property.

\begin{theorem}\label{thm:1:optimal} Let
\begin{equation}\label{eqn:pfnr}
  PFNR_{T}(\delta) =E\left[ \frac{\sum_{i=1}^m\sum_{j=1}^{n_i}\theta_{ij}(1-\delta_{ij}(\vX))}{ \max\{ \sum_{i=1}^m\sum_{j=1}^{n_i}(1-\delta_{ij}(\vX)),1\} }\bigg |\vX\right]
\end{equation} 
denote the total posterior FNR (PFNR$_{T}$) of a decision rule 
\[
\delta(\vX) = \{\delta_{ij}(\vX), i=1, \ldots m, j=1, \ldots, n_i\}.
\]
The PFNR$_{T}(\delta)$ of the decision rule $\delta(\vX)$ with $\delta_{ij}(\vX)= I(Lfdr_{ij}(\lambda_i, \pi_{2i})\le c)$, for $c \in (0,1)$ satisfying $PFDR_{T}(\delta)=\alpha$, is always less than or equal to that of any other $\delta_{ij}'(\vX)$ with $PFDR_{T}(\delta') \le  \alpha$.
\end{theorem}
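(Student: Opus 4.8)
The plan is to condition throughout on $\vX$, so that every ${\rm Lfdr}_{ij}(\lambda_i,\pi_{2i})$ becomes a fixed number in $[0,1]$ and each $\delta_{ij}(\vX)$ a fixed element of $\{0,1\}$, the only remaining randomness being in the $\theta_{ij}$'s. Relabelling the $N=\sum_i n_i$ hypotheses by a single index $k$ and writing $L_k$ for the corresponding ${\rm Lfdr}_{ij}(\lambda_i,\pi_{2i})$ and $\delta_k$ for $\delta_{ij}(\vX)$, I would first use $E[1-\theta_{ij}\mid\vX]={\rm Lfdr}_{ij}(\lambda_i,\pi_{2i})$ and $E[\theta_{ij}\mid\vX]=1-{\rm Lfdr}_{ij}(\lambda_i,\pi_{2i})$ to turn (3.5) and (3.9) into the purely deterministic ratios
\[
PFDR_T(\delta)=\frac{\sum_{k}L_k\delta_k}{\max(R,1)},\qquad PFNR_T(\delta)=\frac{\sum_k(1-L_k)(1-\delta_k)}{\max(N-R,1)},
\]
where $R=\sum_k\delta_k$ is the number of rejections. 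The theorem then reduces to a finite deterministic problem: choose $\delta\in\{0,1\}^N$ minimising $PFNR_T$ subject to $PFDR_T\le\alpha$.

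Next I would establish two elementary facts after ordering the Lfdr's as $L_{(1)}\le\cdots\le L_{(N)}$. First, a rearrangement step: among all $\delta$ with a prescribed rejection count $R$, the rule rejecting exactly the $R$ smallest Lfdr's simultaneously minimises $\sum_k L_k\delta_k$ and maximises $\sum_k(1-L_k)\delta_k$, hence simultaneously minimises both the $PFDR_T$ numerator and the $PFNR_T$ numerator. The crucial observation is that both objectives push in the same direction---reject the smallest Lfdr's---which is precisely the thresholding form $\delta_k=I(L_k\le c)$ claimed. Second, a monotonicity step: writing $\delta^{(R)}$ for the rule rejecting the $R$ smallest Lfdr's, $PFDR_T(\delta^{(R)})=R^{-1}\sum_{k\le R}L_{(k)}$ is non-decreasing in $R$ (each newly added term is at least the running average), while $PFNR_T(\delta^{(R)})=(N-R)^{-1}\sum_{k>R}(1-L_{(k)})$ is non-increasing in $R$ (each term removed is the largest remaining).

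I would then assemble these as follows. Let $\delta$ be the theorem's rule with $R$ rejections and $PFDR_T(\delta)=\alpha$, and let $\delta'$ be any competitor with $R'$ rejections and $PFDR_T(\delta')\le\alpha$. Since $\delta$ thresholds at a genuine gap one has $L_{(R)}\ge\alpha$ and $L_{(R+1)}>c\ge\alpha$, so $PFDR_T(\delta^{(R')})>\alpha$ for every $R'>R$; combined with the rearrangement bound $PFDR_T(\delta')\ge PFDR_T(\delta^{(R')})$, this shows feasibility forces $R'\le R$. For such $R'$ the rearrangement bound on the $PFNR_T$ numerator gives $PFNR_T(\delta')\ge PFNR_T(\delta^{(R')})$, and the monotonicity of $PFNR_T(\delta^{(\cdot)})$ gives $PFNR_T(\delta^{(R')})\ge PFNR_T(\delta^{(R)})=PFNR_T(\delta)$, which is exactly the asserted inequality.

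The main obstacle is that both the objective and the constraint are ratios whose denominators ($N-R$ and $R$) themselves depend on the decision, so one cannot compare $\delta$ and $\delta'$ term by term. The device that removes this difficulty is to decouple the problem into ``optimise the selection for a fixed rejection count $R'$'' (handled by rearrangement, where the denominators are frozen) and ``optimise over $R'$'' (handled by the two monotonicities); the fortunate alignment that rejecting the smallest Lfdr's is optimal for the $PFDR_T$ numerator, the $PFNR_T$ numerator, and feasibility all at once is what makes a single threshold $c$ optimal. A minor point to dispatch carefully is the treatment of ties and of the boundary where $PFDR_T(\delta^{(\cdot)})$ is locally flat at level $\alpha$; the convention $\delta_k=I(L_k\le c)$, which rejects every hypothesis tied at the threshold, selects the largest admissible $R$ and makes the gap argument $L_{(R+1)}>\alpha$ go through.
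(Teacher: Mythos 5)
Your proof is correct, but it follows a genuinely different route from the paper's. The paper fixes an arbitrary competitor $\delta'$ and compares it to $\delta$ directly: it first establishes the sandwich $\alpha\sum_{ij}(\delta_{ij}-\delta'_{ij})\le\sum_{ij}(\delta_{ij}-\delta'_{ij})\,\mathrm{Lfdr}_{ij}\le c\sum_{ij}(\delta_{ij}-\delta'_{ij})$ (the left inequality from $PFDR_T(\delta')\le\alpha=PFDR_T(\delta)$, the right from $\sum_{ij}(\delta_{ij}-\delta'_{ij})(c-\mathrm{Lfdr}_{ij})\ge 0$, which is your ``alignment'' observation in linear-functional form), deduces $\sum_{ij}(1-\delta_{ij})\mathrm{Lfdr}_{ij}\le\sum_{ij}(1-\delta'_{ij})\mathrm{Lfdr}_{ij}$, and then passes to the monotone transform $PFNR_T/(1-PFNR_T)$, whose denominator is exactly that accepted-Lfdr sum, showing the difference of the transformed quantities is $\le 0$ term by term according to the sign of $c-\mathrm{Lfdr}_{ij}$. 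You instead decouple the problem by rejection count: rearrangement reduces any competitor with $R'$ rejections to the ``reject the $R'$ smallest Lfdr's'' rule, and two monotonicity facts in $R'$ (PFDR non-decreasing, PFNR non-increasing) finish the argument. The paper's version is shorter and needs no sorting, but the odds-ratio step is something of a trick and it quietly assumes the accepted-Lfdr sums are positive; your version makes the mechanism more transparent (feasibility caps $R'$ at $R$, and more rejections can only help PFNR), handles ties and the $\max(\cdot,1)$ boundary cases explicitly, and as a byproduct identifies the optimal rule as the feasible threshold rule with the largest rejection count. Both arguments are complete; just make sure you state explicitly that $c\ge L_{(R)}\ge\alpha$ (maximum at least the mean of the rejected Lfdr's) when you invoke $L_{(R+1)}>c\ge\alpha$ to rule out $R'>R$.
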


A proof of this theorem can be seen in Appendix.

\begin{algorithm}[H]
  \caption{One-Way GATE 1 (Oracle). \label{alg:gate:1}}
  \begin{algorithmic}[1]
    \vspace*{1mm}
    \item Calculate $$\rm Lfdr_{ij}(\lambda_{i}, \pi_{2i}) = 1- \frac{\lambda_i [1- {\rm Lfdr}^*_{ij}(\pi_{2i})]}{\lambda_i+ (1-\lambda_i){\rm Lfdr}^*_{i\centerdot}(\pi_{2i})}, $$ the hypothesis specific local FDR under One-Way GAMM, for each $i=1, \ldots, m, j= 1, \ldots, n_i$;
    \item Pool all these ${\rm Lfdr}_{ij}(\lambda_{i}, \pi_{2i})$'s together and sort them as ${\rm Lfdr}_{(1)} \le \cdots \le {\rm Lfdr}_{(N)}$;
    \item Reject the hypotheses associated with Lfdr$_{(k)}$, $k=1, \ldots, R$, where $R = \max \left \{l: \sum_{k=1}^{l} {\rm Lfdr}_{(k)} \le l \alpha \right \}$.
  \end{algorithmic}
\end{algorithm}

\begin{theorem}\label{thm:1}
  The oracle One-Way GATE 1 controls PFDR$_T$ at $\alpha$.
\end{theorem}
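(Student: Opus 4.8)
The plan is to exploit the fact that, conditionally on $\vX$, every decision $\delta_{ij}(\vX)$ is deterministic, so that the posterior expectation defining ${\rm PFDR}_T$ collapses to a ratio of known quantities. First I would observe that, since ${\rm Lfdr}_{ij}(\lambda_i,\pi_{2i}) = Pr(\theta_{ij}=0 \mid \vX)$ by definition, the conditional expectation of each $1-\theta_{ij}$ given $\vX$ is exactly ${\rm Lfdr}_{ij}(\lambda_i,\pi_{2i})$. Because the decisions $\delta_{ij}(\vX)$ and the realized rejection count $R = \sum_{i,j}\delta_{ij}(\vX)$ are all fixed once $\vX$ is given, I can pull each decision out of the conditional expectation and pull the (now constant) denominator out of the expectation, obtaining
\[
{\rm PFDR}_T
= \frac{\sum_{i=1}^m\sum_{j=1}^{n_i}\delta_{ij}(\vX)\,{\rm Lfdr}_{ij}(\lambda_i,\pi_{2i})}{\max\{R,1\}}.
\]

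Next I would specialize to the One-Way GATE 1 rule. By construction this rule rejects precisely the $R$ hypotheses carrying the $R$ smallest pooled values ${\rm Lfdr}_{(1)}\le\cdots\le{\rm Lfdr}_{(R)}$, so the numerator above equals $\sum_{k=1}^R {\rm Lfdr}_{(k)}$. On the event $R\ge 1$, the defining property $R = \max\{l : \sum_{k=1}^l {\rm Lfdr}_{(k)}\le l\alpha\}$ guarantees in particular that $\sum_{k=1}^R {\rm Lfdr}_{(k)}\le R\alpha$, whence
\[
{\rm PFDR}_T = \frac{\sum_{k=1}^R {\rm Lfdr}_{(k)}}{R} \le \frac{R\alpha}{R} = \alpha.
\]

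Finally I would dispose of the boundary case: if the set $\{l : \sum_{k=1}^l {\rm Lfdr}_{(k)}\le l\alpha\}$ is empty then $R=0$, no hypothesis is rejected, the numerator vanishes, and ${\rm PFDR}_T = 0 \le \alpha$ trivially. Combining the two cases yields the claim.

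There is no deep obstacle here: the entire content is the interchange in the first paragraph, so the single point demanding care is the bookkeeping around $\max\{\cdot,1\}$ in the denominator together with the $\vX$-measurability of the decision rule, which is precisely what legitimizes treating $R$ and each $\delta_{ij}(\vX)$ as constants inside the posterior expectation. I would state explicitly that this is a pointwise-in-$\vX$ identity, so that no outer expectation or concentration argument is required, in contrast with the more delicate optimality assertion of Theorem~\ref{thm:1:optimal}.
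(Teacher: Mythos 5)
Your proof is correct and is precisely the ``standard argument'' the paper invokes without writing out: condition on $\vX$ to replace $1-\theta_{ij}$ by ${\rm Lfdr}_{ij}(\lambda_i,\pi_{2i})$, note that the rejected set consists of the $R$ smallest pooled Lfdr values, and apply the defining inequality $\sum_{k=1}^{R}{\rm Lfdr}_{(k)}\le R\alpha$ (with the trivial $R=0$ case handled separately). Nothing is missing.
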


This theorem can be proved using standard arguments used for Lfdr based approaches to testing single group of hypotheses (see, e.g., \cite{Sun:Cai:2007, Sarkar:Zhou:2008}). It is important to note that ${\rm PFDR}_{T}$ may not equal a pre-specified value of $\alpha$, and so Algorithm 1 is generally sub-optimal in the sense that it is the closest to one that is optimal as stated in Theorem 1.

\begin{remark}\label{remark:3}
  \rm When $\lambda_i=1$ for all $i$, i.e., when the underlying grouping of hypotheses has no effect in the sense that a group's own chance of being important is no different from when it is formed by combining a set of independent hypotheses, One-Way GATE 1 reduces to the standard Lfdr based approach (like that in \cite{Sun:Cai:2007, He:Sarkar:Zhao:2015}; and in many others). As we will see from simulation studies in Section 4, with $\lambda_i$ increasing (or decreasing) from $1$, i.e., when a group's chance of being important gets larger (or smaller) than what it is if the group consists of independent hypotheses, the standard Lfdr based approach becomes less powerful (or fails to control the error rate).

\end{remark}

\subsubsection {Answering Q2}

To introduce the algorithm, consider the following notations: for $0< \alpha' < 1$, $R_{i}(\alpha') = \max \{1 \le k \le n_i: \sum_{j=1}^k{\rm Lfdr}_{(j)|i}(\pi_{2i}) \le k\alpha' \}$, with ${\rm Lfdr}_{(j)|i}(\pi_{2i})$, $j=1, \ldots, n_i$, being the sorted values of the Lfdr$_{j|i}(\pi_{2i})$'s in group $i$. 
When assuming all the Lfdr scores are available, we derive an algorithm answering Q2 alternative to the hierarchical testing method of \cite{Peterson:etal:2016}. It allows a control over \[ {\rm PFDR}_{B} = \frac{\sum_{i=1}^{m}\delta_{i \centerdot}{\rm Lfdr}_{i \centerdot}(\lambda_i, \pi_{2i})}{\max \left (\sum_{i=1}^{m} \delta_{i \centerdot}, 1 \right )}, \] an Lfdr analog of the aforementioned between-group FDR for the selected groups, while controlling PFDR$_{S}$.

\begin{algorithm}[H]
  \caption{One-Way GATE 2 (Oracle). \label{alg:gate:4:alt}}
  \begin{algorithmic}[1]
    \vspace*{1mm}
  \item Given an $\eta \in (0, \alpha)$, select the largest subset of group indices $\mathcal{S}$ such that $\frac{1}{|\mathcal{S}|} \sum_{i \in \mathcal{S}}{\rm Lfdr}_{i \centerdot}(\lambda_i, \pi_{2i}) \le \eta$;
  \item For each $i \in \mathcal{S}$, and any given $\alpha' \le \alpha$, find $R_i(\alpha')$ to calculate
  \begin {eqnarray}\label{alg:2} {\rm PFDR}_{S}(\alpha') = 1 - \frac{1}{|\mathcal{S}|} \sum_{i \in \mathcal{S}}\left(1 - {\rm Lfdr}_{i\centerdot}(\lambda_i, \pi_{2i}) \right) \left (1- \frac{1}{R_i(\alpha')} \sum_{j=1}^{R_i(\alpha')}{\rm Lfdr}_{(j)|i}(\pi_{2i}) \right ); \end {eqnarray}
  \item Find $\alpha^{*}(\mathcal{S}) = sup \{\alpha': {\rm PFDR}_S (\alpha') \le \alpha\}$;
  \item Reject the hypotheses associated with ${\rm PFDR}_S(\alpha^{*}(\mathcal{S}))$.
  \end{algorithmic}
\end{algorithm}

\begin{theorem}\label{thm:3}
{\it The oracle One-Way GATE 2 controls PFDR$_S$ at $\alpha$ subject to a control over PFDR$_B$ at $\eta < \alpha$.}
\end{theorem}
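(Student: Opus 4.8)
The plan is to verify the two guarantees separately: the control of ${\rm PFDR}_B$ at $\eta$, which is built directly into the selection step, and the control of ${\rm PFDR}_S$ at $\alpha$, which is the substantive part. The control of ${\rm PFDR}_B$ is immediate. Once Step~1 fixes $\mathcal{S}$, the group-level decisions are $\delta_{i\centerdot}=I(i\in\mathcal{S})$ (each selected group rejects at least one hypothesis since $R_i(\alpha')\ge 1$), so ${\rm PFDR}_B=\big(\sum_{i\in\mathcal{S}}{\rm Lfdr}_{i\centerdot}(\lambda_i,\pi_{2i})\big)/|\mathcal{S}|$, which is $\le\eta<\alpha$ by the defining property of $\mathcal{S}$ in Step~1. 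The remaining and harder task is to show that the $\alpha^*(\mathcal{S})$ chosen in Step~3 produces a rule with ${\rm PFDR}_S\le\alpha$.

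The crux of the ${\rm PFDR}_S$ part is to prove that the quantity ${\rm PFDR}_S(\alpha')$ computed in \eqref{alg:2} is \emph{exactly} the selective posterior FDR of the rule that rejects, within each selected group, the $R_i(\alpha')$ hypotheses with the smallest conditional local FDRs. I would condition on $\vX$ so that $\mathcal{S}$ and all $\delta_{ij}$ are deterministic, and write ${\rm PFDR}_S=\frac{1}{|\mathcal{S}|}\sum_{i\in\mathcal{S}}E[{\rm FDP}_i\mid\vX]$, where ${\rm FDP}_i$ is the within-group false discovery proportion. Applying the factorization \eqref{eqn:decomposition} of ${\rm Lfdr}_{ij}$ into ${\rm Lfdr}_{i\centerdot}$ and ${\rm Lfdr}_{j|i}$, equivalently the within-group identity (3.8), the expected within-group FDP of this rule equals $1-[1-{\rm Lfdr}_{i\centerdot}(\lambda_i,\pi_{2i})]\big[1-\frac{1}{R_i(\alpha')}\sum_{j=1}^{R_i(\alpha')}{\rm Lfdr}_{(j)|i}(\pi_{2i})\big]$. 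Averaging over $i\in\mathcal{S}$ reproduces the right-hand side of \eqref{alg:2}, so the computable ${\rm PFDR}_S(\alpha')$ coincides with the true selective posterior FDR, and any $\alpha'$ with ${\rm PFDR}_S(\alpha')\le\alpha$ gives a valid rule.

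It then remains to justify that the threshold selected in Step~3 respects the bound. The key structural fact I would establish is monotonicity. Since the sorted partial averages $t^{(i)}_k=\frac{1}{k}\sum_{j=1}^{k}{\rm Lfdr}_{(j)|i}(\pi_{2i})$ are nondecreasing in $k$, each $R_i(\alpha')$, and hence each within-group mean ${\rm PFDR}_{W_i}$, is nondecreasing in $\alpha'$; therefore ${\rm PFDR}_S(\alpha')=1-\frac{1}{|\mathcal{S}|}\sum_{i\in\mathcal{S}}[1-{\rm Lfdr}_{i\centerdot}][1-{\rm PFDR}_{W_i}]$ is nondecreasing in $\alpha'$. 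This makes $\{\alpha':{\rm PFDR}_S(\alpha')\le\alpha\}$ a down-set whose supremum defines the most liberal admissible within-group cutoffs, so the search in Step~3 is well posed and maximizes the number of rejections subject to the constraint.

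The main obstacle is that ${\rm PFDR}_S(\alpha')$ is only a piecewise-constant, right-continuous step function of $\alpha'$, with jumps at the finitely many thresholds $t^{(i)}_k$; one must therefore argue carefully that the supremum in Step~3 is attained by a configuration with ${\rm PFDR}_S\le\alpha$ rather than by the post-jump value that first exceeds $\alpha$. I would resolve this using the discreteness of the $R_i(\alpha')$'s: because ${\rm PFDR}_S(\cdot)$ takes only finitely many values, the supremum is identified with the largest admissible step, i.e. the coordinatewise-largest rejection configuration $\{R_i(\alpha^*(\mathcal{S}))\}_{i\in\mathcal{S}}$ whose averaged formula \eqref{alg:2} does not exceed $\alpha$, and evaluating ${\rm PFDR}_S$ there yields the claimed bound. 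Combining this with the ${\rm PFDR}_B$ calculation completes the proof.
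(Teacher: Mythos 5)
Your proof is correct and follows essentially the same route as the paper: the paper's entire argument is the observation that the quantity computed in \eqref{alg:2} \emph{is} the selective posterior FDR of the resulting rule (so that Step~3 enforces ${\rm PFDR}_S\le\alpha$ by construction), with ${\rm PFDR}_B\le\eta$ immediate from Step~1. The extra material you supply on monotonicity of ${\rm PFDR}_S(\alpha')$ and attainment of the supremum is a sensible elaboration of details the paper leaves implicit, not a different approach.
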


This theorem can be proved by noting that the left-hand side of (\ref{alg:2}) is the PFDR$_{S}$ of the procedure produced by Algorithm 2.

Let \[ {\rm PFNR}_{B} = E \left[ \frac{\sum_{i=1}^{m}\theta_{i \centerdot}(1-\delta_{i \centerdot}(\vX))}{ \max\{\sum_{i=1}^{m}(1-\delta_{i \centerdot}(\vX)),1\} }\bigg |\vX\right], \] and \[ {\rm PFNR}_{W_i}=E\left[ \frac{\sum_{j=1}^{n_i}\theta_{j|i}(1-\delta_{j|i}(\vX))}{ \max\{\sum_{j=1}^{n_i}(1-\delta_{j|i}(\vX)),1\} }\bigg |\vX\right] \]
denote between-group posterior FNR and within-group posterior FNR for group $i$, respectively, for a decision rule of the form $\delta_{ij}(\vX)= \delta_{i \centerdot}(\vX) \delta_{j|i}(\vX)$, with $\delta_{i \centerdot}(\vX) = I(Lfdr_{i \centerdot} (\lambda_i, \pi_{2i})\le c)$ and $\delta_{j|i}(\vX) = I(Lfdr_{j|i}(\pi_{2i}) \le c')$, for some $0<c,c' <1$, $i=1, \ldots, m$.

\begin {remark} \rm From Theorem \ref{thm:1:optimal}, we have the following optimality result regarding One-Way GATE 2: Given any $0<\eta<\alpha <1$,

\noindent (i) the PFNR$_{B}$ of the decision rule of the form $\delta_{i \centerdot}(\vX)= I(Lfdr_{i \centerdot }(\lambda_i, \pi_{2i})\le c)$ with $0<c<1$ satisfying $PFDR_{B}=\eta$ is less than or equal to that of any other $\delta_{i \centerdot}'(\vX)$ with $PFDR_{B} \le  \eta$.

\noindent (ii) Given $\delta_{i \centerdot}(\vX)$, $i=1, \ldots, m$, with $PFDR_{B} \le  \eta$, there exists an $\alpha'(\eta) \le \alpha$, subject to  ${\rm PFDR}_{S} = \alpha$, such that, for each $i$, ${\rm PFNR}_{W_i}$ of the decision rule of the form $\delta_{j|i}(\vX)= I(Lfdr_{j|i}(\pi_{2i})\le c')$ with $0<c'<1$ satisfying $PFDR_{W_i}= \alpha(\eta)$ is less than or equal to that of any other decision rule in that group for which  $PFDR_{W_i} \le \alpha'(\eta)$. \end{remark}

\begin{remark}\rm
  It is important to note that One-Way GATE  2 without Step 1 can be used in situations where the focus is on controlling PFDR$_S$ given a selection rule (or $\mathcal{S}$).
\end{remark}

\section{Numerical Studies}\label{sec:numerical} In this section we propose data-driven versions of One-Way GATE 1 and One-Way GATE 2, assuming that $\pi_{2i} = \pi_2$ for all $i$, and present results of numerical studies we conducted to examine their performances against other relevant methods.

\subsection{Data-Driven Methods}\label{sec:esti}
The development of our methods are based on One-Way GAMM model where $\pi_1, \pi_{2}$ and $f_1(x)$ are assumed to be known. In practice when these parameters are unknown, one can develop data-driven versions of these methods using various estimation techniques, such as the EM algorithm (\cite{Liu:Sarkar:Zhao:2016}) and others (\cite{Cai:Sun:Wang:2019}). In this paper, however, we take a Bayesian approach by considering a hierarchical structure for modeling these unknown parameters and deriving a Gibbs sampler to estimate the parameters using the corresponding posterior distributions. To make it more clear, we restate the One-Way GAMM by adding to it this hierarchical structure for the unknown parameters in the following:
\begin{eqnarray*}
  \left\{ \begin{array}{l}  X_{ij}~|~\theta_{i\centerdot},\theta_{j|i}  \stackrel {ind} \sim (1-\theta_{i \centerdot} \cdot \theta_{j|i})f_{0}(x_{ij})+\theta_{i \centerdot} \cdot \theta_{j|i}f_{1}(x_{ij}), \textrm{ for some densities $f_{0}$ and $f_{1}$,}\\
    P(\theta_{j|i} =0|~\theta_{i\centerdot} = 0) = 1, \textrm{for each $j=1, \ldots, n_i$};\\
     (\theta_{1|i}, \ldots, \theta_{n_i|i})~|~\theta_{i\centerdot} =1  \sim TPBern(\pi_{2}, n_i),\\
            \theta_{i\centerdot} \sim Bern(\pi_{1});
            \\
            f_1(x_{ij}) \sim \sum_{k=1}^KZ_{ij}^k N(\mu_k, \sigma_k^2), where \; \vZ_{ij}\sim Multinomial(1, \boldsymbol{\eta}_K),\\ \boldsymbol{\eta}_K\sim Dirichlet(d_1,d_2,\cdots,d_K),\\
            \mu_k\sim N(0, \sigma_{\mu}^2), \sigma_k^2 \sim InGamma(r,\nu), \pi_1\sim Beta(\alpha_1,\beta_1), \pi_2\sim Beta(\alpha_2,\beta_2).
  \end{array}\right.
\end{eqnarray*}

The following is the Gibbs sampler given by the above hierarchical structure of the model parameters:

\begin{algorithm}[H]
  \caption{Gibbs Sampler. \label{alg:gibbs}}
  \begin{algorithmic}[1]
    \vspace*{1mm}
  \item $P(\theta_{i\cdot}=0|\vX, \textrm{the rest})= Lfdr_{i\cdot}(\pi_1,\pi_2)$, provided in (\ref{eqn:Lfdr:idot});\
  \item Given $\theta_{i\cdot}=1$, generate $\theta_{j|i}\ind Bernoulli(1-Lfdr_{j|i}(\pi_1,\pi_2))$. Keep $\theta_{j|i}$ if $\sum_{j}\theta_{j|i}>0$; otherwise set $\theta_{k|i}=1$ where $Lfdr_{k|i}(\pi_1,\pi_2) = min_j Lfdr_{j|i}(\pi_1,\pi_2)$;
  \item \[
      \pi_1|\vX, \textrm{the rest} \sim Beta(\alpha_1+\sum_i\theta_{i\cdot}, \beta_1+G-\sum_i\theta_{i\cdot}),\]
    and
    \[
      \pi_2|\vX,\textrm{the rest}\sim Beta( \alpha_2+\sum_i\theta_{i\cdot}\sum_j\theta_{j|i}, \beta_2+\sum_i\theta_{i\cdot}\sum_j(1-\theta_{j|i}));
    \]
  \item $\vZ_{ij}|\vX, \textrm{the rest}\sim Multinomial(1, \mathbf{p})$ where
    \[
      \mathbf{p}=\frac{1}{\sum_k \eta_k\frac{1}{\sigma}\phi(\frac{x_{ij}-\mu_k}{\sigma})}\left(\eta_1\frac{1}{\sigma}\phi(\frac{x_{ij}-\mu_1}{\sigma}), \eta_2\frac{1}{\sigma}\phi(\frac{x_{ij}-\mu_2}{\sigma}), \cdots, \eta_K\frac{1}{\sigma}\phi(\frac{x_{ij}-\mu_K}{\sigma})\right);
    \]
    \item $$\mu_k|\vX, \textrm{the rest} \sim N\left( \frac{\sum_{ij}\theta_{i\cdot}\theta_{j|i}z_{ij}^kx_{ij}}{ \sum_{ij}\theta_{i\cdot}\theta_{j|i}z_{ij}^k}, \frac{\sigma_k^2}{\sum_{ij}\theta_{i\cdot}\theta_{j|i}z_{ij}^k}\right),$$
    and
    \[
    \sigma_k^2\sim InGamma\left( r+\frac{\sum_i\theta_i\sum_j\theta_{j|i}(x_{ij}-\mu_k)^2z_{ij}^k}{2}, \left( \frac{1}{\nu}+\sum_i\theta_i\sum_j\theta_{j|i}z_{ij}^k\right)^{-1}\right).
    \]
    \item $(\eta_1,\cdots,\eta_k)|\vX, \textrm{the rest} \sim Dirichlet\left( d_1+\sum_{ij}\theta_{i\cdot}\theta_{j|i}z_{ij}^1, d_2+\sum_{ij}\theta_{i\cdot}\theta_{j|i}z_{ij}^2,\cdots, d_K+\sum_{ij}\theta_{i\cdot}\theta_{j|i}z_{ij}^K\right)$.
  \end{algorithmic}
\end{algorithm}

In Step 2 we have used an approximation, which is necessitated by the computational difficulty, especially for large $n_i$, in finding the exact posterior distribution of the $\theta_{j|i}$'s as it requires enumeration of all the elements in $\{0,1\}^{n_i}$. 

The parameters $\pi_1, \pi_{2}, \eta_k$'s, $\mu_k$'s and $\sigma_k^2$'s are estimated by the medians based on samples drawn from the corresponding posterior distributions. These estimates are then used in place of the corresponding parameters appearing in the formulas for $Lfdr_{ij}(\pi_1,\pi_{2i})$ and $Lfdr_{j|i}(\pi_1,\pi_{2i})$ (with $\pi_{2i}=\pi_2$) in Algorithms \ref{alg:gate:1} and \ref{alg:gate:4:alt}, yielding our proposed data-driven GATE methods. These data-driven methods will be referred to as simply One-Way GATE 1 or One-Way GATE 2 in what follows.

\subsection{One-Way GATE 1}

We considered various simulation settings involving 2,000/5,000 hypotheses grouped into equal-sized groups to investigate how One-Way GATE 1 performs against its relevant competitors. These competitors are the Naive Method, SC Method (\cite{Sun:Cai:2009}) and GBH (\cite{Hu:Zhao:Zhou:2010}) Method, operating as follows under our model:

\noindent {\it Naive Method}: The single-group Lfdr based method of \cite{Sun:Cai:2007} is applied to the $mn$ hypotheses pooled together into a single group under a two-class mixture model $X_{ij}\sim (1-p)f_0(x_{ij}) + p f_1(x_{ij})$, with $p=p_{ij}= \pi_1 \pi_{2}/[1-(1-\pi_{2})^{n}]$, where the parameter $\pi_2$ and the density function $f_1(x)$ are estimated using Algorithm \ref{alg:gibbs}.

\noindent {\it SC Method}: The single-group Lfdr based method of \cite{Sun:Cai:2007} is applied to the $mn$ hypotheses pooled together into a single group assuming a two-class mixture model $X_{ij}\sim (1-\pi_{2})f_0(x_{ij}) + \pi_{2} f_1(x_{ij})$, where the parameters  $\pi_2$ and the density function $f_1(x)$ are estimated using Algorithm \ref{alg:gibbs}. 

\noindent {\it GBH Method}: $X_{ij}$ is converted to its $p$-value $P_{ij}$. We considered both the LSL-GBH (Least-Slope Group BH) method and the TST-GBH (Two-Stage Group BH) method proposed in \cite{Hu:Zhao:Zhou:2010}.

The simulations involved generated triplets of observations $(X_{ij}, \theta_{i \centerdot}, \theta_{j|i})$ independently, $i=1, \ldots, m (= 100$ or $1,000$); $j=1, \ldots, n (=50$ or $5$), with (i) $\theta_{j|i}$'s jointly following TPBern($\pi_{2}, n)$; (ii) $\theta_{i\cdot}\sim Bern(\pi_{1})$ 
; and (iii) $X_{ij}|\theta_{ij} \sim N(0,1)$ if $\theta_{ij}=0$, and $ \sim \sum_{k=1}^K c_k N(\mu_k, \sigma_k^2)$ if $\theta_{ij}=1$.  Different cases for choosing $(\pi_1,\pi_2)$ were considered. When $m=1,000, n=5$, we fixed $\pi_2$ and computed $\pi_1$ according to Equation (\ref{eqn:lambda}), with $\lambda$ being allowed to vary between 0.02 and 1.65. We wanted to observe the pattern of the aforementioned methods when $\lambda$ increases. When $m=100, n=50$, we fixed $\pi_2$ and let $\pi_1$ vary between 0.1 and 0.9 with increment of 0.1. The value of $\pi_2$ was chosen among 0.1, 0.3, 0.5, and 0.7. We considered $K=1$ and $2$ and 
\begin{eqnarray*}
  (\mathbf{c}, \mathbf{\mu},\mathbf{\sigma}^2) =\left\{\begin{array}{cc} (1, 2, 1) & \mbox{when} \quad K=1, \\ ((0.4,0.6), (-2, 2.5),(1.2, 0.8)) & \mbox{when} \quad K=2.
  \end{array}
  \right.
  \end{eqnarray*}

The One-Way GATE 1 in its oracle form, SC, TST-GBH, LSL-GBH, the One-Way GATE 1 using EM algorithm and the One-Way GATE 1 using Gibbs sampler were applied to the data for testing $\theta_{ij}=0$ against $\theta_{ij}=1$ simultaneously for all $(i,j)$ at $\alpha=0.05$. When running the Gibbs sampler, the hyper-parameters are chosen as: $\alpha_1=\alpha_2=\beta_1=\beta_2=1$, $d_1=\cdots=d_K=1$, $\sigma_\mu^2=1000$, $r=0.0001$ and $\nu=1000$. The simulated values of Bayes FDR (defined as the expectation of $\rm{PFDR}_{T}$ over $\vX$) and expected number of rejections were obtained for each of them based on 1000 replications.

\begin{figure}[H]
  \centering
  \includegraphics[height=40mm,width=40mm]{./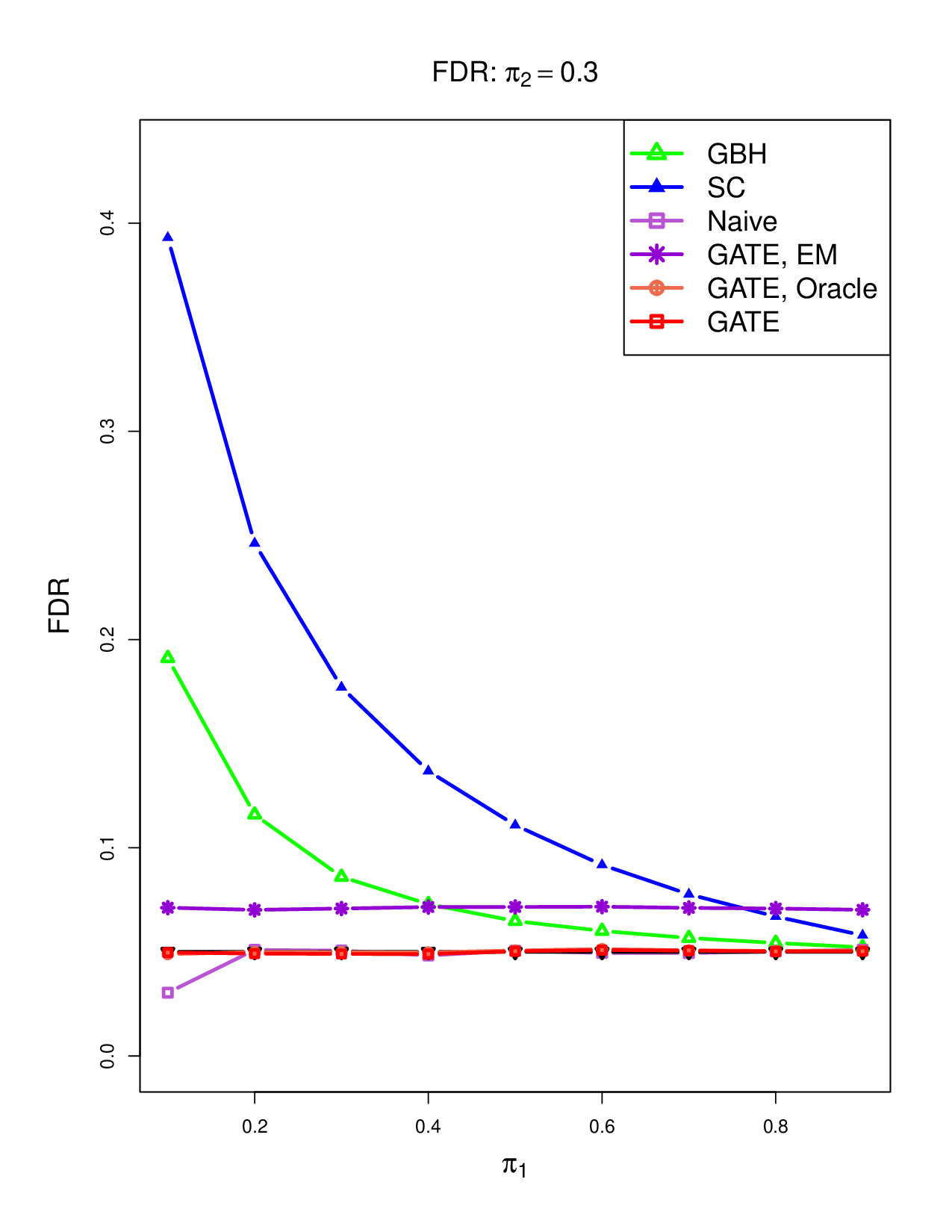}
  \includegraphics[height=40mm,width=40mm]{./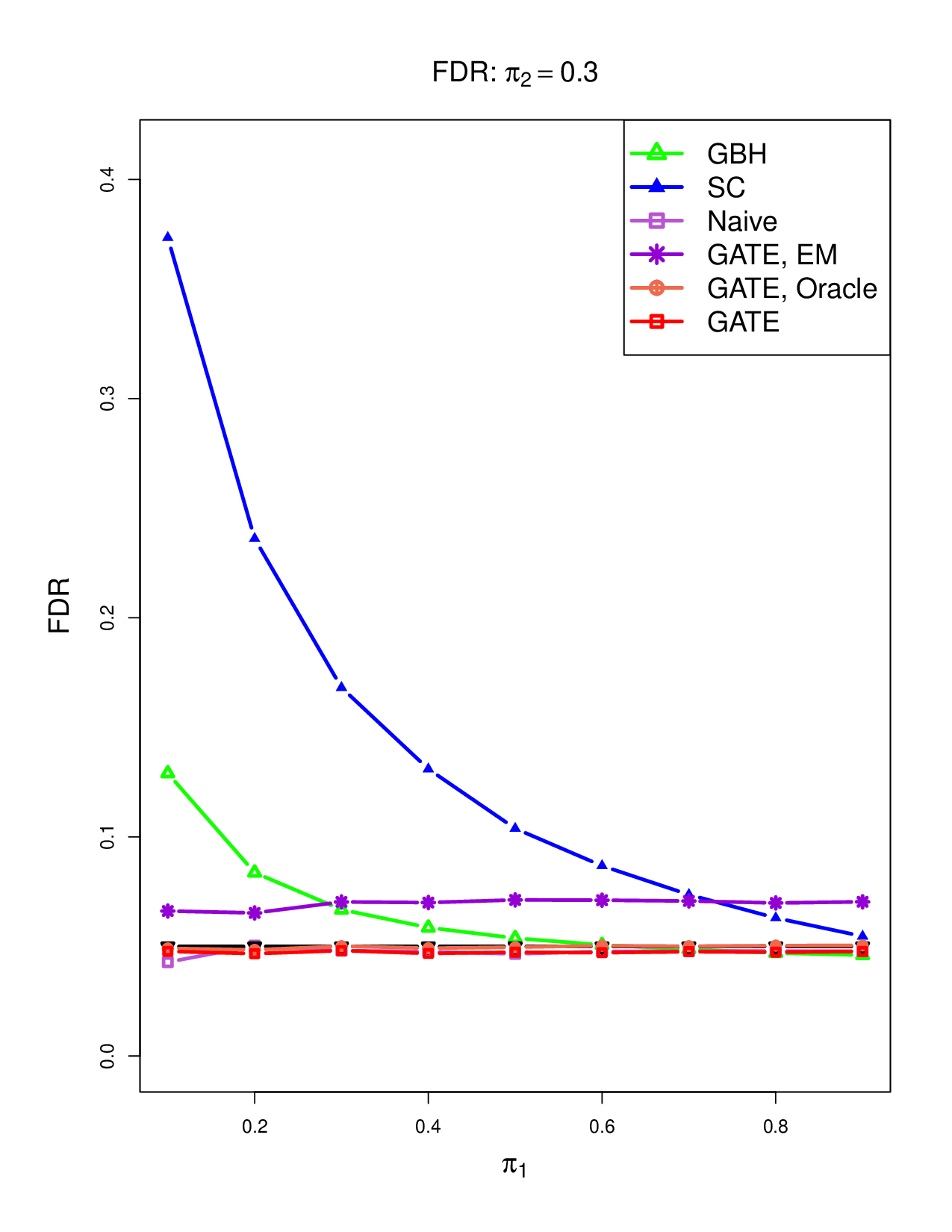}
  \caption{Performance in terms of FDR of One-Way GATE 1 across different values of $\pi_1$ when $m=100, n=50$. The  left and right panels correspond to $K=1$ and $2$, respectively.
  }\label{fig:gate1:fdr:1}
\end{figure}

\begin{figure}[H]
	\centering
	\includegraphics[height=40mm,width=40mm]{./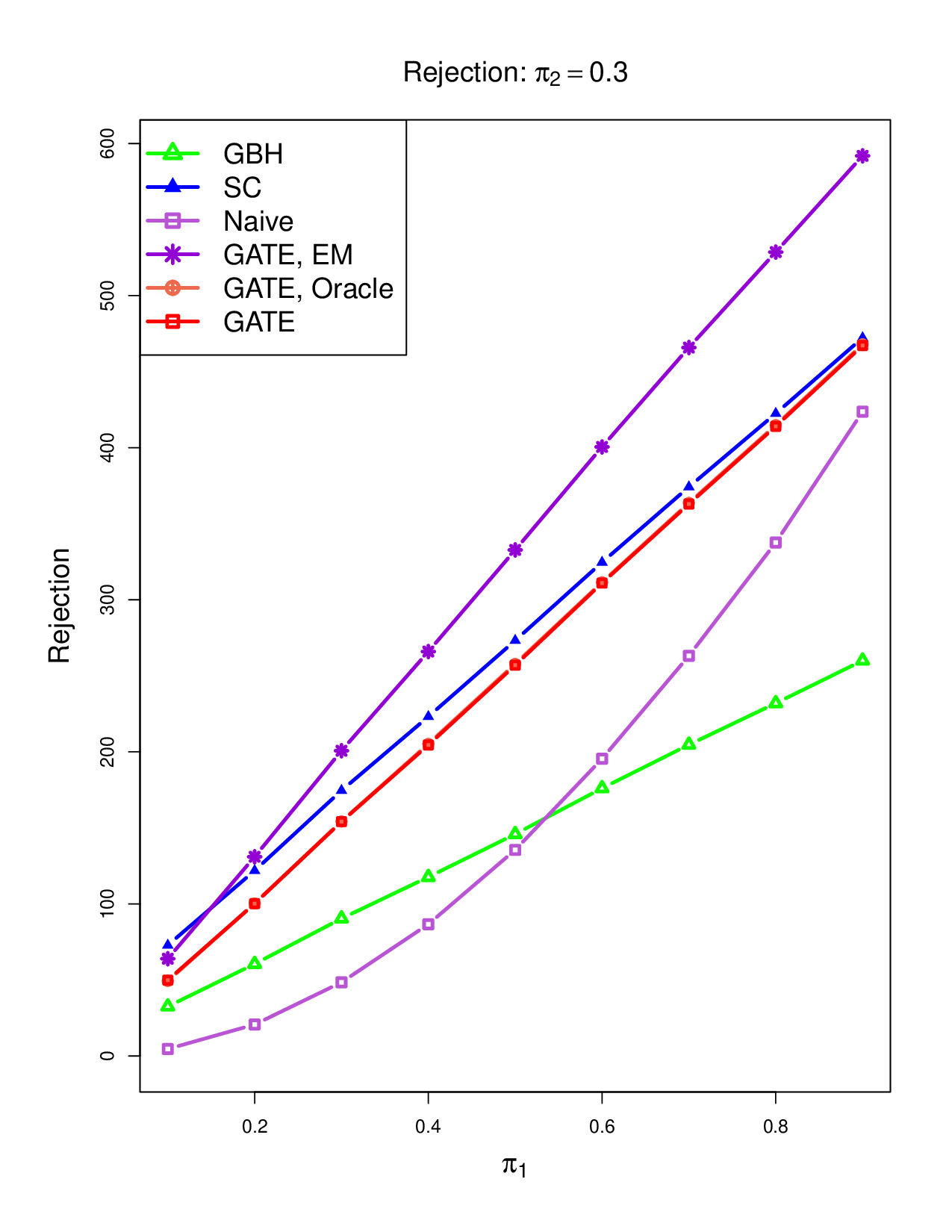}
	\includegraphics[height=40mm,width=40mm]{./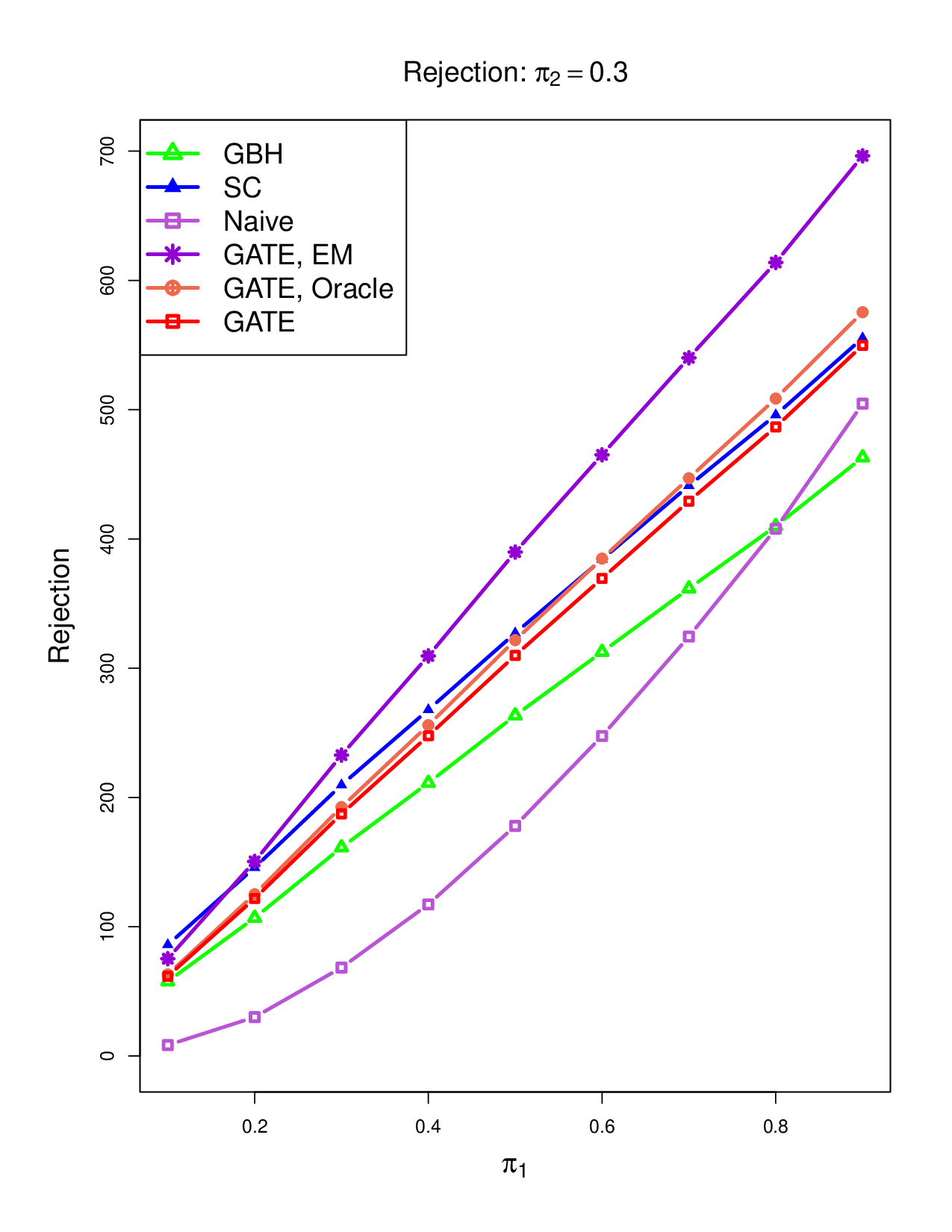}
	\caption{Performance in terms of expected number of rejections of One-Way GATE 1 across different values of $\pi_1$ when $m=100, n=50$. The left and right panels correspond to $K=1$ and $2$, respectively.
	}\label{fig:gate1:rej:1}
\end{figure}

Figures \ref{fig:gate1:fdr:1} and Figure \ref{fig:gate1:rej:1} display how these methods compare across different values of $\pi_1$. Our proposed One-Way GATE 1 (labeled GATE) is clearly seen to control the false discovery rate at the desired level 0.05. The SC, TST-GBH and One-Way GATE 1 using EM algorithm (labeled GATE-EM) fail to control the FDR at the desired level. The Naive Method and GBH-LSL control the FDR at the desired level; however, they are much less powerful than the One-Way GATE 1 in terms of the total number of rejections. The One-Way GATE 1 performs similar to its oracle version.

The results for the case of $m=1,000$ and $n=5$ are displayed in Figures \ref{fig:gate1:fdr:2} and \ref{fig:gate1:rej:2}. We plot the simulated values of FDR and expected number of rejections for all the methods   against $\lambda = \frac{\pi_1}{1-\pi_1} \frac{(1-\pi_2)^n}{1- (1- \pi_2)^n}$. As noted from these figures, the SC method fails to control the FDR when $\pi_1$ is small, or namely $\lambda$ is small. This happens because it uses a larger value of $\pi_{1}$ when $\lambda$ is small, inflating the FDR by an amount related to the value of $\lambda$. When $\lambda$ is larger, it uses a smaller value of $\pi_{1}$, resulting in a method which is conservative. The FDR level of the GBH-TST is inflated and GBH-LSL is too conservative. Interestingly, the oracle and two data-adaptive versions of One-Way GATE 1 work similarly under this setting.

\begin{figure}[H]
  \centering
  \includegraphics[height=40mm,width=40mm]{./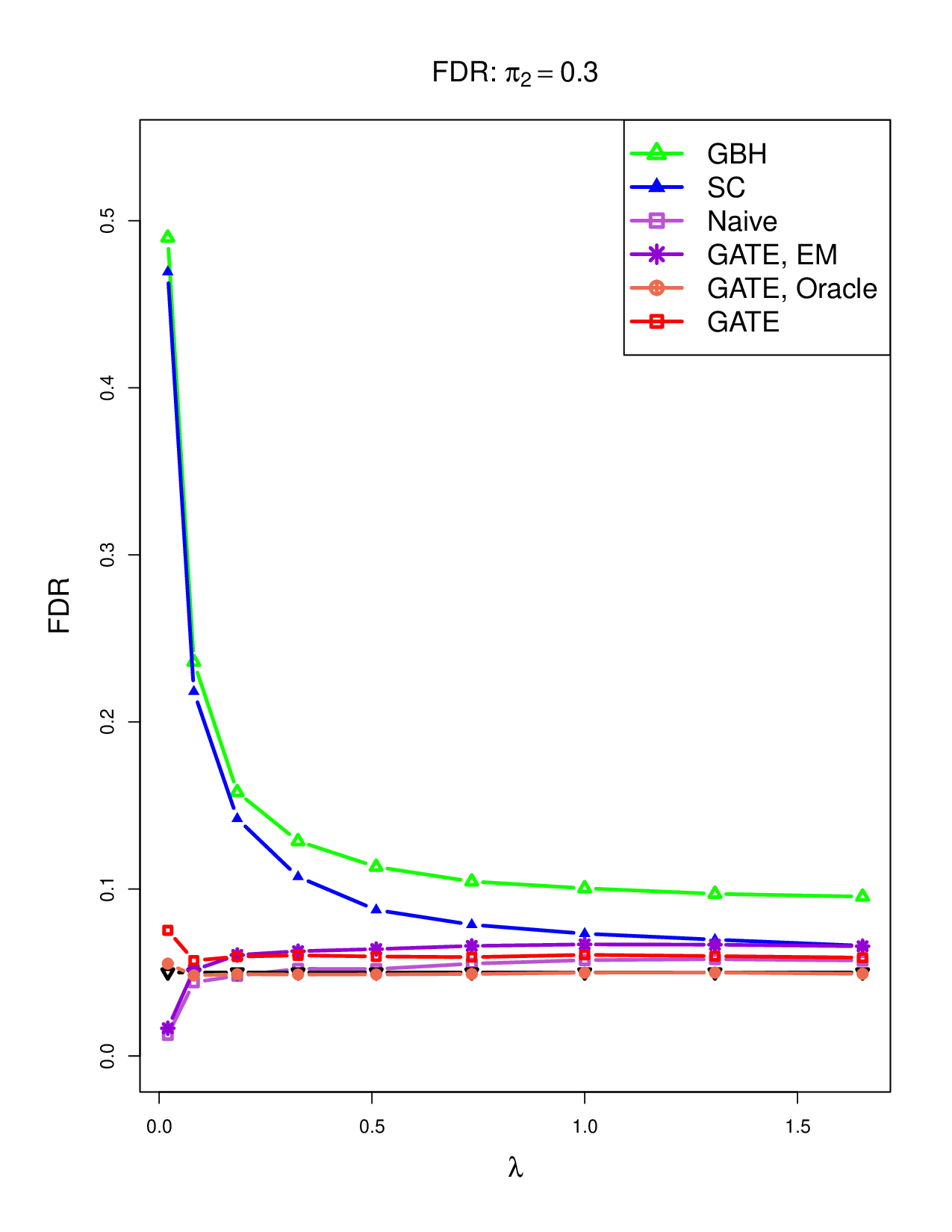}
  \includegraphics[height=40mm,width=40mm]{./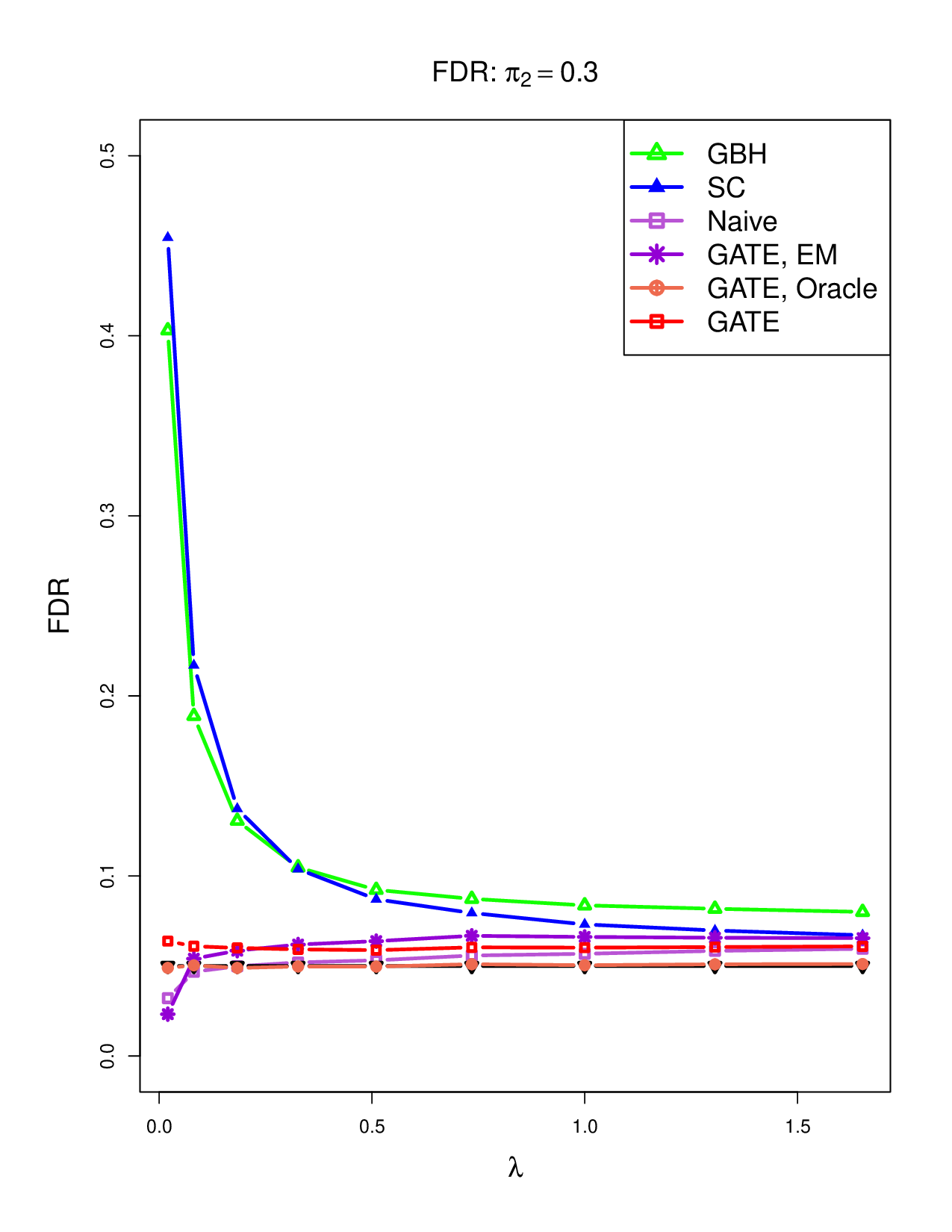}
  \caption{Performance in terms of FDR of One-Way GATE 1 across different values of $\lambda$ when $m=1000, n=5$.
  }\label{fig:gate1:fdr:2}
\end{figure}

\begin{figure}[H]
	\centering
	\includegraphics[height=40mm,width=40mm]{./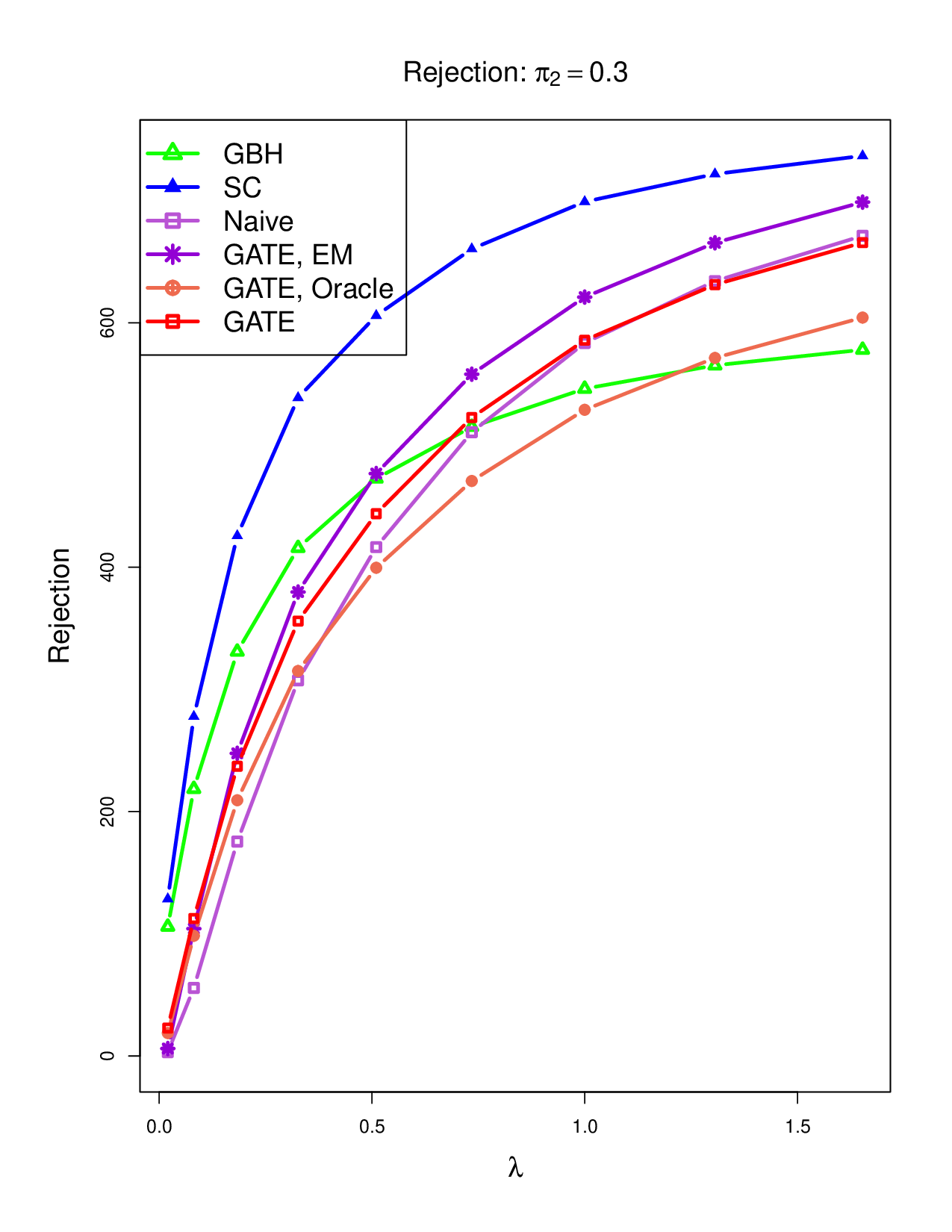}
	\includegraphics[height=40mm,width=40mm]{./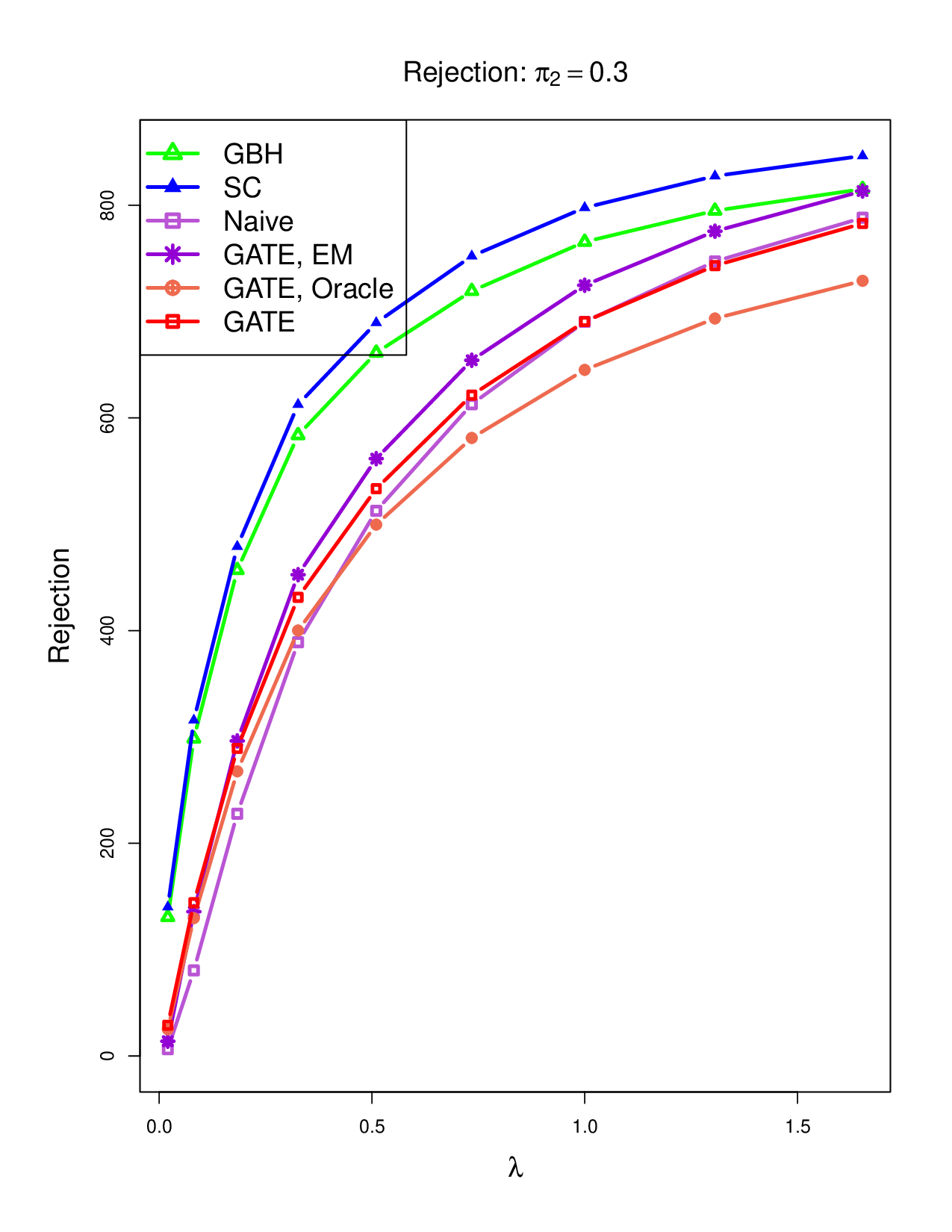}
	\caption{Performance in terms of expected number of rejections of One-Way GATE 1 across different values of $\lambda$ when $m=1000, n=5$.
	}\label{fig:gate1:rej:2}
\end{figure}

As suggested by a referee, we investigated how well the estimated Lfdr score using our proposed One-Way GATE 1, denoted by $\widehat{\rm{Lfdr}}$, approximates the true score $\rm{Lfdr}$. For that, we considered the following quantity: 
\[
{E L}(\widehat{\rm{Lfdr}}, \rm{Lfdr}) = \frac{ \sum_{ij} \mathbf{1}(\rm{Lfdr}_{ij}<0.1) \left(\widehat{\rm{Lfdr}}_{ij}-\rm{Lfdr}_{ij}\right)^2}{\sum_{ij} \mathbf{1}(\rm{Lfdr}_{ij}<0.1)},
\]
the mean squared error of $\widehat{\rm {Lfdr}}$ given that the true scores are low. It is an efficiency measure for our estimate when there are few strong signals. We simulated this measure under different parameter settings based on 1000  replications and reported them in Figure \ref{fig:gate1:lfdr:s1}. As seen from this figure, the estimated local fdr using our method gets more efficient as $\pi_2$ and $\lambda$ increase. 

\begin{figure}[H]
  \centering
  \includegraphics[height=50mm,width=50mm]{./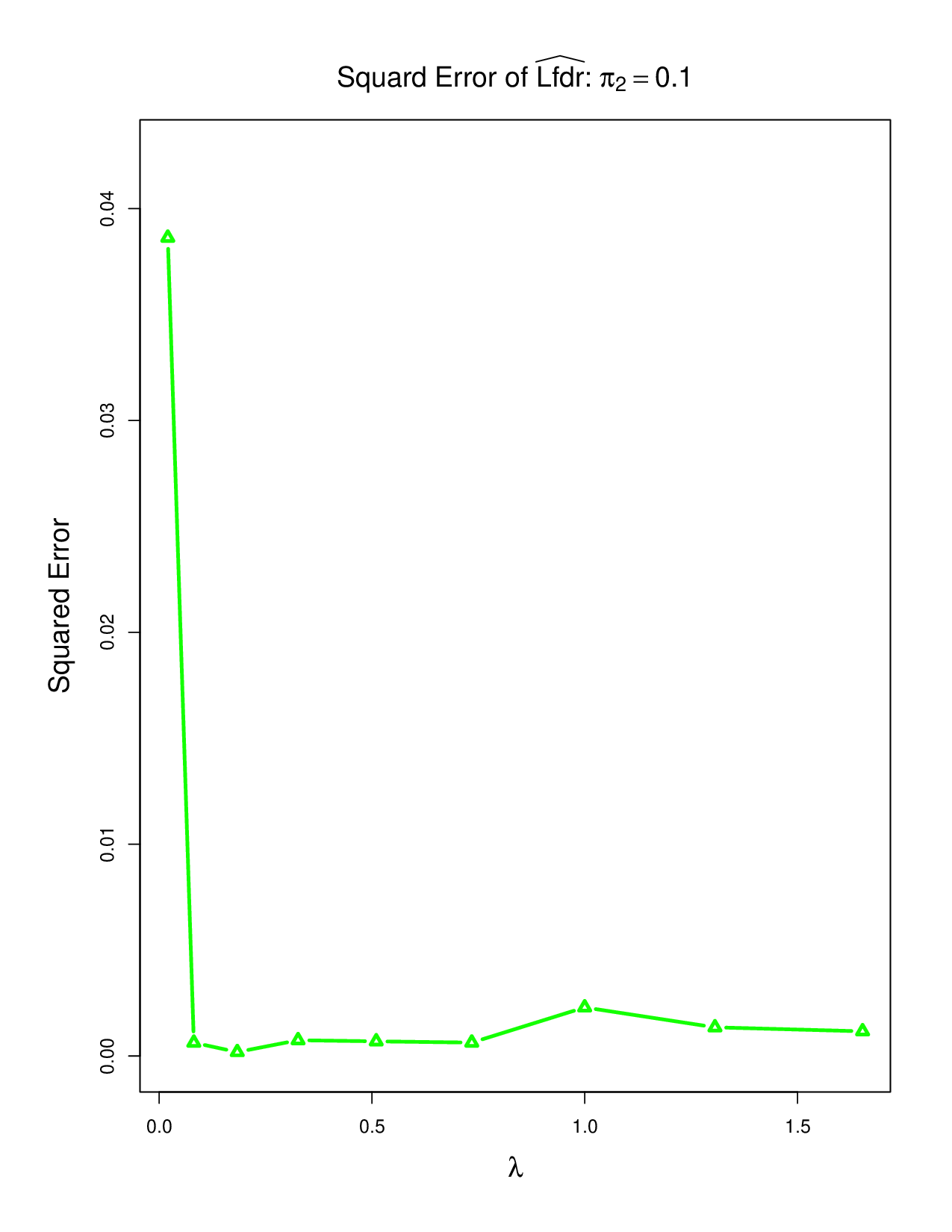}
   \includegraphics[height=50mm,width=50mm]{./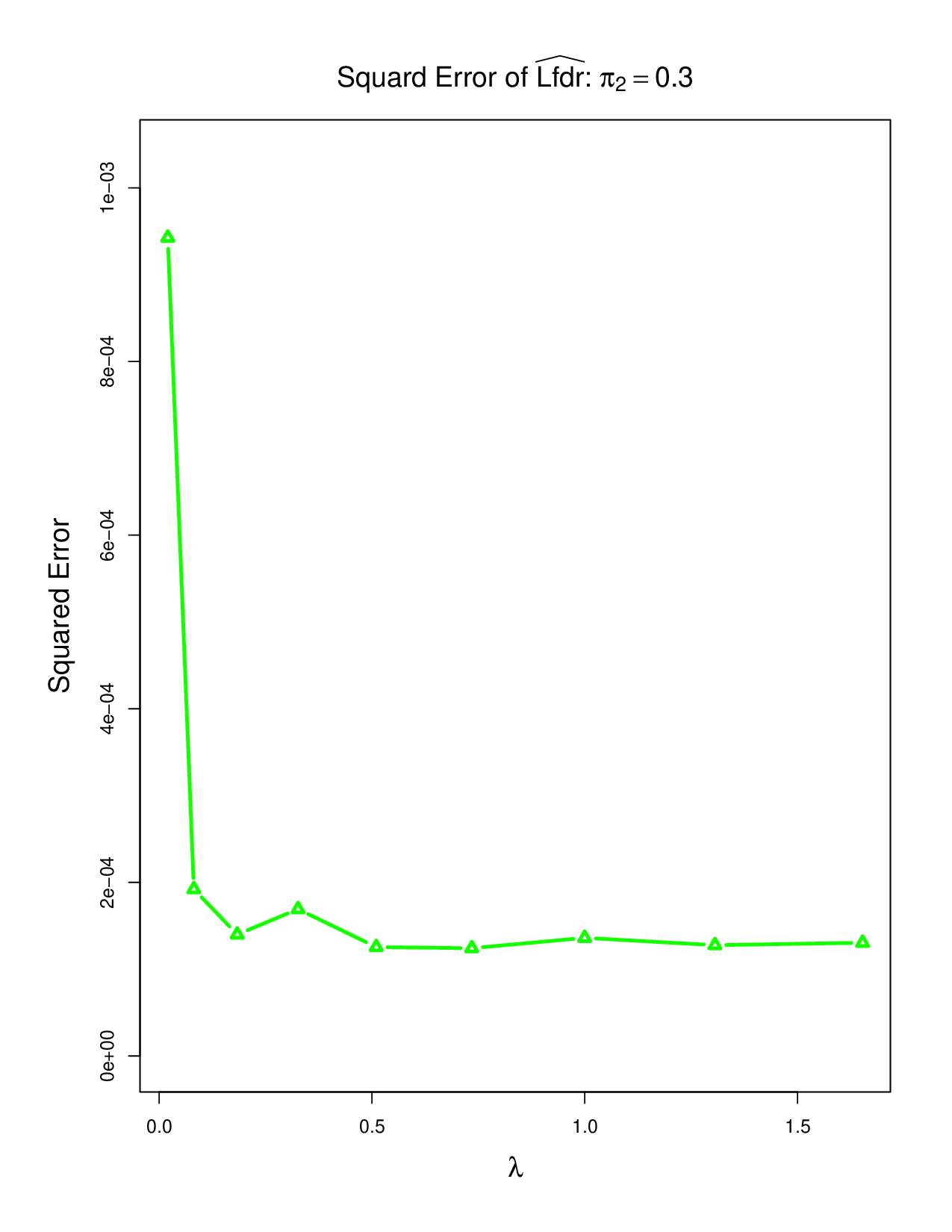}\\
    \includegraphics[height=50mm,width=50mm]{./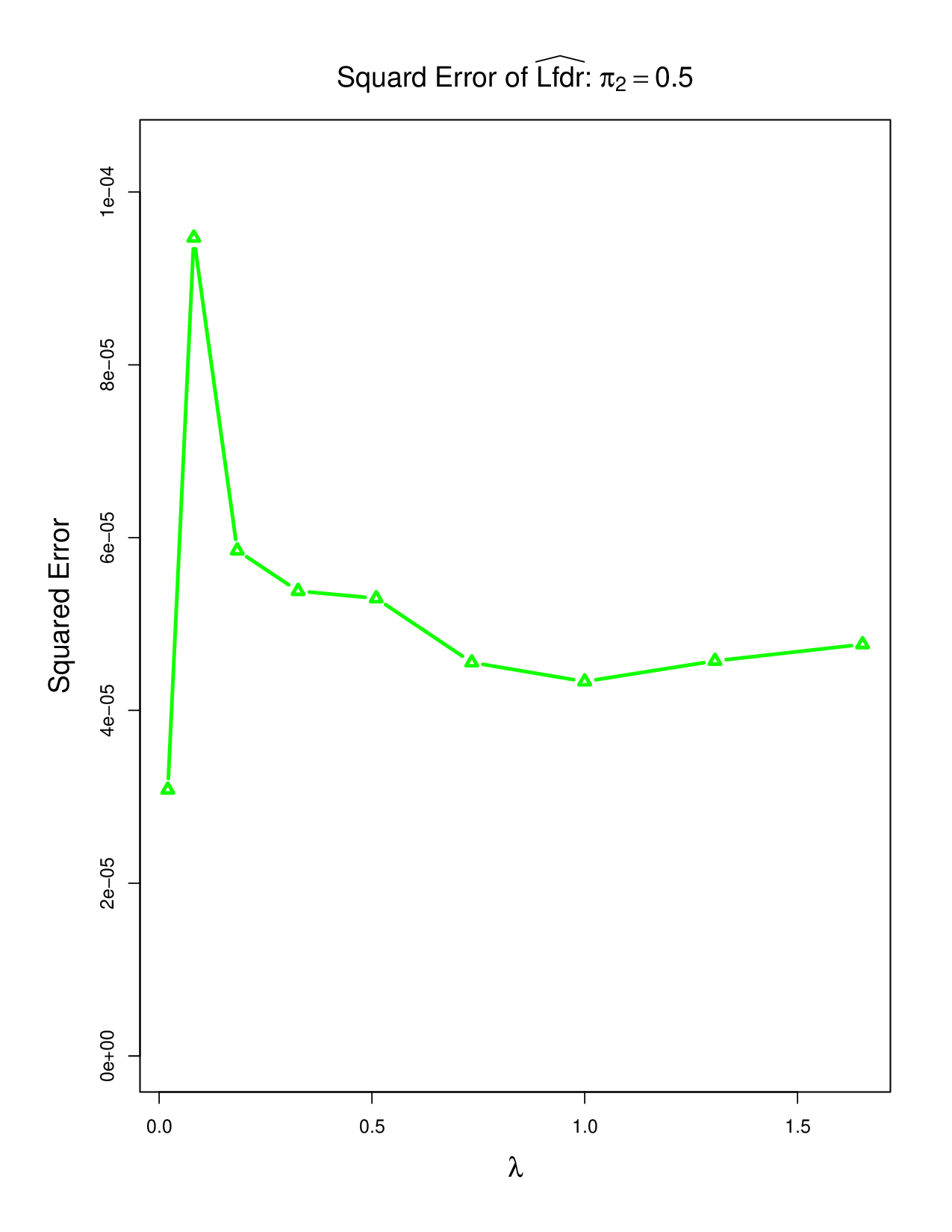}
    \includegraphics[height=50mm,width=50mm]{./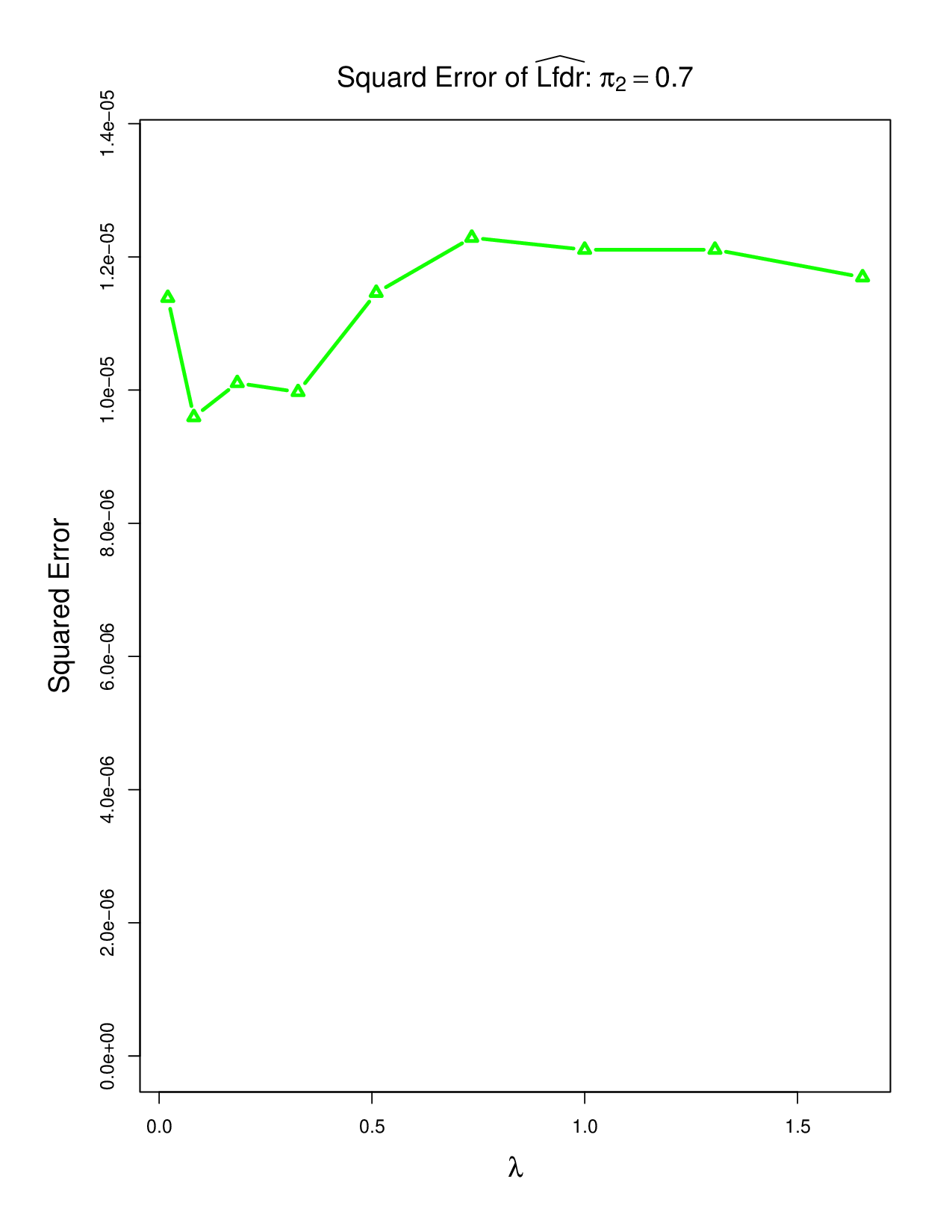}
  
  \caption{ The mean squared error for the estimation of the local fdr scores using the proposed method: $m=1000, n=5, L=1, \pi_2=0.1,0.3,0.5$ and $0.7$ respectively.
  }\label{fig:gate1:lfdr:s1}
\end{figure}

Additional numerical results obtained through extensive numerical calculations are put in Appendix. It is clearly seen that One-Way GATE 1 performs similarly to its oracle form, and, more importantly, it beats the performances of others which ignore the underlying group structure.

\subsection{One-Way GATE 2}
Simulation studies were conducted to compare One-Way GATE 2 to its competitors: the BB method (\cite{Benjamini:Bogomolov:2014}),  One-Way GATE 2 where the parameters are estimated according to the EM algorithm.

{\it BB method}: Convert $X_{ij}$ to its $p$-value $P_{ij}$. With $P_{i(1)} \le \cdots \le P_{i(n_i)}$ denoting the sorted $p$-values in group $i$. Let $\mathcal{G}$ be the set of indices of the group that are rejected according to Step 1 in Algorithm \ref{alg:gate:4:alt}. Reject the hypotheses corresponding to $P_{i(j)}$ for all $i \in \mathcal{G}$ and $j \le R_i$ where $R_i= \max \{ j: P_{i(j)} \le \frac{j}{n_i}\frac{\#\mathcal{G}}{G} \}$.

{\it One-Way GATE 2 (EM)}: the hyper-parameters are estimated according to the EM algorithm before applying it in One-Way GATE 2.

The comparison was made in terms of Bayes selective FDR (defined as the expectation of ${\rm PFDR}_{\mathcal{S}}$ over $\vX$), expected  number of total rejections, and average number of true rejections calculated under same simulation setting as in One-Way GATE 1. 
Figures \ref{fig:gate4:a} and \ref{fig:gate4:d} present the comparison for the setting where $m=500$, $n=20$, $\pi_{1}=0.14$, $0.95$, and $\pi_{2}=0.30$. Additional numerical results are put in the appendix.

One-Way GATE 2 is seen to control the selective FDR, like the BB Method, quite well. However, the One-Way GATE 2 is seen to be more powerful in terms of yielding a large number of true rejections. The EM version of One-Way GATE 2, however, fails to control the overall FDR and selective FDR.

\begin{figure}[H]
   \centering
   \includegraphics[height=40mm,width=40mm]{./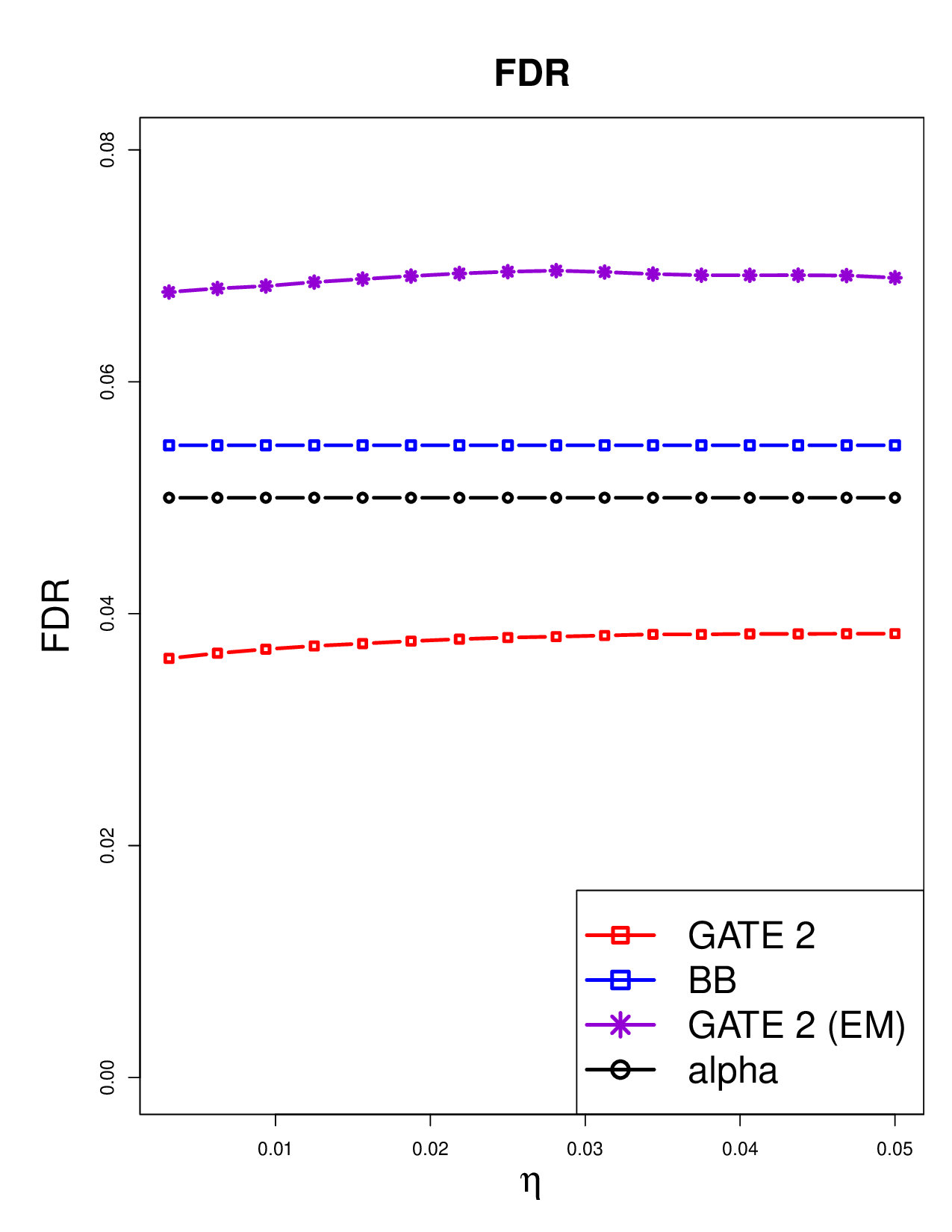}
   \includegraphics[height=40mm,width=40mm]{./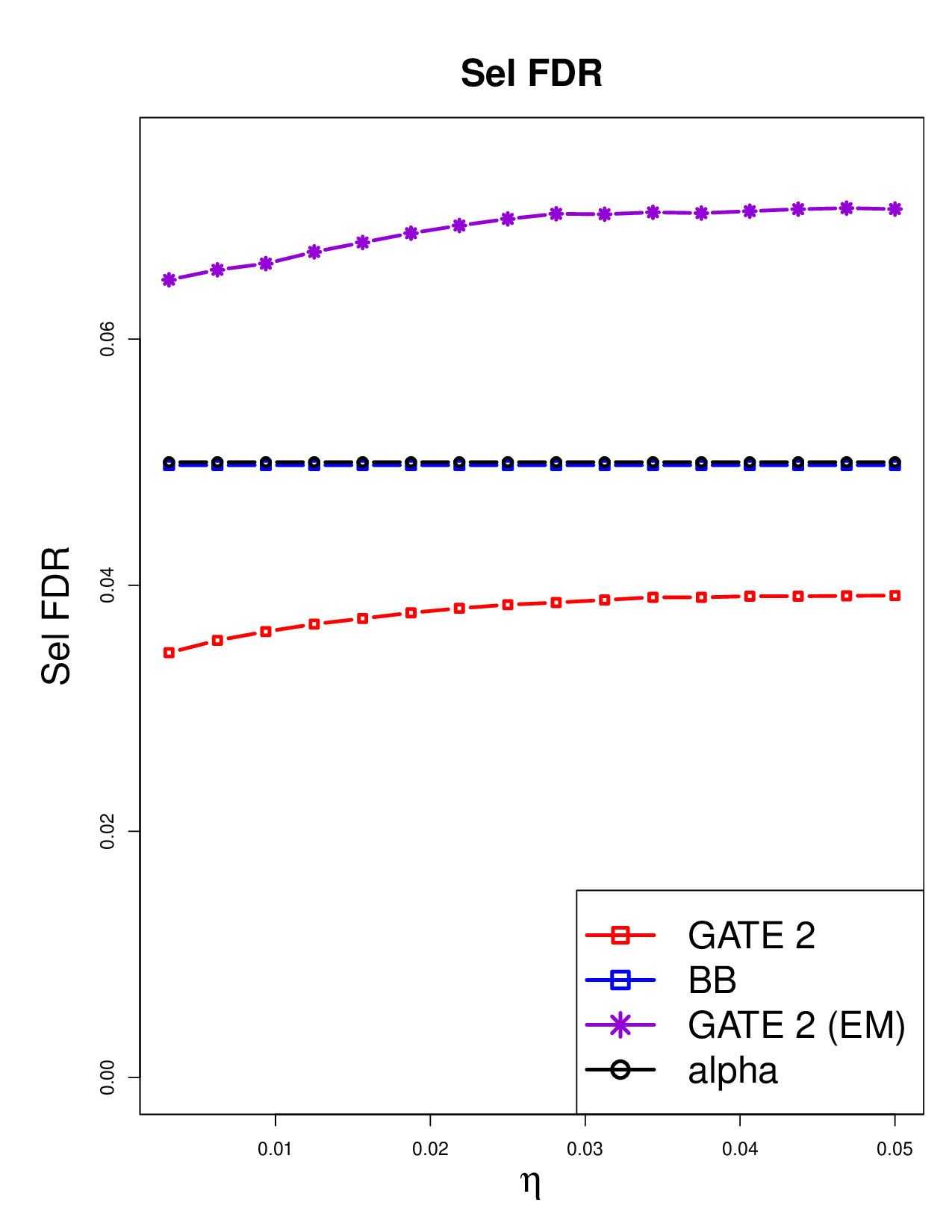}
   \includegraphics[height=40mm,width=40mm]{./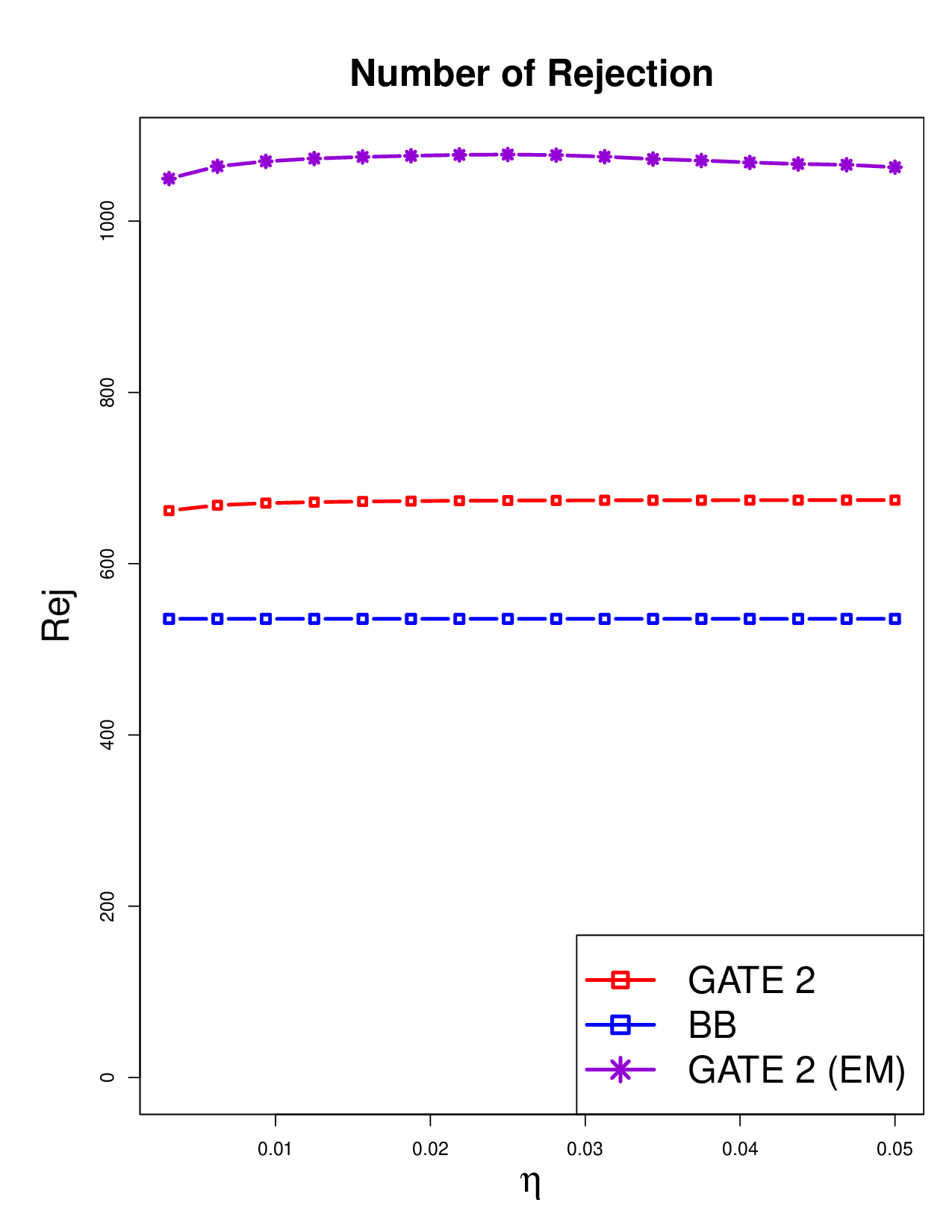}\\
   \includegraphics[height=40mm,width=40mm]{./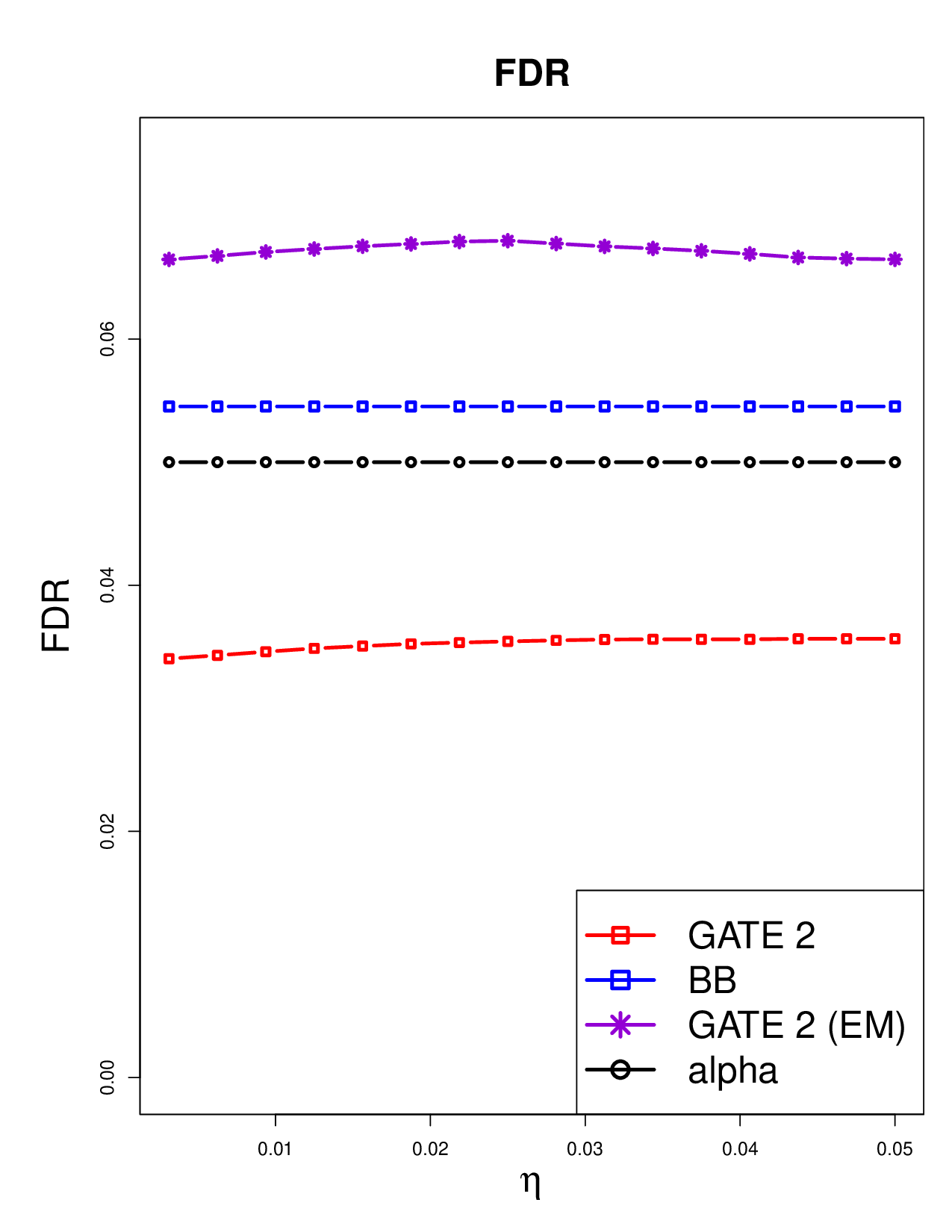}
   \includegraphics[height=40mm,width=40mm]{./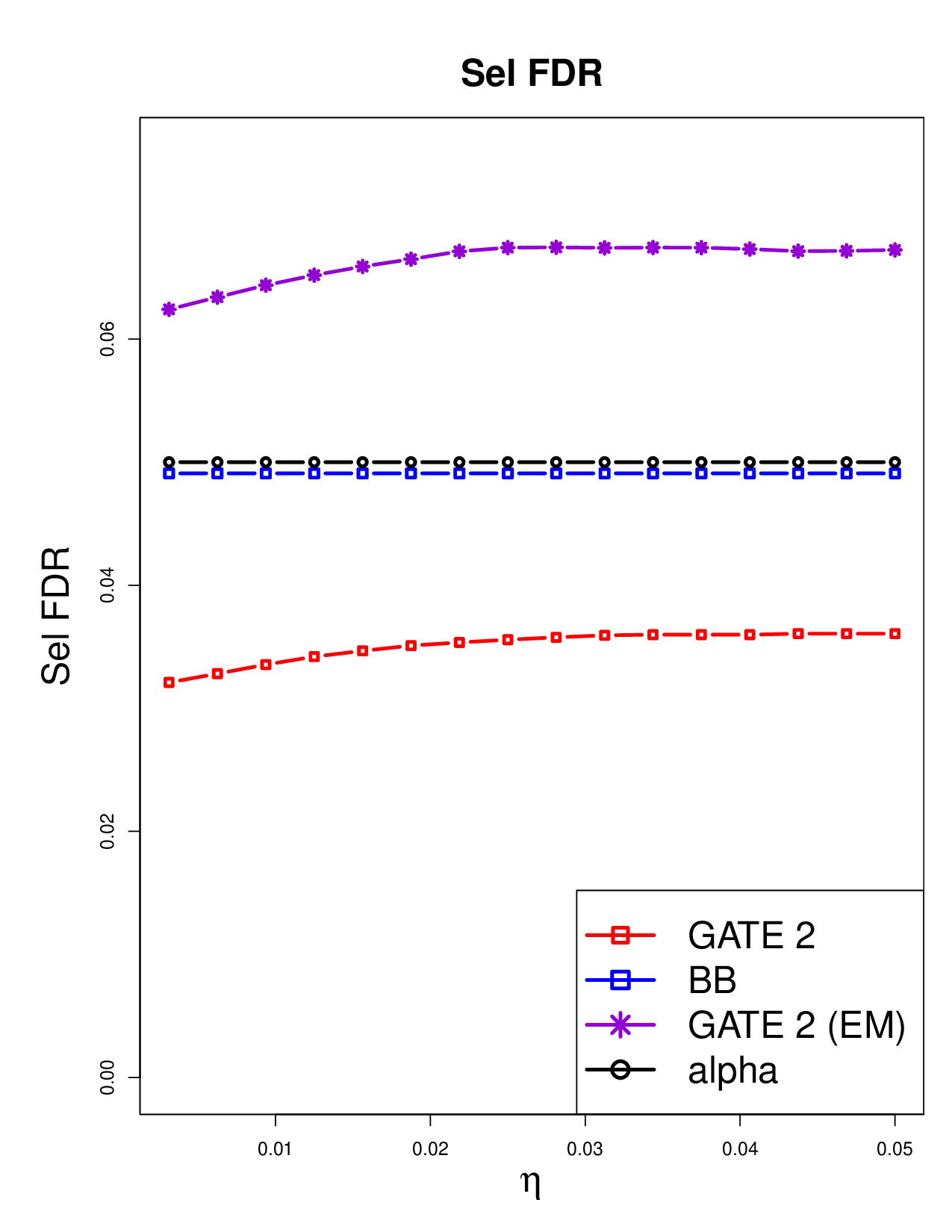}
   \includegraphics[height=40mm,width=40mm]{./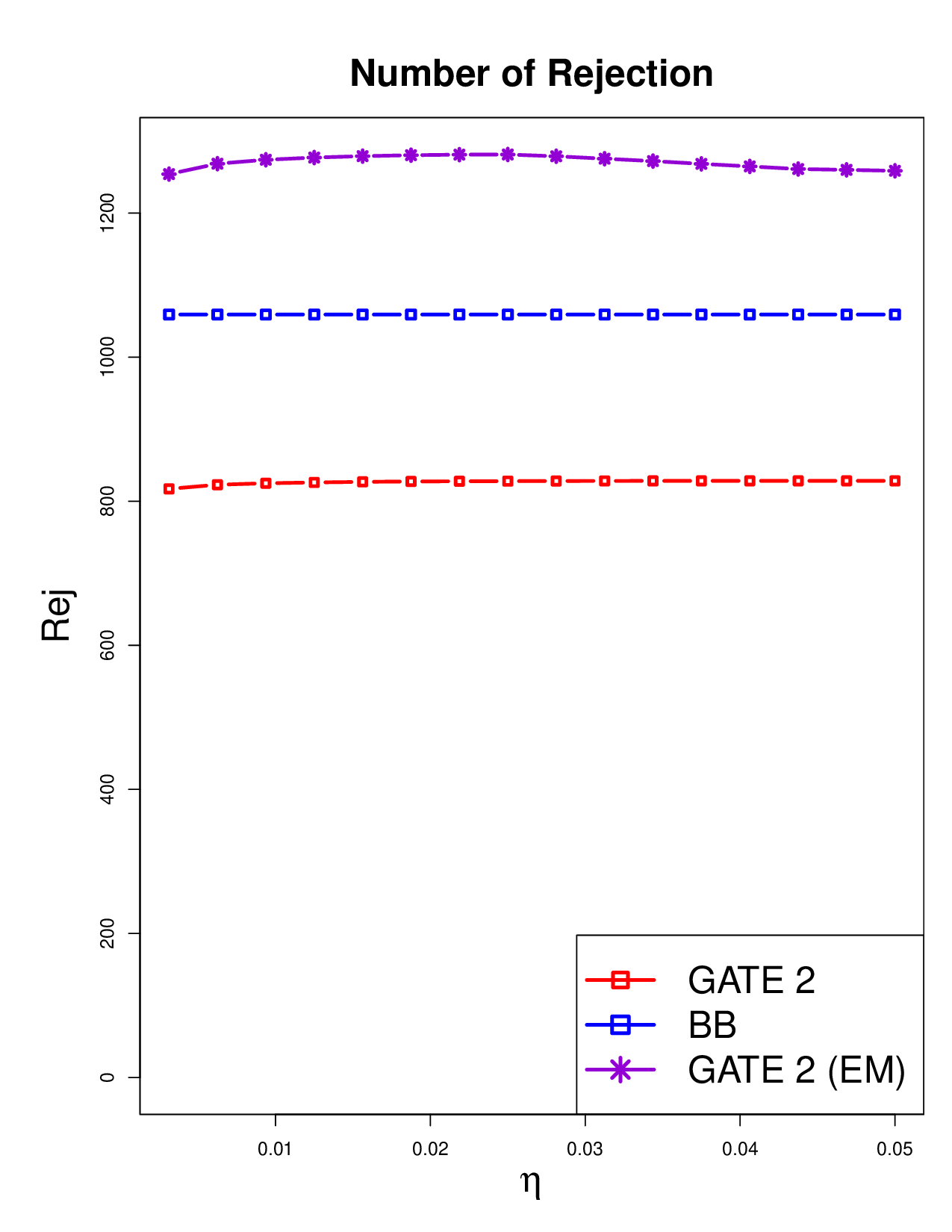}
   \caption{Performance of One-Way GATE 2 when $m=500, n=20, \pi_{1}=0.143$. The top panels correspond to cases when $K=1$ and the bottoms ones correspond to $K=2$. 
   }\label{fig:gate4:a} 
\end{figure}

\begin{figure}[H]
   \centering
   \includegraphics[height=40mm,width=40mm]{./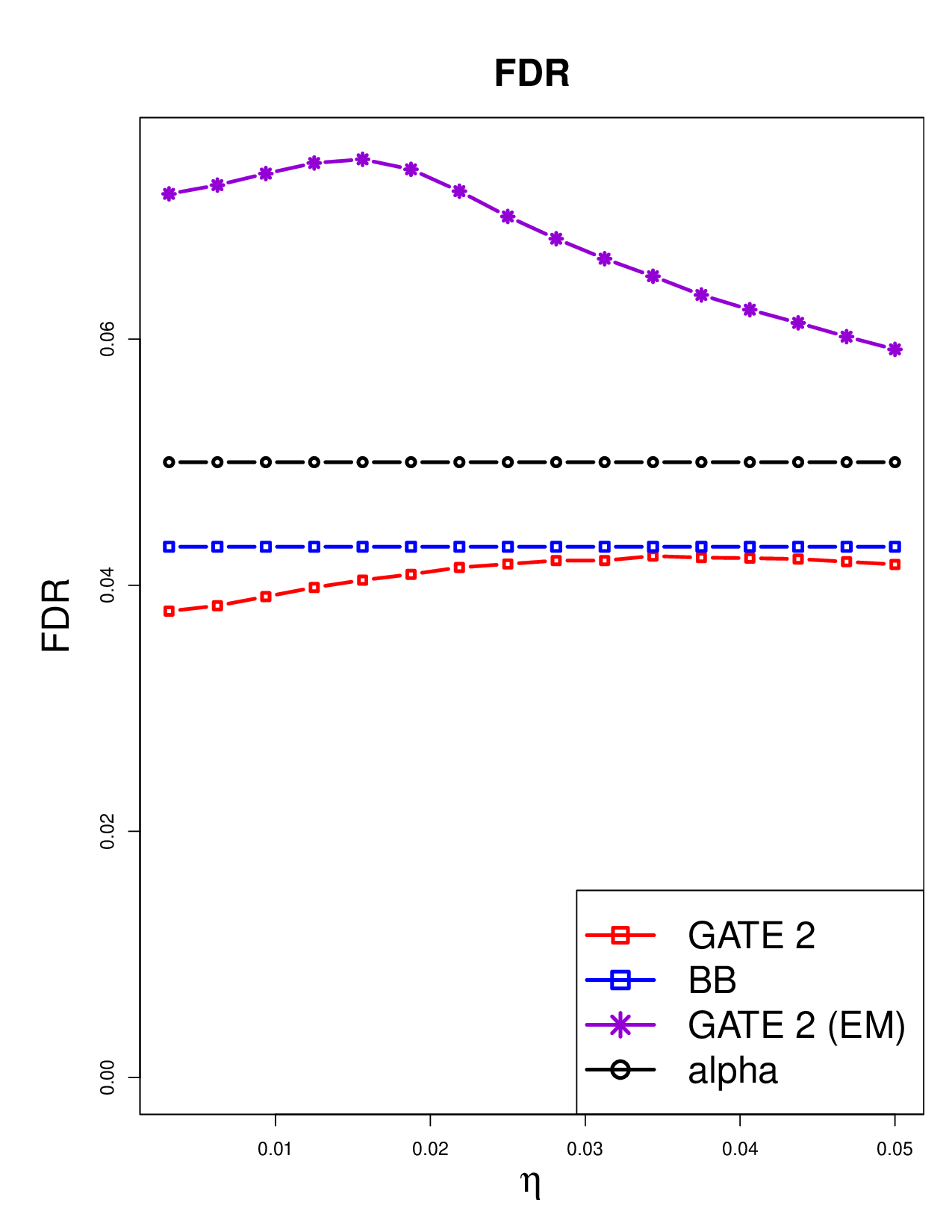}
   \includegraphics[height=40mm,width=40mm]{./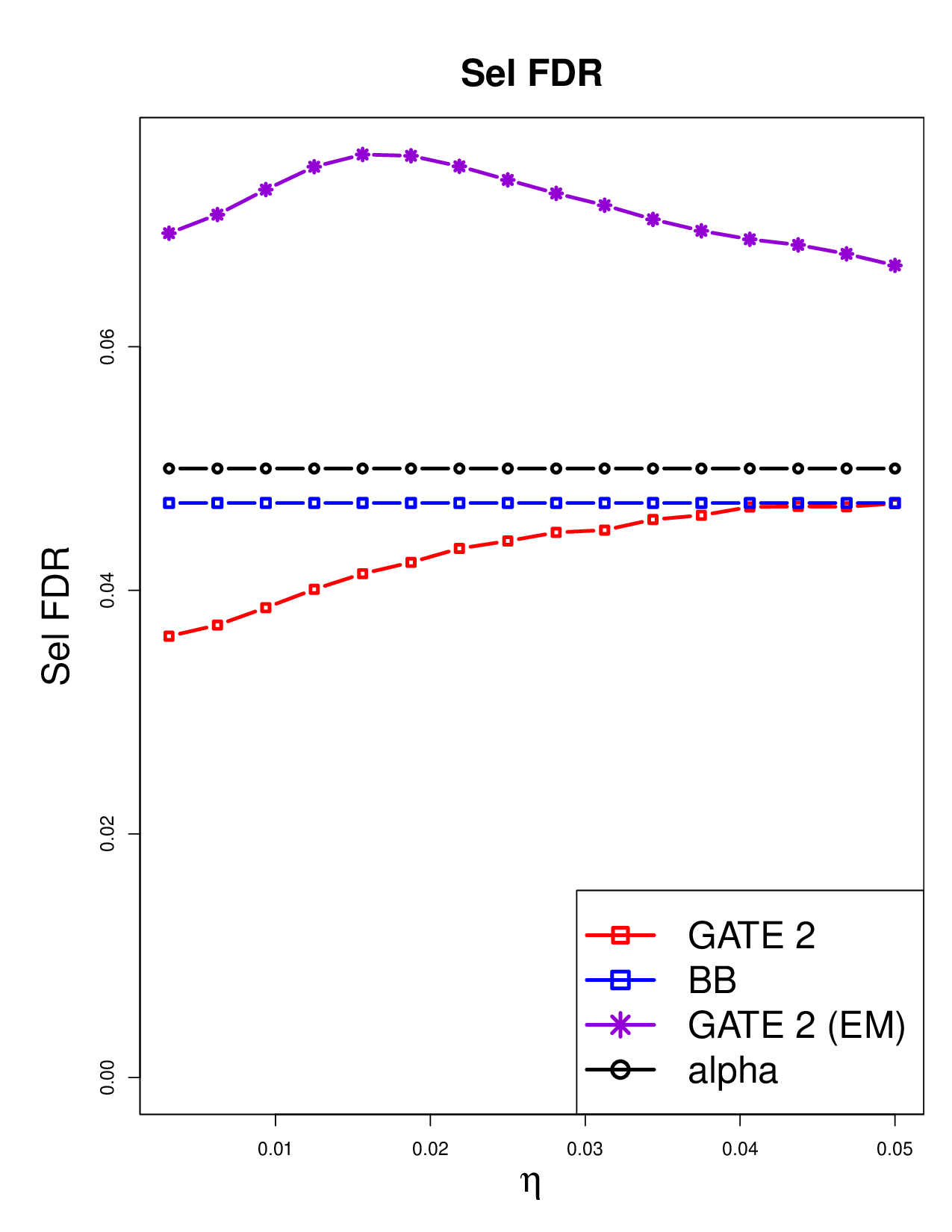}
   \includegraphics[height=40mm,width=40mm]{./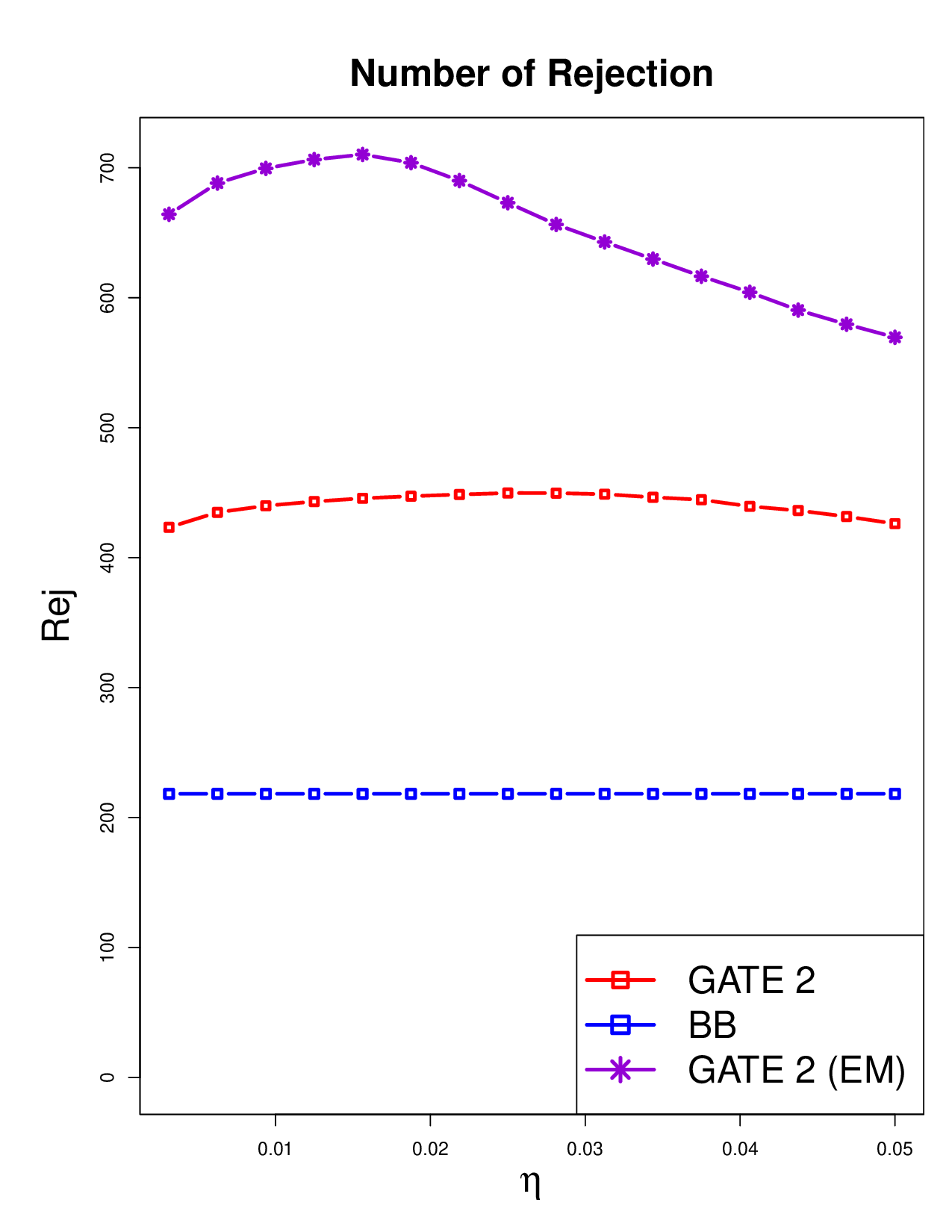}\\
   \includegraphics[height=40mm,width=40mm]{./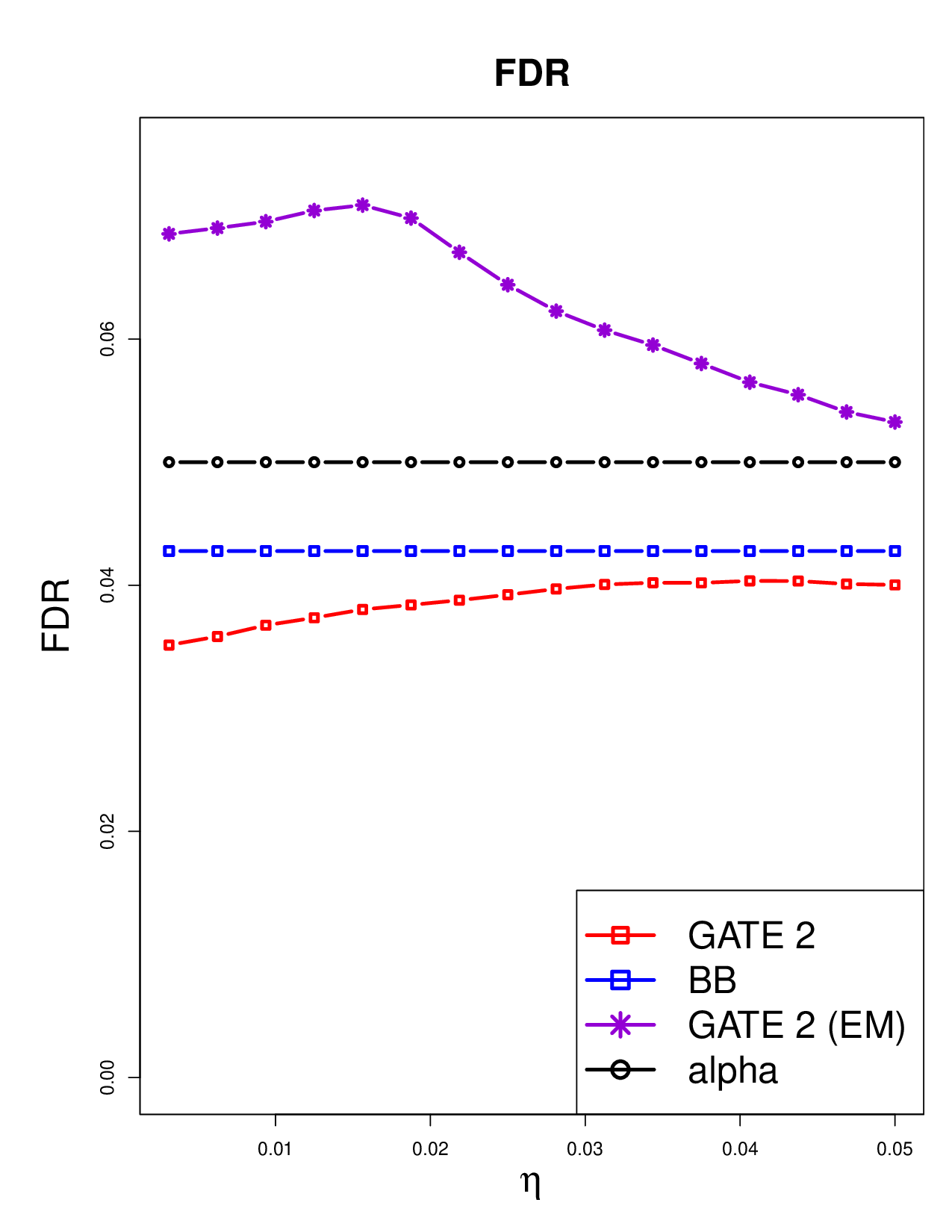}
   \includegraphics[height=40mm,width=40mm]{./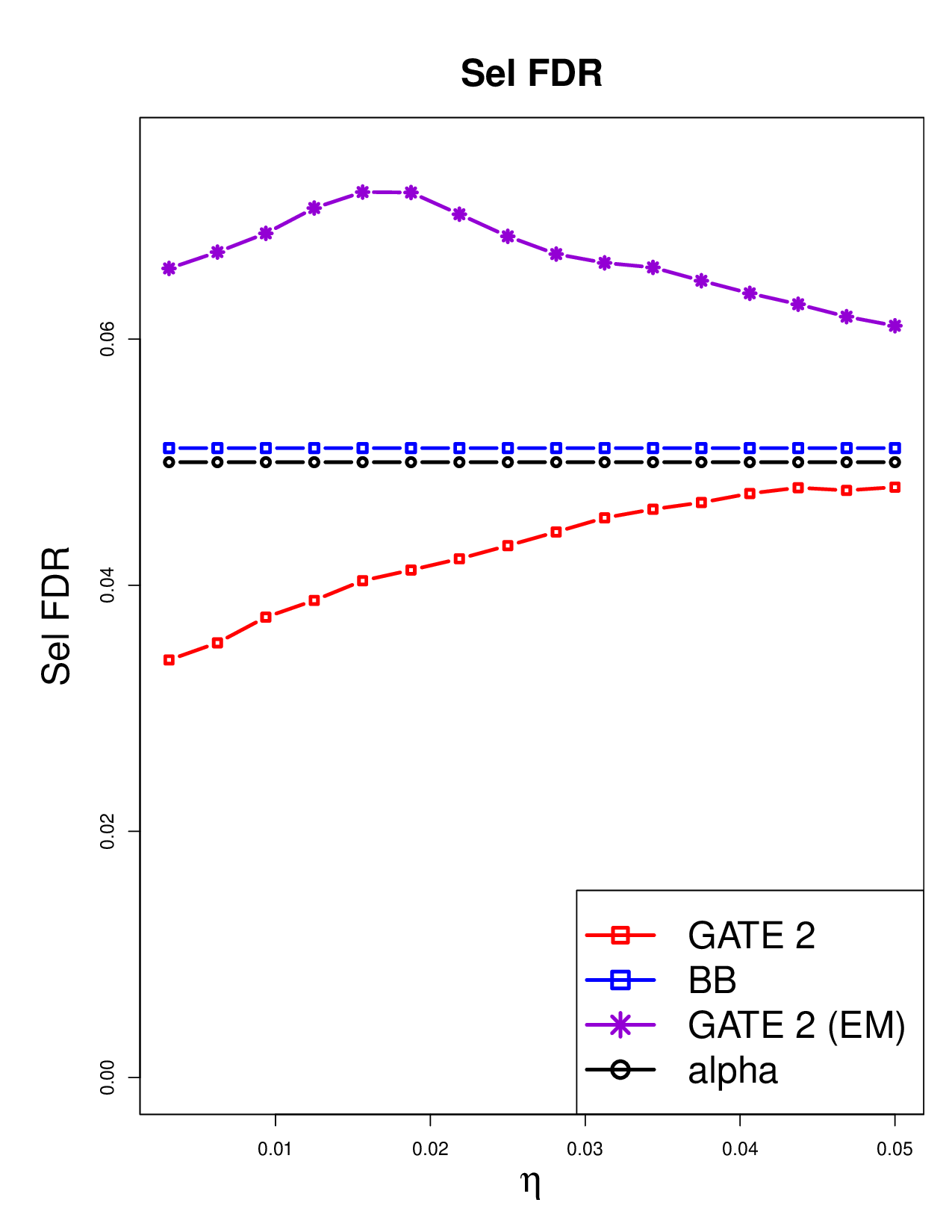}
   \includegraphics[height=40mm,width=40mm]{./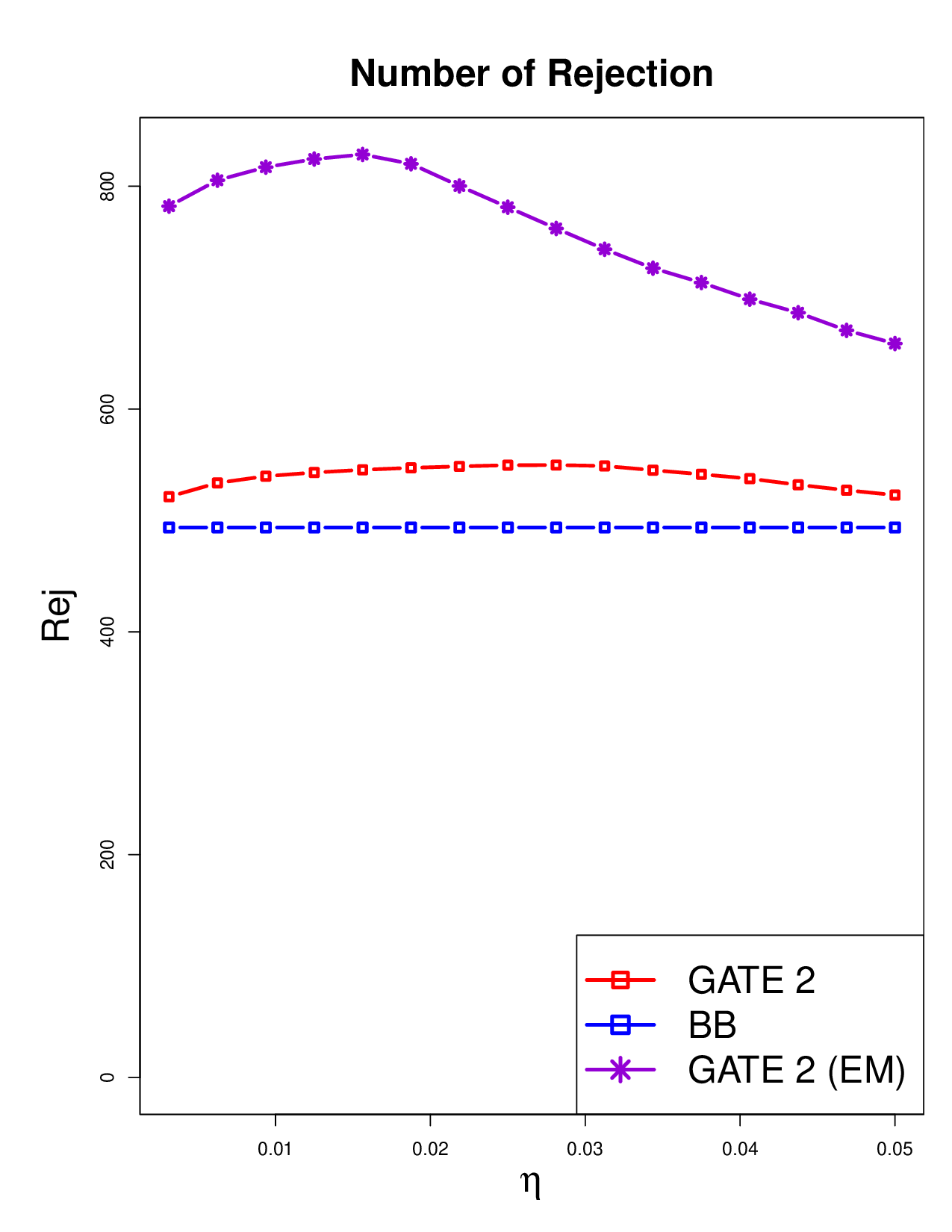}
   \caption{Performance of One-Way GATE 2 when $m=500, n=20, \pi_{1}=0.95$. The top panels correspond to cases when $K=1$ and the bottoms ones correspond to $K=2$.
   }\label{fig:gate4:d}
\end{figure}

\subsection{Real Data Analysis}\label{sec:realdata}
In this section, we revisit the Adequate Yearly Progress (AYP) study of California elementary schools in 2013. This data set has been analyzed in \cite{Liu:Sarkar:Zhao:2016}. The processed data can be loaded from the R package {\it GroupTest} and the reader can read \cite{Liu:Sarkar:Zhao:2016} for a complete description of how the data is prepared. In the final data set, we have 4118 elementary schools and 701 qualified school districts. For each school, there is an associated $z$-statistics, a quantity to compare the success rates in math exams of sociaeconomically advantaged students and sociaeconomically disadvantaged students.

We fit the Bayesian hierarchical model with $K=2$ and $\sigma^2=1$. We ran the Gibbs sampling 20,000 times and chose the first 10,000 as burn-in and select 1 out of every 20 in the remaining sequence. We ran the MCMC algorithm three times with different initial points. Eventually, the estimated parameters are
\[
\hat{\pi}_1= 0.53, \hat{\pi}_2= 0.59, (\hat{\eta}_1,\hat{\eta}_2)=(0.22,0.78), (\hat{\mu}_1,\hat{\mu}_2) = (2.64, -1.88).
\]
Figure \ref{fig:density}, which compares the kernel density estimate based on the available z-values for all the schools with the same based on the z-values produced by the data generated using the Bayes hierarchical model, seems to indicate that this model fits the AYP data quite well.  

\begin{figure}[H]
   \centering
   \includegraphics[height=75mm,width=75mm]{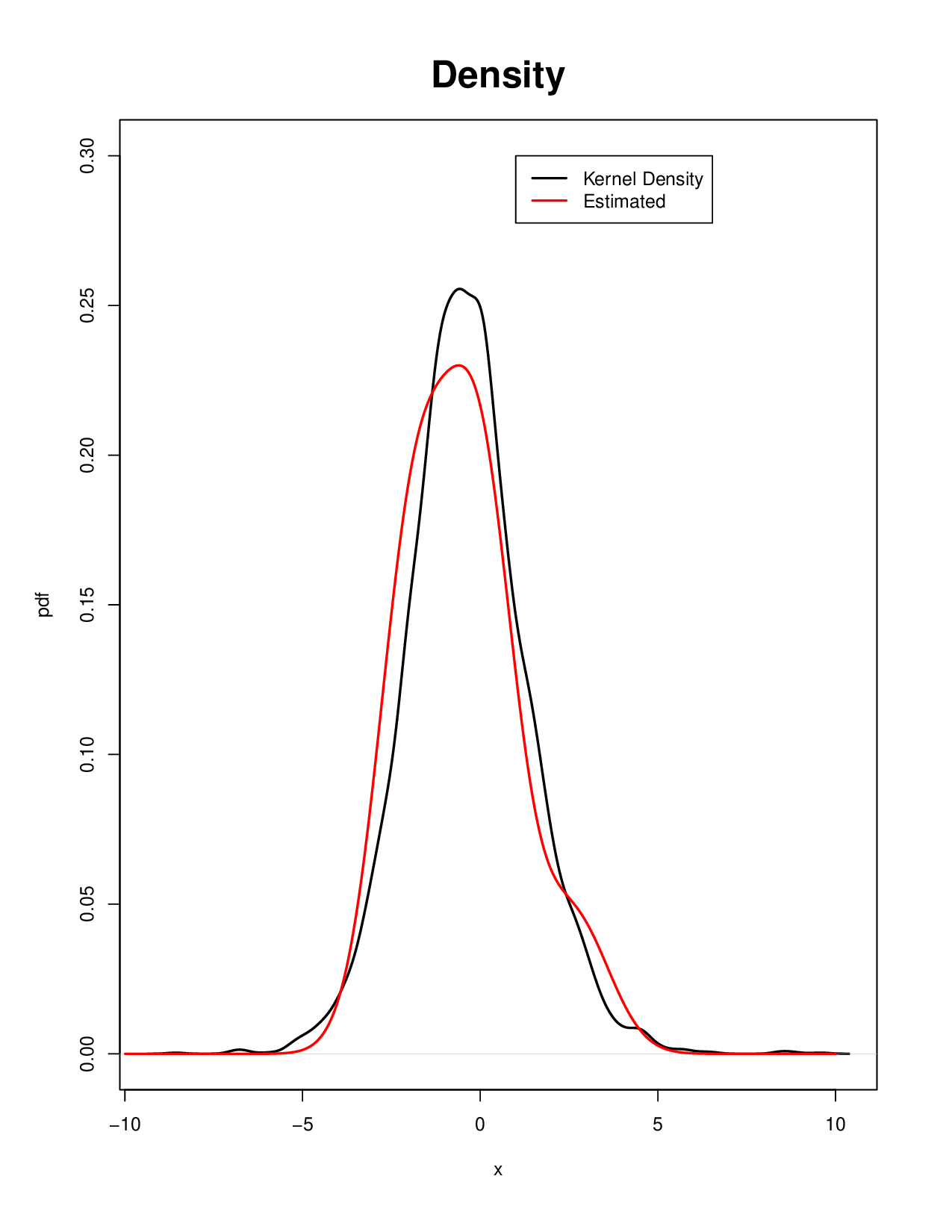}
   \caption{Comparison of the kernel density estimation of the pdf based on the z-statistics of all the data and the density function based on the estimated parameters.
   }\label{fig:density}
\end{figure}

Based on these estimated values, we applied One-Way GATE 1 with $\alpha=0.05$. It rejected 773 schools, located in 209 school districts. The GATE 1 (EM) yielded 687 schools being rejected. The GBH (TST) Method rejected 588 schools. The data-driven version of SC Method yielded 929 rejections. It should be pointed out that  $\lambda_i$ ranges from 0 to 0.78, depending on the number of schools in district $i$, which varies between 1 and 277.

The GATE 1 has the following two features: (i) incorporating the groups can help to allocate errors among the groups appropriately; (ii) group structure can alter the relative importance of different schools and thus change the ranking of the schools. These two features are playing important roles in the analysis of AYP data, which can be seen from the results for the following two districts: 
\begin{verbatim}
New Haven Unified, Guy Jr. Emanuele Elementary: z=3.05
-0.45, -0.38, -0.28, 0.83, 0.92, 3.05, 0.27

Berkeley Unified, Oxford Elementary: z=2.65.
1.77, 3.60, 4.40, -0.14, 1.83, 2.52, 2.65, 1.18, 3.25, 1.41, 0.86
\end{verbatim}

The Emanuele Elementary school is not rejected, while the Oxford Elementary school is rejected, though the corresponding z-values are 3.05 and 2.65 respectively. The New Haven Unified school district is not rejected because most of the schools has a moderate statistic except for the Guy Jr. Emanuele Elementary. On the other hand, the Berkeley Unified school district is rejected due to the fact that many schools within this district have relatively large z-values.

To cross-validate this conclusion, we downloaded the AYP data for the year of 2015 and look into these two school districts. We get the following information:
\begin{verbatim}
New Haven Unified, Guy Jr. Emanuele Elementary: z=-1.45
-1.27, 0.67, -1.51, 0.89, 3.05, -1.45, 1.18

Berkeley Unified, Oxford Elementary: z=3.95
1.49, 3.90, 3.69, 4.50, 2.35, 5.63, 3.95, 4.71, 7.64, 3.27, 2.98
\end{verbatim}
It is seen that our conclusion made based on the data of 2013 agrees with the data in 2015. In Figure \ref{fig:realdata}, we plot the z-values for the AYP data in 2015 against that for the data in 2013. We excluded those schools with the absolute $z$-values in 2013 being less than or equal to 2 because all the methods fail to reject them. We also exclude the schools with the absolute $z$-values being greater than 5 for clear rejection. The four panels in this figure correspond to One-Way GATE 1, One-Way GATE 1 (EM), GBH (TST) and SC from the top-left to the bottom-right. In each panel, the red dots represent the schools which are not rejected based on the data in 2013 and the blue triangles represent the schools which are rejected.

Note that the rejection region of GBH's method is roughly symmetric around zero. However, the rejection regions of SC and GATE methods are not. Due to the effect of the group structure, whether to reject a hypothesis depends on the magnitude of z-statistic and the z-statistic of all the schools within the same school district. This is a unique feature of GATE which could provide some new insights for the local government.

The code of the numerical investigations is made available via github. The link is\\
\url{https://github.com/zhaozhg81/GATE}

\begin{figure}[H]
	\centering
	\includegraphics[width=50mm, height=50mm]{./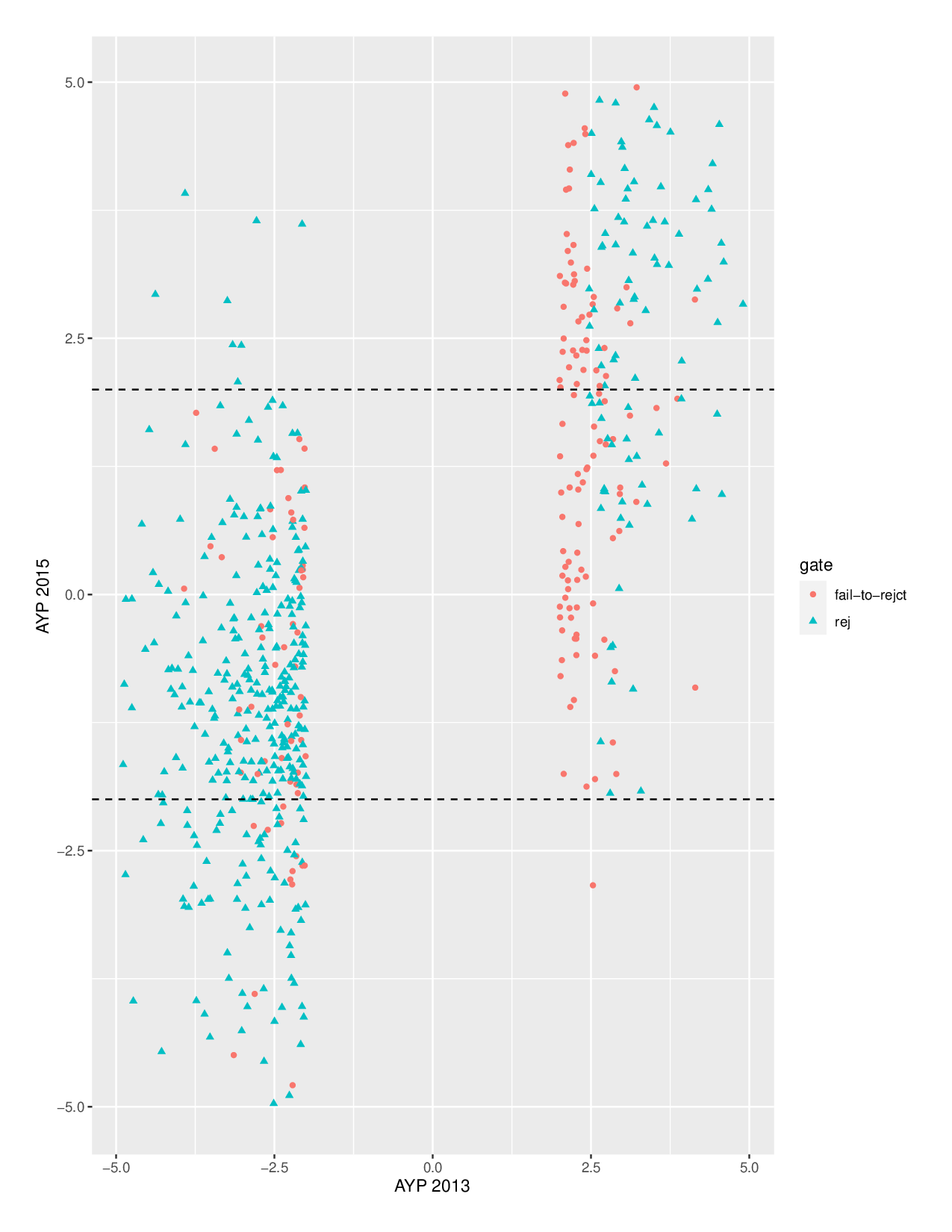}		
	\includegraphics[width=50mm, height=50mm]{./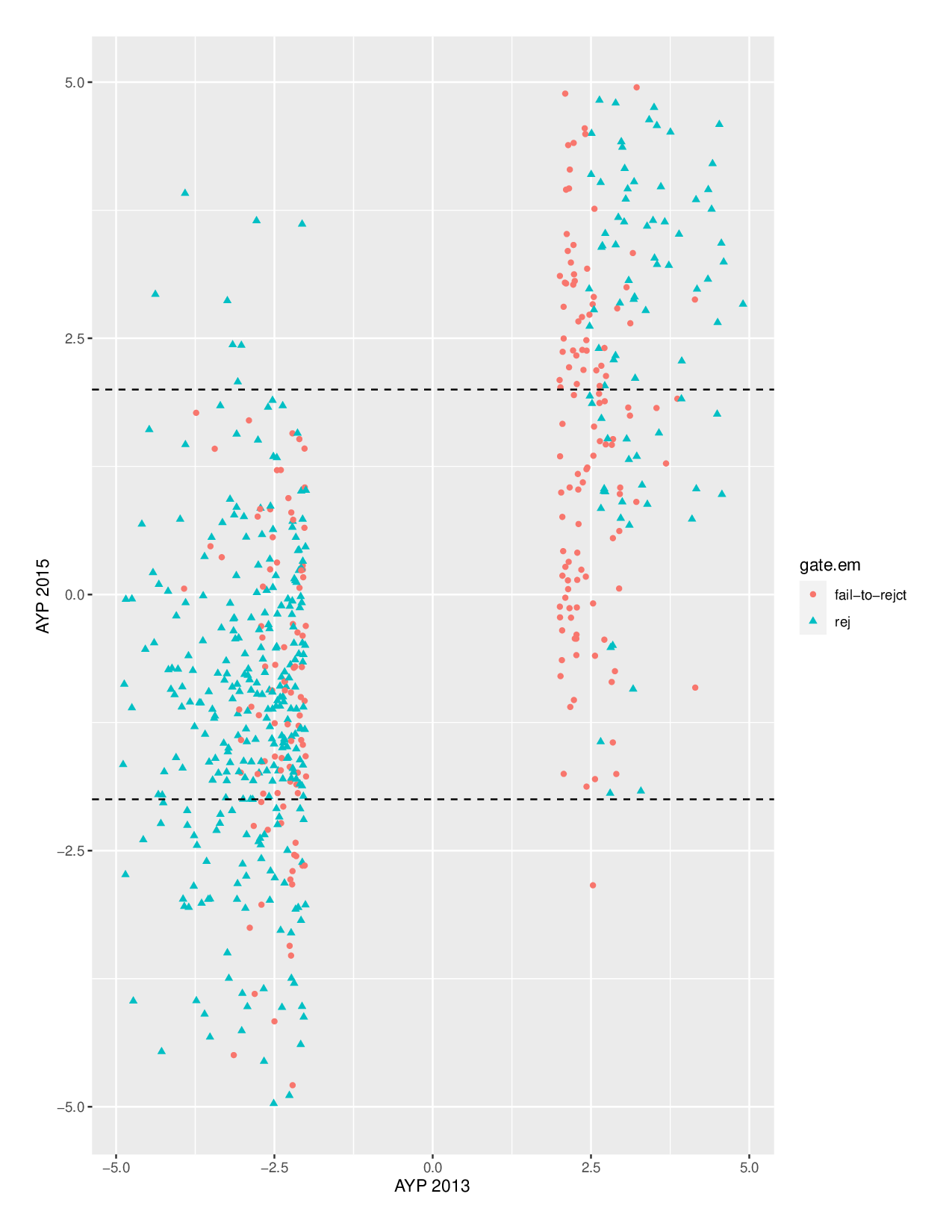}\\
	\includegraphics[width=50mm, height=50mm]{./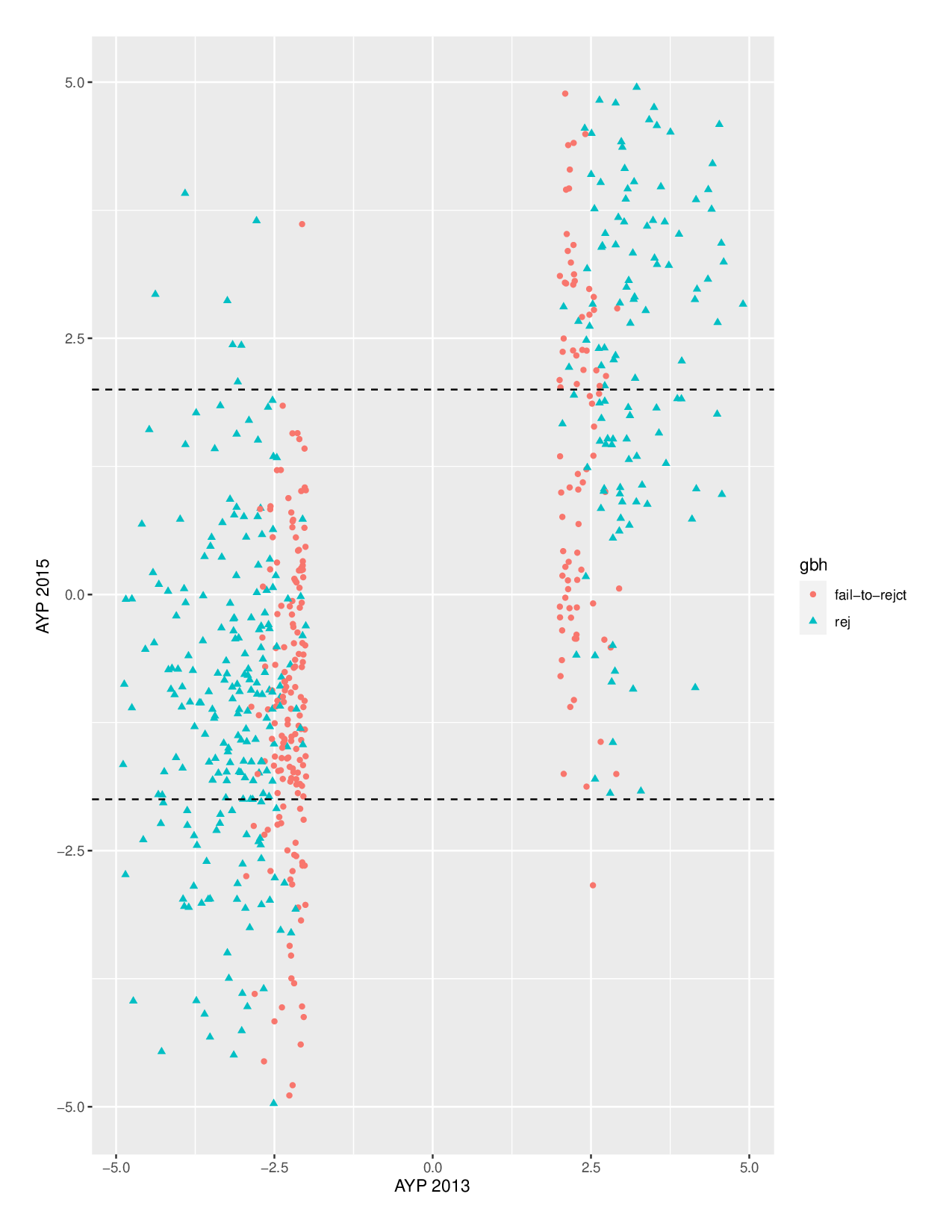}
        \includegraphics[width=50mm, height=50mm]{./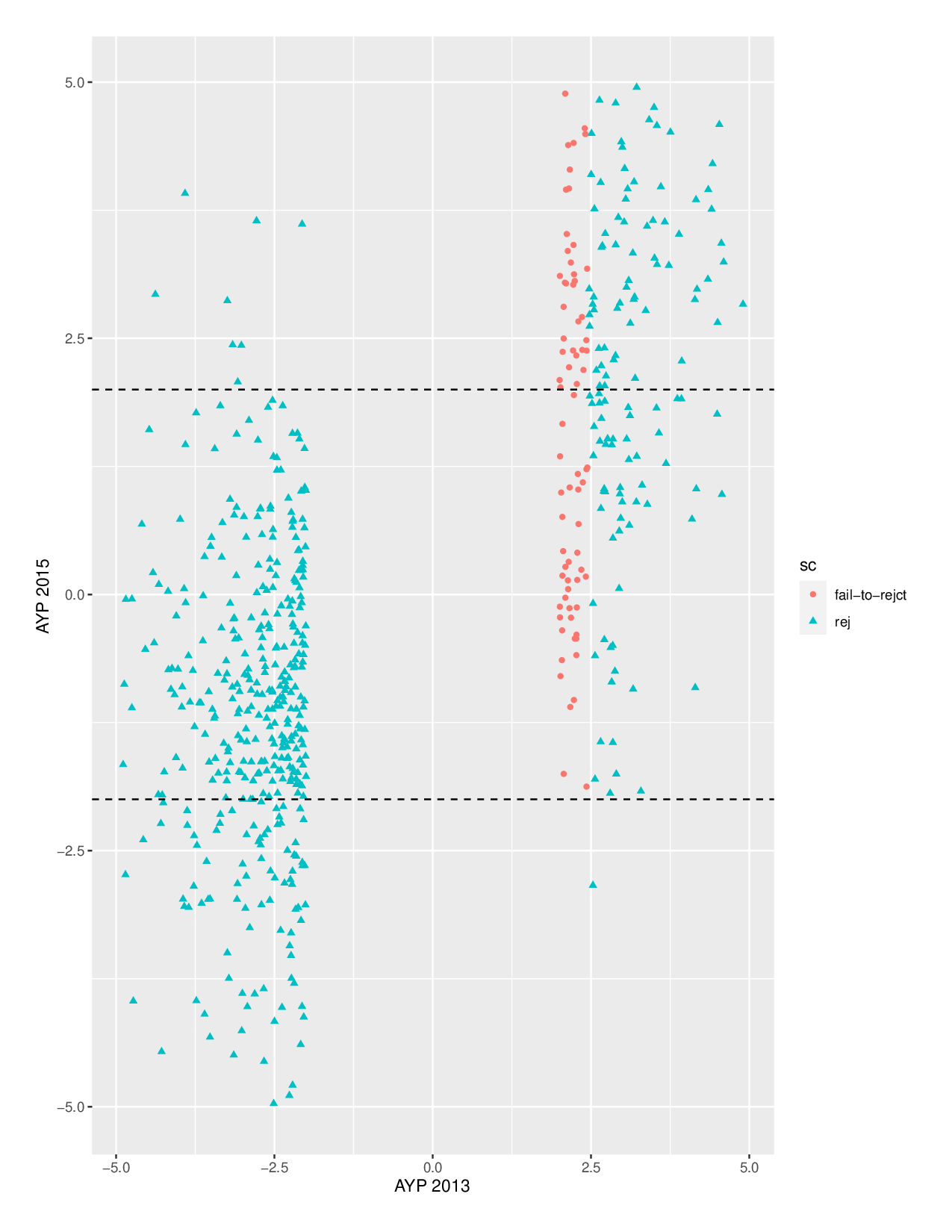}
	\caption{AYP data analysis result. Four panels, from top-left to the bottom-right, correspond to: One-Way GATE 1, One-Way GATE 1 (EM), GBH (TST) Method and SC Method.}\label{fig:realdata}
\end{figure}	

\section{Concluding Remarks}\label{sec:concluding}
The primary focus of this article has been to continue the line of research in \cite {Liu:Sarkar:Zhao:2016} to answer Q1 and Q2 for one-way classified hypotheses. It can provide the ground work for our broader goal of answering these questions in the setting of two-way classified hypotheses. Two-way classified setting is seen to occur in many applications. For instance, in time-course microarray experiment (see, e.g., \cite{Storey:etal:2005, Yuan:Kendziorski:2006, Sun:Wei:2011}), the hypotheses of interest can be laid out in a two-way classified form with `gene' and `time-point' representing the two categories of classification. In multi-phenotype GWAS (\cite{Peterson:etal:2016, Segura:etal:2012}), the families of the hypotheses related to different phenotypes form one level of grouping, while the other level of grouping is formed by the families of hypotheses corresponding to different SNPs. Two-way classified structure of hypotheses occurs also in brain imaging studies (\cite{Liu:etal:2009, Stein:etal:2010, Lin:Calhoun:Wang:2014, Barber:Ramdas:2015}). 
Now that we know the theoretical framework that can successfully capture the underlying group effect and yield powerful approaches to multiple testing in one-way classified setting, proceeding further towards extending it to produce newer and powerful Lfdr based approaches answering Q1 and Q2 in two-way classified setting would be a worthwhile goal.  Some initial progress towards this goal has been made in \cite{Nandi:Sarkar:2018, sarkar2021development}. We intend to expand upon it in our future research.  

We strongly recommend using the One-Way GATE 1 we have proposed in this article for answering Q1, motivated by its desirable theoretical properties (as stated in Theorems 1 and 2) and strong numerical findings in comparison with its natural competitors (discussed in Sections 4.2 and 4.4).  For answering Q2, the proposed One-Way GATE 2 is a better alternative to the BB method when only a few of the selected groups are likely to be important, which happens in many applications.    

\section{Acknowledgment}
The authors greatly appreciate valuable comments from the reviewers.

\setcounter{section}{0}
\setcounter{subsection}{0}
\renewcommand\thesection{\Alph{section}}
\numberwithin{equation}{section}

\section{Appendix}\label{sec:appendix}

\subsection{Proofs of (\ref{eqn:Lfdr:ji}) and (\ref{eqn:Lfdr:idot})} These results, although appeared before in \cite{Liu:Sarkar:Zhao:2016}, will be proved here using different and simpler arguments. They are re-stated, without any loss of generality, for a single group with slightly different notations in the following lemma.

  \begin{lemma}\label{lemma:a1}
    Conditionally given $\theta_{\centerdot} \sim Bern(\pi_1)$, let  $(X_{j}, \theta_{j})$, $j=1, \ldots, n$, be distributed as follows: (i) $X_1, \ldots X_n~|~\theta_1, \ldots, \theta_n \stackrel {ind} \sim (1- \theta_{\centerdot} \cdot \theta_{j})f_{0}(x_j)+ \theta_{\centerdot} \cdot \theta_{j}f_{1}(x_{j})$, and (ii) $\theta_{1}, \ldots, \theta_{n} \sim TPBern(\pi_{2}, n)$. Let Lfdr$_{j}(\pi_2) \equiv Lfdr(x_j;\pi_2) = (1-\pi_2)f_0(x_j)/m(x_j)$, with $m(x)=(1-\pi_2)f_0(x)+\pi_2f_{1}(x)$, for $j=1, \ldots, n$, and Lfdr$_{\centerdot}(\pi_2) = \prod_{j=1}^n  Lfdr_{j}(\pi_2)$. Then,
    \begin{eqnarray}\label{eqn:A1}
      Pr (\theta_j=0|\theta_{\centerdot}=1, X_1=x_1, \ldots, X_n=x_n ) = \frac{Lfdr_{j}(\pi_2) - Lfdr_{\centerdot}(\pi_2)}{1- Lfdr_{\centerdot}(\pi_2)}
    \end{eqnarray}
    and
    \begin {eqnarray}\label{eqn:A2}
      Pr (\theta_{\centerdot}=0|X_1=x_1, \ldots, X_n=x_n ) = \frac{Lfdr_{\centerdot}(\pi_2)}{Lfdr_{\centerdot}(\pi_2) + \lambda [1- Lfdr_{\centerdot}(\pi_2)]},
    \end {eqnarray}
    where $\lambda= \frac{\pi_1}{1-\pi_1} \div \frac{1 -(1-\pi_2)^n}{(1-\pi_2)^n}$.
  \end{lemma}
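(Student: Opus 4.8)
The plan is to work directly from the joint likelihood of $(X_1, \ldots, X_n, \theta_1, \ldots, \theta_n)$ conditional on $\theta_{\centerdot}$, exploiting the product structure of both the TPBern prior and the conditional densities of the $X_j$'s. First I would introduce the ``untruncated'' product-Bernoulli weight
\[
g(\boldsymbol{x}, \boldsymbol{t}) = \prod_{k=1}^n \left[ (1-\pi_2) f_0(x_k) \right]^{1-t_k} \left[ \pi_2 f_1(x_k) \right]^{t_k},
\]
so that, given $\theta_{\centerdot} = 1$, the joint density of $(\boldsymbol{x}, \boldsymbol{t})$ is $g(\boldsymbol{x}, \boldsymbol{t})\, I(\sum_k t_k > 0) / [1 - (1-\pi_2)^n]$. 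The two computational facts that drive everything are: summing $g$ over all $\boldsymbol{t} \in \{0,1\}^n$ factorizes as $\prod_{k=1}^n m(x_k)$, while the single all-zero term equals $\prod_{k=1}^n (1-\pi_2) f_0(x_k)$. Dividing by $\prod_k m(x_k)$ turns these into $1$ and ${\rm Lfdr}_{\centerdot}(\pi_2)$ respectively, which is exactly where the product of Lfdr's enters.

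For (\ref{eqn:A1}), I would write the required conditional probability as a ratio of sums of $g$: the numerator sums over $\boldsymbol{t}$ with $t_j = 0$ and $\sum_k t_k > 0$, the denominator over all $\boldsymbol{t}$ with $\sum_k t_k > 0$ (the truncation normalizer $1-(1-\pi_2)^n$ and the common density $f(\boldsymbol{x}\mid\theta_{\centerdot}=1)$ cancel). Fixing $t_j = 0$ and summing the remaining coordinates freely gives $(1-\pi_2) f_0(x_j) \prod_{k\neq j} m(x_k)$, from which I subtract the all-zero term; the denominator is $\prod_k m(x_k)$ minus the same all-zero term. Dividing numerator and denominator by $\prod_k m(x_k)$ collapses them to ${\rm Lfdr}_j(\pi_2) - {\rm Lfdr}_{\centerdot}(\pi_2)$ and $1 - {\rm Lfdr}_{\centerdot}(\pi_2)$, giving (\ref{eqn:A1}).

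For (\ref{eqn:A2}), I would apply Bayes' theorem at the level of $\theta_{\centerdot}$, using $f(\boldsymbol{x} \mid \theta_{\centerdot}=0) = \prod_k f_0(x_k)$ and $f(\boldsymbol{x} \mid \theta_{\centerdot}=1) = \prod_k m(x_k)\,[1 - {\rm Lfdr}_{\centerdot}(\pi_2)]/[1-(1-\pi_2)^n]$ from the previous step. The key algebraic move is to rewrite $\prod_k f_0(x_k) = {\rm Lfdr}_{\centerdot}(\pi_2)\, \prod_k m(x_k)/(1-\pi_2)^n$, which follows from the definition ${\rm Lfdr}_{\centerdot}(\pi_2) = (1-\pi_2)^n \prod_k f_0(x_k)/\prod_k m(x_k)$. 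Substituting the priors $(1-\pi_1)$ and $\pi_1$, cancelling the common factor $\prod_k m(x_k)$, and dividing numerator and denominator by $(1-\pi_1)/(1-\pi_2)^n$ isolates the coefficient $\frac{\pi_1}{1-\pi_1} \cdot \frac{(1-\pi_2)^n}{1-(1-\pi_2)^n}$, which is precisely $\lambda$, yielding (\ref{eqn:A2}).

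I do not expect a genuine obstacle here; the content is careful bookkeeping of the truncation indicator and the normalizing constant $1-(1-\pi_2)^n$. The one place to be careful is ensuring the truncation normalizer and the marginal $f(\boldsymbol{x}\mid\theta_{\centerdot}=1)$ cancel consistently between numerator and denominator in (\ref{eqn:A1}), and that the all-zero term is subtracted identically on both sides; once the summation identity $\sum_{\boldsymbol{t}} g(\boldsymbol{x},\boldsymbol{t}) = \prod_k m(x_k)$ is in hand, everything else is forced.
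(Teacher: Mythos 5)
Your proposal is correct and follows essentially the same route as the paper's proof: both compute the class-conditional densities of $(X_1,\ldots,X_n)$ given $\theta_{\centerdot}=0$ and $\theta_{\centerdot}=1$ by summing the product-Bernoulli weights over all configurations and subtracting the all-zero term (yielding $\prod_k m(x_k)$ and $\mathrm{Lfdr}_{\centerdot}(\pi_2)$ exactly as you describe), then obtain (\ref{eqn:A2}) by Bayes' theorem at the group level and (\ref{eqn:A1}) by the same fix-$t_j=0$-and-sum computation. The only differences are cosmetic (order of the two claims, and normalizing by $\prod_k m(x_k)$ at the end rather than carrying Lfdr notation throughout).
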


  \begin{proof}
    First, note that
    \begin{eqnarray}\label{eqn:A3}
      & & (X_1, \ldots, X_n) | \theta_{\centerdot} = 0 \sim  \prod_{j=1}^n f_0(x_j) = \frac{\prod_{j=1}^n m(x_j)}{(1-\pi_2)^n} {\rm Lfdr}_{\centerdot}(\pi_2),
    \end {eqnarray}
    and
    \begin {eqnarray}\label{eqn:A4}
      & & (X_1, \ldots, X_n) | \theta_{\centerdot} = 1 \nonumber \\ & \sim & \frac{1}{1-(1-\pi_2)^n} \sum_{\sum_{j=1}^n \theta_j >0}\left [ \prod_{j=1}^n \{(1-\theta_j)f_0(x_j)+\theta_j f_1(x_j)\} \prod_{j=1}^n \{\pi_2^{\theta_j}(1-\pi_2)^{1-\theta_j} \}\right ] \nonumber \\
& = & \frac{1}{1-(1-\pi_2)^n} \left [ \prod_{j=1}^n m(x_j) - (1-\pi_2)^n \prod_{j=1}^n f_0(x_j) \right ] \nonumber \\
& = & \frac{\prod_{j=1}^n m(x_j)}{1-(1-\pi_2)^n} \left [1- {\rm Lfdr}_{\centerdot}(\pi_2) \right ],
    \end {eqnarray}
    from which we get
    \begin {eqnarray}\label{eqn:A5}
      (X_1, \ldots, X_n) \sim \left \{ \frac{(1-\pi_1){\rm Lfdr}_{\centerdot}(\pi_2)}{(1-\pi_2)^n}  +\frac{\pi_1 [1-{\rm Lfdr}_{\centerdot}(\pi_2)]}{1-(1-\pi_2)^n}  \right \} \prod_{j=1}^n m(x_j).
    \end {eqnarray}
    Formula (\ref{eqn:A2}) follows upon dividing $(1-\pi_1)$ times (\ref{eqn:A3}) by (\ref{eqn:A5}).

When $\theta_j=0$, the conditional distribution of $X_1, \ldots, X_n$ given $\theta_{\centerdot} = 1$ can be obtained similar to that in (\ref{eqn:A4}) as follows:
\begin {eqnarray}\label{eqn:A6}
  & & \frac{(1-\pi_2)f_0(x_j)}{1-(1-\pi_2)^n} \sum_{\sum_{k(\neq j)=1}^n \theta_k >0}  [ \prod_{k(\neq j)=1}^n \{(1-\theta_k)f_0(x_k)+\theta_k f_1(x_k)\} \nonumber\\
  &\times & \prod_{k(\neq j)=1}^n \{\pi_2^{\theta_k}(1-\pi_2)^{1-\theta_k} \} ]  \\
& = & \frac{(1-\pi_2)f_0(x_j)}{1-(1-\pi_2)^n} \left [ \prod_{k(\neq j)=1}^n m(x_k) - (1-\pi_2)^{n-1} \prod_{k(\neq j)=1}^n f_0(x_k) \right ] \nonumber \\
& = & \frac{\prod_{j=1}^n m(x_j)}{1-(1-\pi_2)^n} ) \left [{\rm Lfdr}_{j}(\pi_2)- {\rm Lfdr}_{\centerdot}(\pi_2) \right ].
\end {eqnarray}
Formula (\ref{eqn:A1}) then follows upon dividing (\ref{eqn:A6}) by (\ref{eqn:A4}).

\end{proof}

\noindent {\it Proof of Theorem \ref{thm:1:optimal}.}
For notational simplicity, we will hide $\vX$ in $\delta_{ij}(\vX)$, $\delta_{ij}'(\vX)$, $Lfdr_{ij}(\vX)$.

First, we note the following inequalities:
\begin{equation}\label{eq:proof:1}
  \alpha \sum_{ij} \left( \delta_{ij}-\delta_{ij}' \right) \le \sum_{ij} \left ( \delta_{ij}-\delta_{ij}' \right ) Lfdr_{ij} \le c \sum_{ij} \left( \delta_{ij}-\delta_{ij}' \right),
\end{equation}
the first of which follows from the fact that the PFDR$_T (\delta') \le \alpha = PFDR_T (\delta)$, while the second one follows from $\sum_{ij} \left( \delta_{ij}-\delta_{ij}' \right)\left( c - Lfdr_{ij} \right) \ge 0$, because of the definition of $\delta_{ij}$.

 Since $\alpha = \sum_{ij}\delta_{ij}Lfdr_{ij}/ \max \{\sum_{ij}\delta_{ij}, 1 \} \le c$, we have from (\ref {eq:proof:1}) that $\sum_{ij}(\delta_{ij}-\delta_{ij}')Lfdr_{ij} \ge 0$, that is,
 \begin{equation}\label{eq:proof:2}
  \sum_{ij}(1-\delta_{ij})Lfdr_{ij} \le \sum_{ij}(1-\delta_{ij}')Lfdr_{ij}.
  \end {equation}
  With ${\rm PFNR}_T(\delta)$ and ${\rm PFNR}_T(\delta')$ denoting the PFNR$_T$ of $\delta$ and $\delta'$, respectively, we now note that
    \begin{eqnarray}
  & & c \left [ \frac {{\rm PFNR}_T(\delta)}{1-{\rm PFNR}_T(\delta)} - \frac {{\rm PFNR}_T(\delta')}{1-{\rm PFNR}_T(\delta')} \right ] \nonumber \\ & = &
  c \sum_{ij} \left[ \frac{(1-\delta_{ij})(1-Lfdr_{ij})}{ \sum_{ij}(1-\delta_{ij})Lfdr_{ij}} - \frac{(1-\delta_{ij}')(1-Lfdr_{ij}) }{\sum_{ij}(1-\delta_{ij}')Lfdr_{ij}}\right] \nonumber \\
  & = & \sum_{ij} \left[ \frac{1-\delta_{ij}}{ \sum_{ij}(1-\delta_{ij})Lfdr_{ij}} - \frac{1-\delta_{ij}'}{\sum_{ij}(1-\delta_{ij}')Lfdr_{ij}}\right] \left [ c(1-Lfdr_{ij})-(1-c)Lfdr_{ij} \right ] \nonumber \\ & \le & 0, \nonumber
  \end {eqnarray}
  with the inequality holding due to the definition of $\delta_{ij}$ and the inequality in (\ref{eq:proof:2}). Thus,  we have  \[
  PFNR_T(\delta) \le PFNR_T(\delta'),\] which proves the theorem.

\bibliographystyle{plainnat}
\bibliography{zhaozhg}

\newpage
\subsection{EM Algorithm}
In this section, we provide steps of the EM algorithm.
To better present the result, define $\pi_1^1=\pi_1, \pi_1^0=1-\pi_1, \pi_{2}^1=\pi_{2}$ and $\pi_{2}^0=1-\pi_{2}$. Let $A_i=\frac{1}{1-(\pi_2^0)^{n_i}}$. Consider $(\vx, \btheta)$ as the complete data. Then the complete log-likelihood function can be written as:
\begin{align*}
&l(\vx,\btheta)\\
&=\sum_i \sum_{l=0}^{1} I(\theta_i=l)(log\pi_{1}^{l}+logf(\vx_{i}|\theta_i=l))\\
&=\sum_i\left\{I(\theta_i=0)\left[log\pi_{1}^{0}+\sum_{j=1}^{n_i}logf(x_{ij}|\theta_i=0)\right]+I(\theta_i=1)\left[log\pi_{1}^{1}+logf(\vx_{i}|\theta_i=1)\right]\right\}\\
&=\sum_i\sum_{l=0}^{1}I(\theta_{i}=l)log\pi_{1}^{l}\\
&+\sum_i\sum_{j=1}^{n_i}\sum_{k=1}^{K}\left[ I(\theta_i=1,\theta_{j|i}=1,m_{j|i}=k)log( c_k \pi_{2}^1/A_i) + I(\theta_i=1,\theta_{j|i}=0)log( \pi_{2}^0/A_i) \right] \\
&+\sum_{i}\left[I(\theta_{i}=0)\sum_{j=1}^{n_i}logf_{0}(x_{ij})+I(\theta_{i}=1)\sum_{j=1}^{n_i}I(\theta_{j|i}=0)logf_{0}(x_{ij})\right]\\
&+\sum_iI(\theta_{i}=1)\sum_{j=1}^{n_i}\sum_{k=1}^{K}I(\theta_{j|i}=1,m_{j|i}=k)logf_{k}(x_{ij}|\theta_{j|i}=1),\\
\end{align*}
where  $m_{j|i}=k$ implies that $x_{ij}$ is generated from $N(\mu_k,\sigma_k^2)$.

The expected value of the complete-data log-likelihood $l(\vx,\btheta)$ with respect to the unknown $\theta_i,\theta_{j|i}$, given the observed data $\vx$ and the current value $\bbeta'$ of the parameter is:
\begin{align*}
&Q(\bbeta,\bbeta')=E\left[ l (\vx,\btheta)|\vx,\bbeta'\right]\\
&=\sum_i\sum_{l=0}^{1}log\pi_{1}^{l}P(\theta_{i}=l|\vx,\bbeta')\\
&\qquad{}+ \sum_i\sum_{j=1}^{n_i}\sum_{l=0}^1\log(\pi_{2}^l/A_i)P(\theta_i=1,\theta_{j|i}=l|\vx,\bbeta') \\
&\qquad{}+ \sum_{i}\sum_{j=1}^{n_i}\sum_{k=1}^{K}logc_kP(\theta_{i}=1,\theta_{j|i}=1,m_{j|i}=k|\vx,\bbeta')\\
&\qquad{}+\sum_{i}\sum_{j=1}^{n_i}logf_{0}(x_{ij})P(\theta_{i}=0|\vx,\bbeta')+\sum_{g}\sum_{j=1}^{n_i}logf_{0}(x_{ij})P(\theta_{i}=1|\vx,\bbeta')\\
&\qquad{}+\sum_i\sum_{j=1}^{n_i}\sum_{k=1}^{K}logf_{k}(x_{ij})P(\theta_{i}=1,\theta_{j|i}=1,m_{j|i}=k|\vx,\bbeta').\\
\end{align*}

Note that
\[
P(\theta_i=1, \theta_{j|i}=1|\vx, \beta' ) = (1- fdr_{j|i}(\bbeta'))(1-fdr_i(\bbeta')),
\]
and
\[
P(\theta_{i}=1,\theta_{j|i}=1,m_{j|i}=k|\vx,\bbeta') = P(\theta_i=1, \theta_{j|i}=1|\vx, \beta' ) * \frac{ c_k\frac{1}{\sigma_k}\phi(\frac{x_{ij}-\mu_k}{\sigma_{k}}) }{ \sum_{k=1}^K  c_k\frac{1}{\sigma_k}\phi(\frac{x_{ij}-\mu_k}{\sigma_{k}}) }.
\]

We want to maximize the $Q$ function which can be realized by maximizing each of these parts to get the estimates of $\pi_{1},\pi_{2}, c_{k}$ and $\mu_{k}, \sigma_{k}^2$, since these parts are not related. To maximize the first part with the restriction that $\pi_1^0+\pi_1^1=1$, using the Lagrange multipliers, we can find the maximizer for $\pi_1^1$ as

\begin{equation*}
\pi_{1}^{new}=1-\frac{\sum_iP(\theta_{i}=0|\vx,\bbeta')}{m}=1-\frac{\sum_{i}fdr_i(\bbeta')}{m},
\end{equation*}
The parameter $\pi_2^0$ is updated as 
\[
\pi_2^{new} =argmax\left\{\pi_2^0: \sum_i\sum_{j=1}^{n_i}\sum_{l=0}^1\log(\pi_{2}^l/A_i)P(\theta_i=1,\theta_{j|i}=l|\vx,\bbeta')  \right\}.
\]
The parameter $c_k$ is updated as 
\begin{align*}
&c_{k}^{new}=\frac{\sum_i\sum_{j=1}^{n_i}P(\theta_{j|i}=1,\theta_{i}=1,m_{j|i}=k|\vx,\bbeta')}{\sum_i\sum_{j=1}^{n_i}P(\theta_{i}=1,\theta_{j|i}=1|\vx,\bbeta')}.
\end{align*}
        
For the last part of $Q$ function, we know that $f_{0}(x)\sim N(0,1)$, $f_{k}(x) \sim N(\mu_{k},\sigma_{k}^2)$ with probability $c_{k}$ . Therefore, for each $k$, we need to find the MLEs for $\mu_{k}$ and $\sigma_{k}^2$ by maximizing the following log-likelihood function:
\begin{align*}
&\sum_{i}\sum_{j=1}^{n_i}\sum_{k=1}^{K}logf_{k}(x_{ij})P(\theta_{i}=1,\theta_{j|i}=1,m_{j|i}=k|\vx,\bbeta')\\
&=\sum_{i}\sum_{j=1}^{n_i}\sum_{k=1}^{K}\left[-\frac{1}{2}log\sigma_{k}^2-\frac{1}{2\sigma_{k}^2}(x_{ij}-
\mu_{k})^2\right]P(\theta_{i}=1,\theta_{j|i}=1,m_{j|i}=k|\vx,\bbeta').
\end{align*}

Taking derivatives with respect to $\mu_{l}$ and $\sigma_{1}^2$ and equating them to zero, we can get:
\begin{align*}
\mu_{k}^{new}=\frac{\sum_i\sum_{j=1}^{n_i}x_{ij}P(\theta_i=1,\theta_{j|i}=1,m_{j|i}=k|\vx,\bbeta')}{\sum_{i}\sum_{j=1}^{n_i}P(\theta_i=1,\theta_{j|i}=1,m_{j|i}=k|\vx,\bbeta')},\\
\sigma_{k}^{2new}=\frac{\sum_i\sum_{j=1}^{n_i}(x_{ij}-\mu_l)^2P(\theta_i=1,\theta_{j|i}=1,m_{j|i}=k|\vx,\bbeta')}
{\sum_i\sum_{j=1}^{n_i}P(\theta_i=1,\theta_{j|i}=1,m_{j|i}=k|\vx,\bbeta')}.\\
\end{align*}

\subsection{More simulation results on One-Way GATE 1}
In this section we list more simulation results on One-Way GATE 1.

\begin{figure}[H]
  \centering
  \includegraphics[height=50mm,width=50mm]{./figure/Bayes_fdr_gate1_G_100_mg_50_L_1_pi_WG_0.3.eps}
  \includegraphics[height=50mm,width=50mm]{./figure/Bayes_fdr_gate1_G_100_mg_50_L_2_pi_WG_0.3.eps}\\
    \includegraphics[height=50mm,width=50mm]{./figure/Bayes_rej_gate1_G_100_mg_50_L_1_pi_WG_0.3.eps}
  \includegraphics[height=50mm,width=50mm]{./figure/Bayes_rej_gate1_G_100_mg_50_L_2_pi_WG_0.3.eps}
  \caption{One-Way GATE 1: $m=100, n=50, \pi_2=0.3$. The  left and right panels correspond to $K=1$ and $2$, respectively.
  }\label{fig:gate1:fdr:s1}
\end{figure}

\begin{figure}[H]
  \centering
  \includegraphics[height=50mm,width=50mm]{./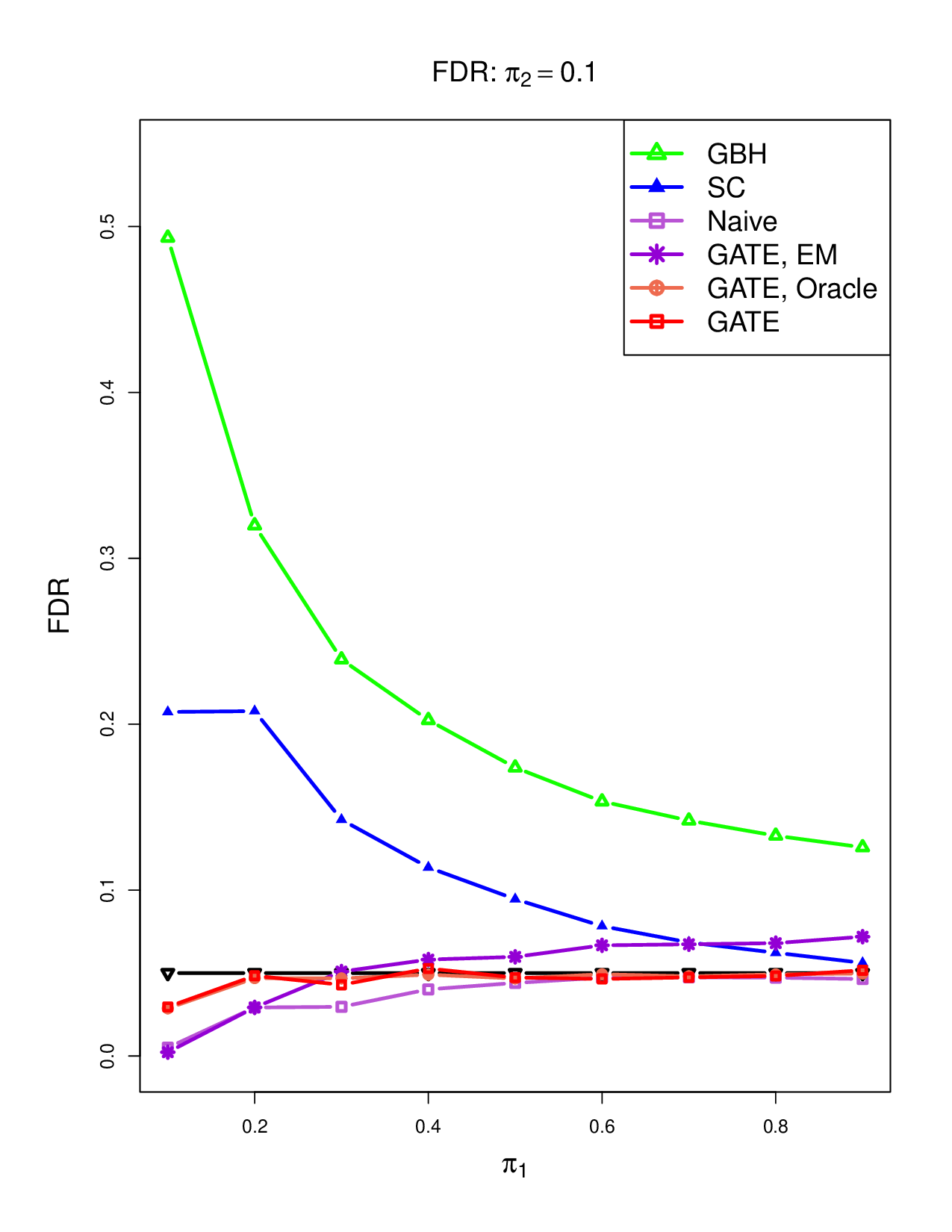}
  \includegraphics[height=50mm,width=50mm]{./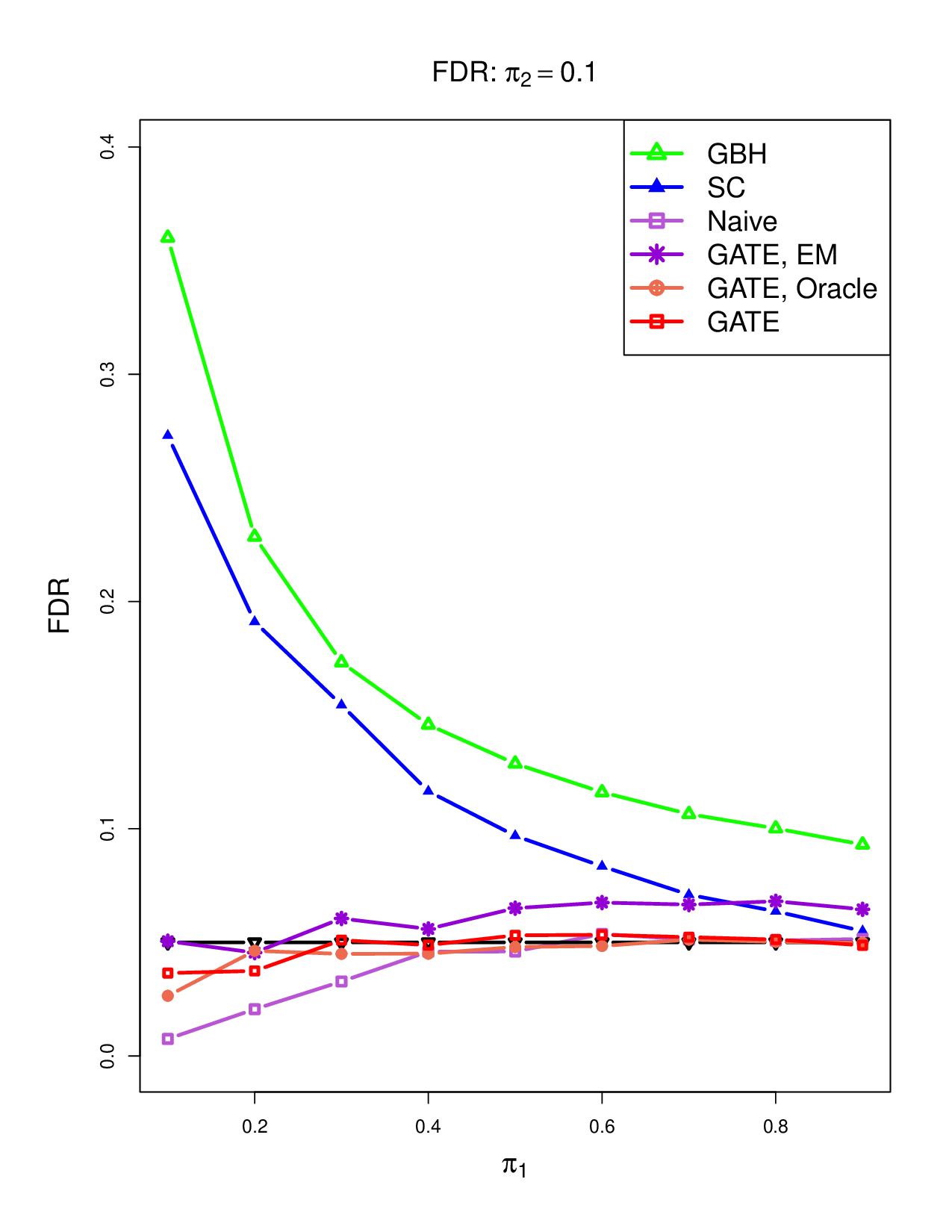}    \\
    \includegraphics[height=50mm,width=50mm]{./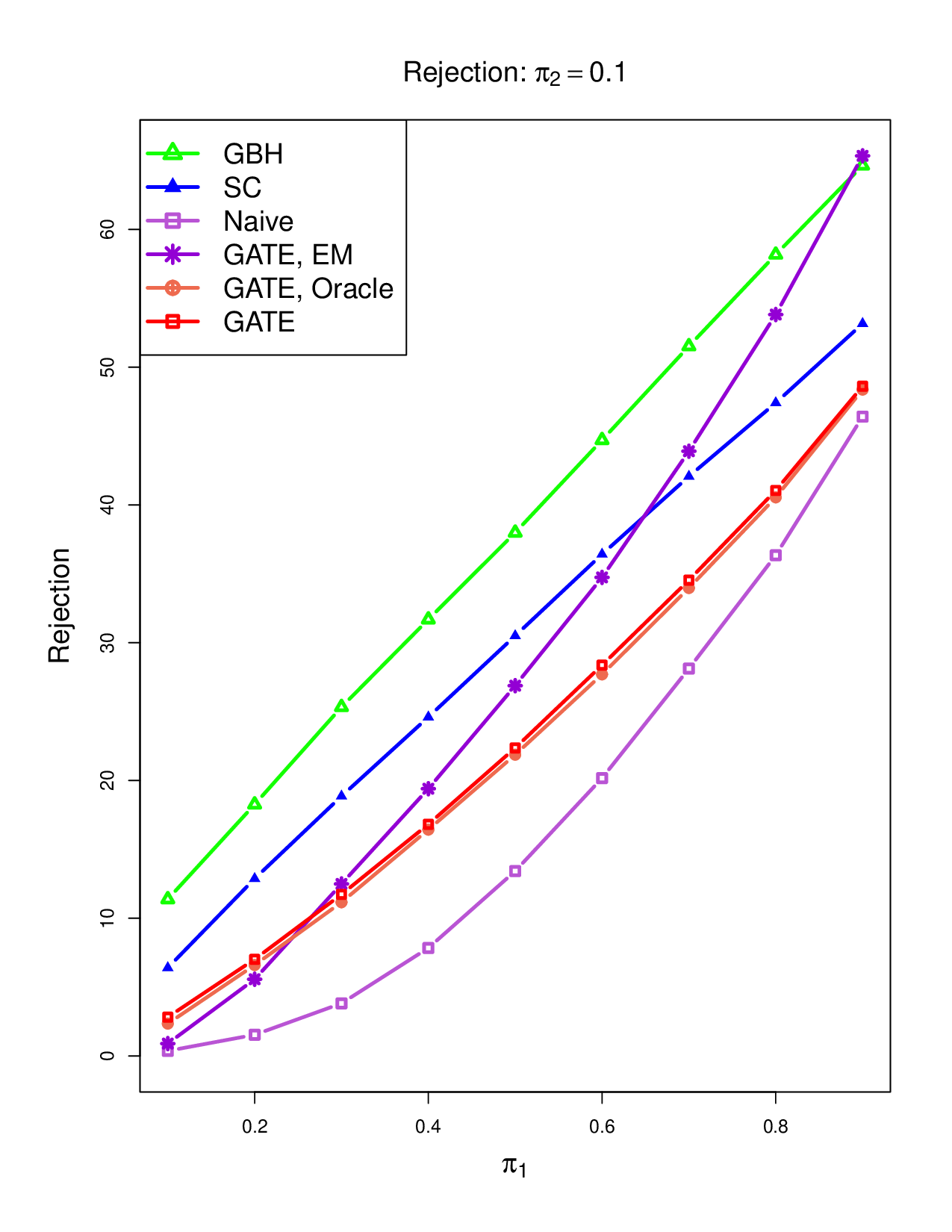}
  \includegraphics[height=50mm,width=50mm]{./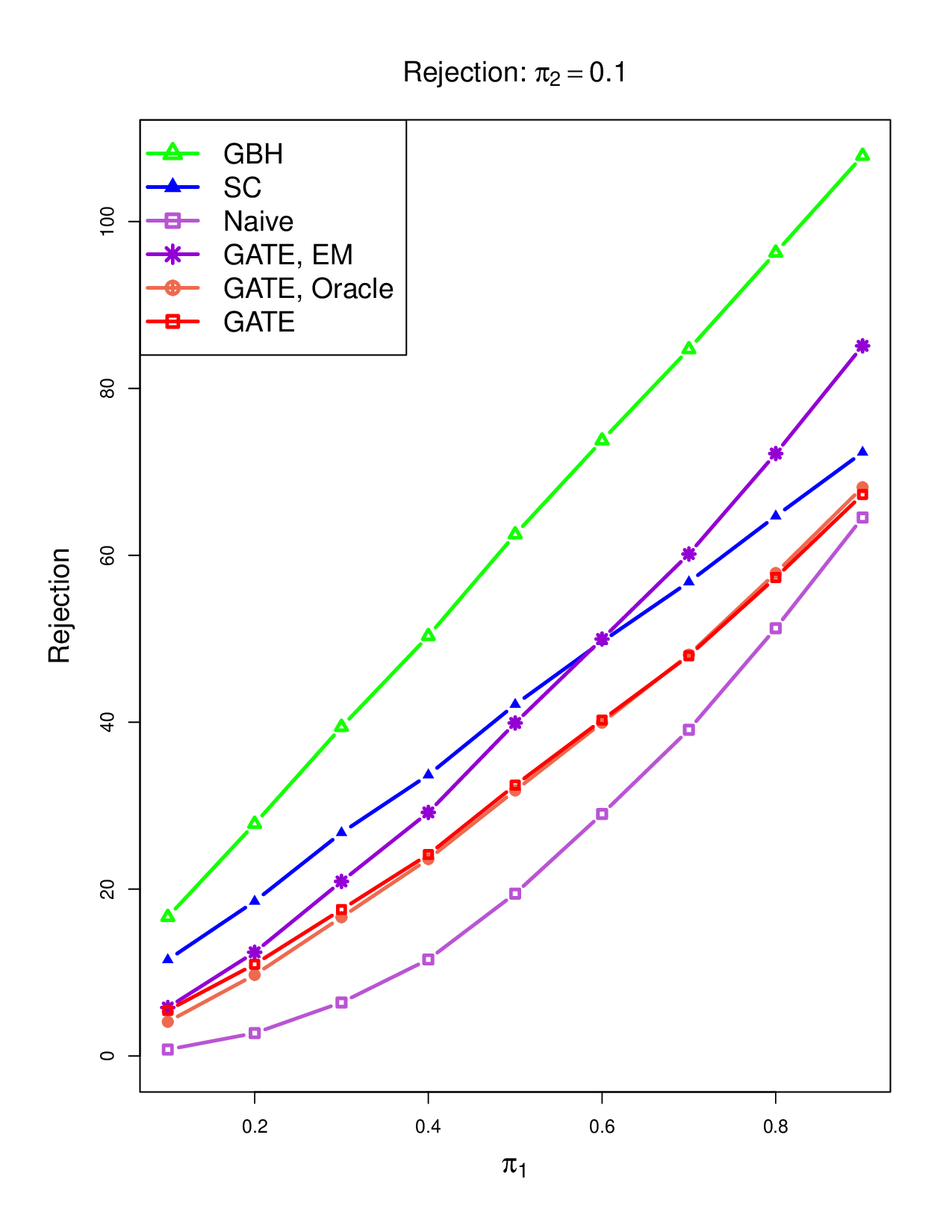}
  \caption{One-Way GATE 1: $m=100, n=50, \pi_2=0.1$. The  left and right panels correspond to $K=1$ and $2$, respectively.
  }\label{fig:gate1:fdr:s3}
\end{figure}

\begin{figure}[H]
  \centering
  \includegraphics[height=50mm,width=50mm]{./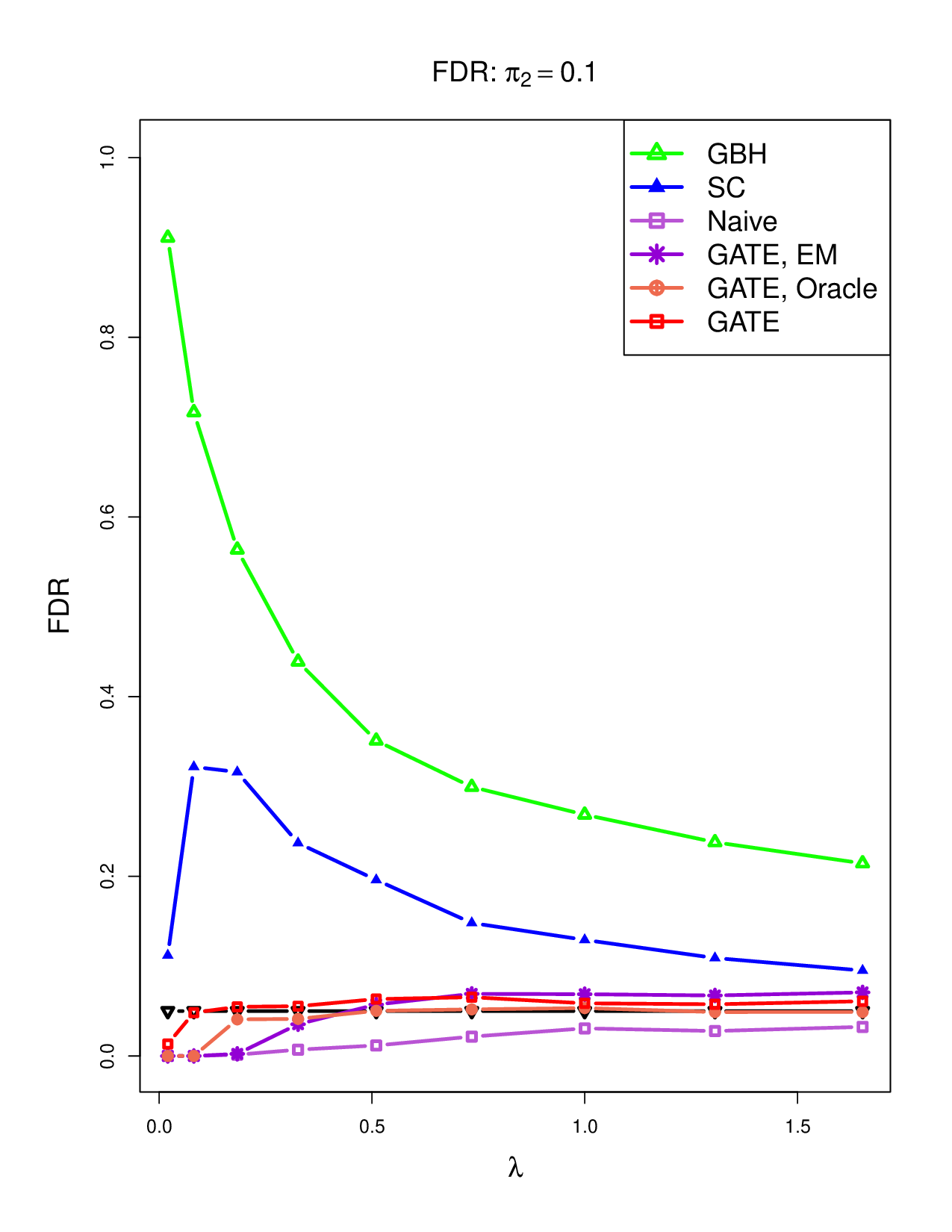}
  \includegraphics[height=50mm,width=50mm]{./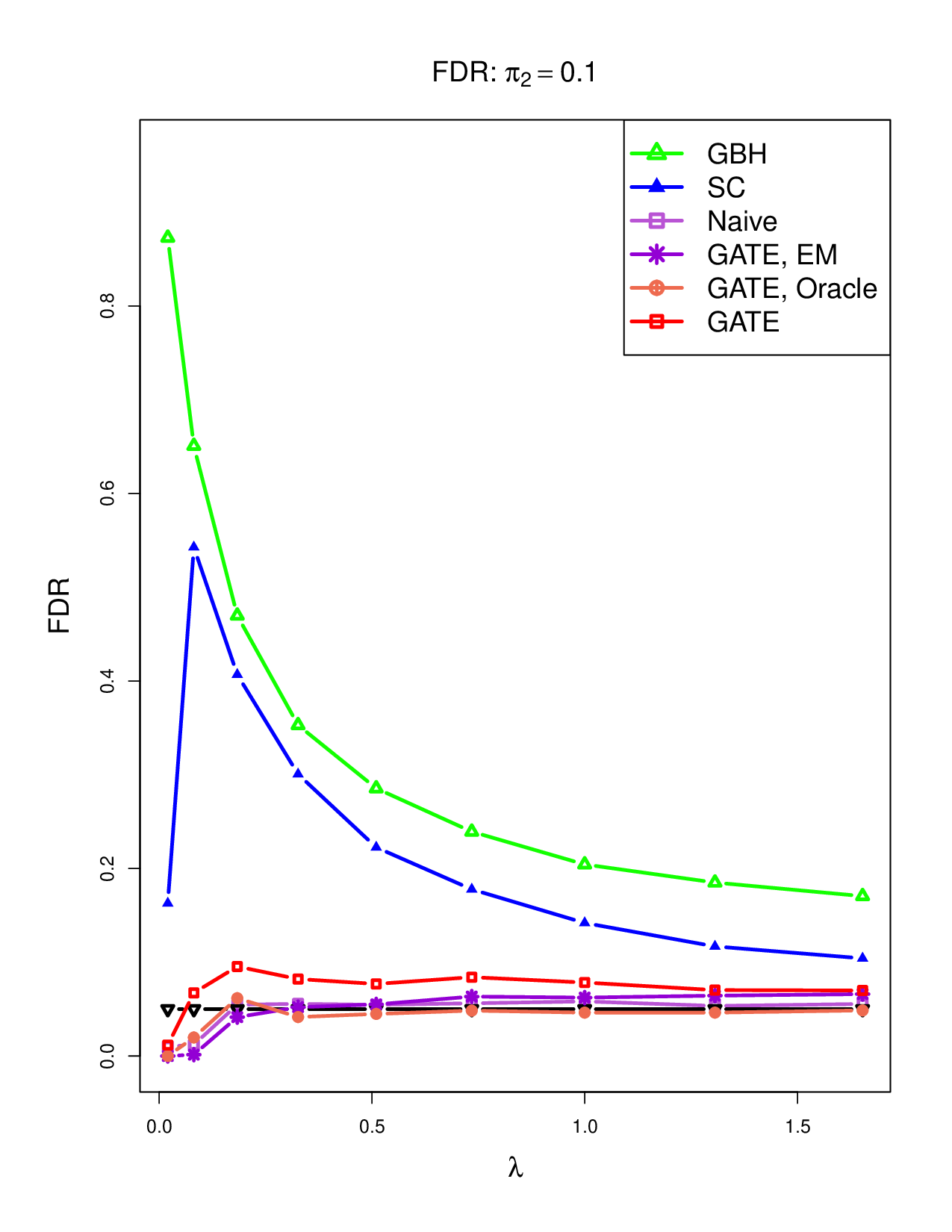}\\
  \includegraphics[height=50mm,width=50mm]{./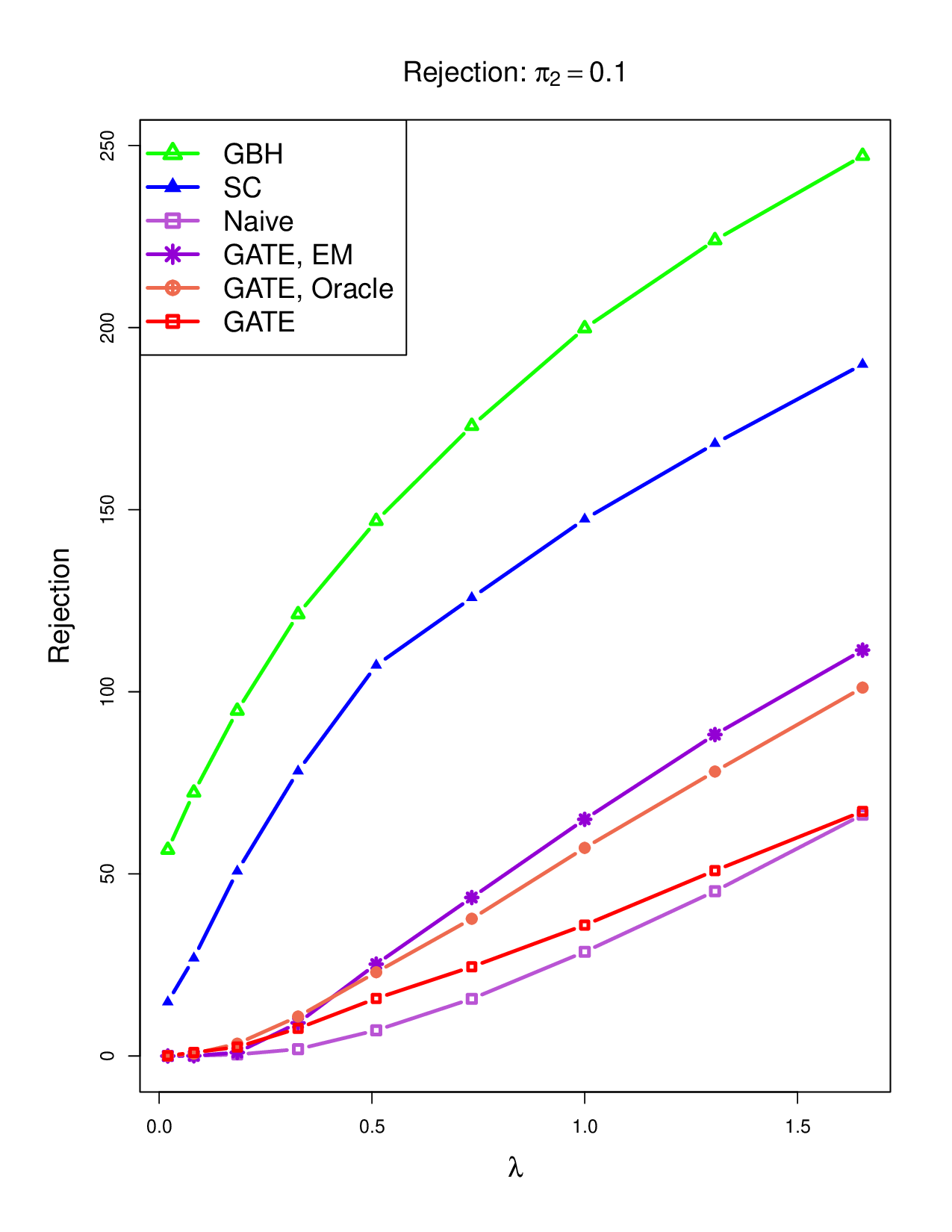}
  \includegraphics[height=50mm,width=50mm]{./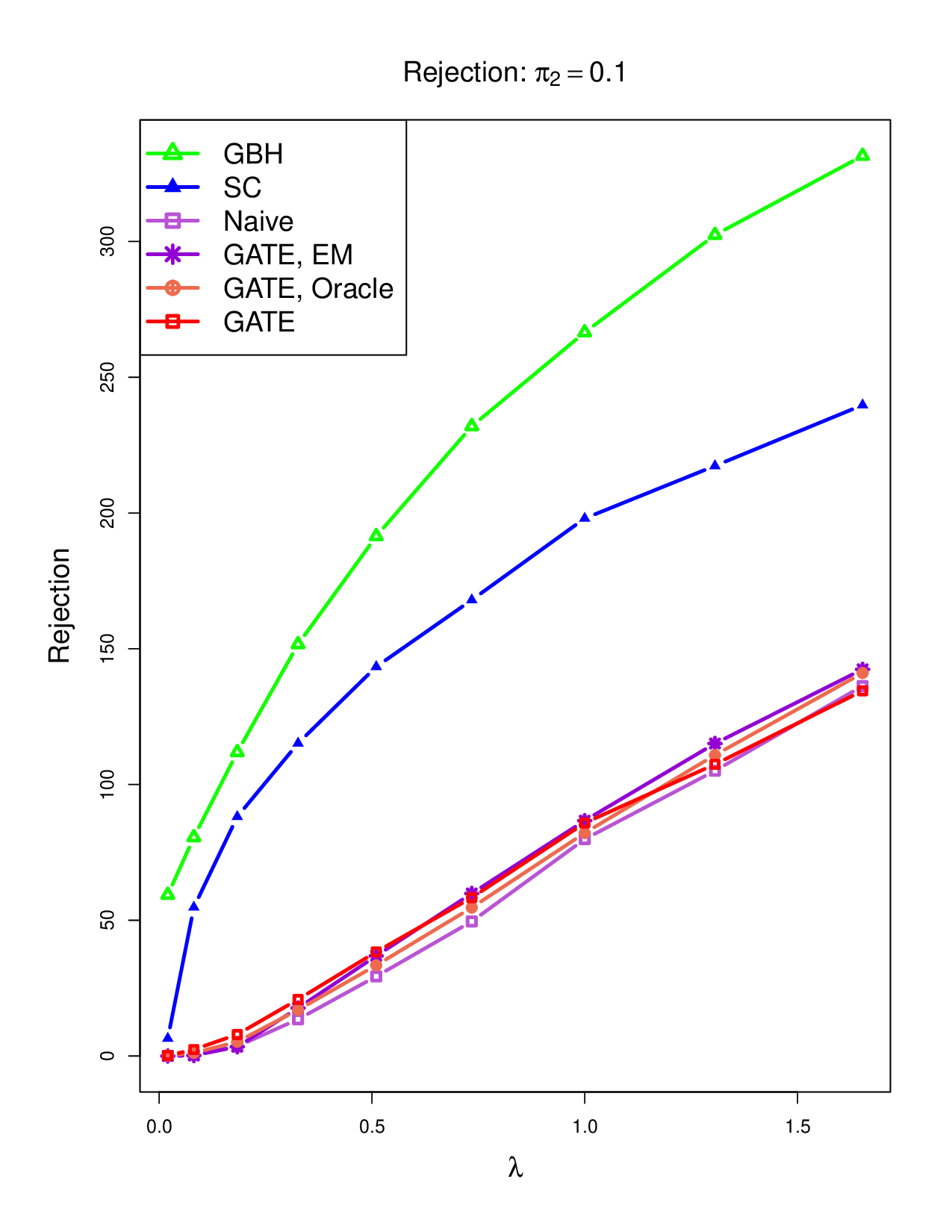}
  \caption{One-Way GATE 1: $m=1000, n=5, \pi_2=0.1$. The  left and right panels correspond to $K=1$ and $2$, respectively.
  }\label{fig:gate1:fdr:s4}
\end{figure}

\begin{figure}[H]
  \centering
  \includegraphics[height=50mm,width=50mm]{./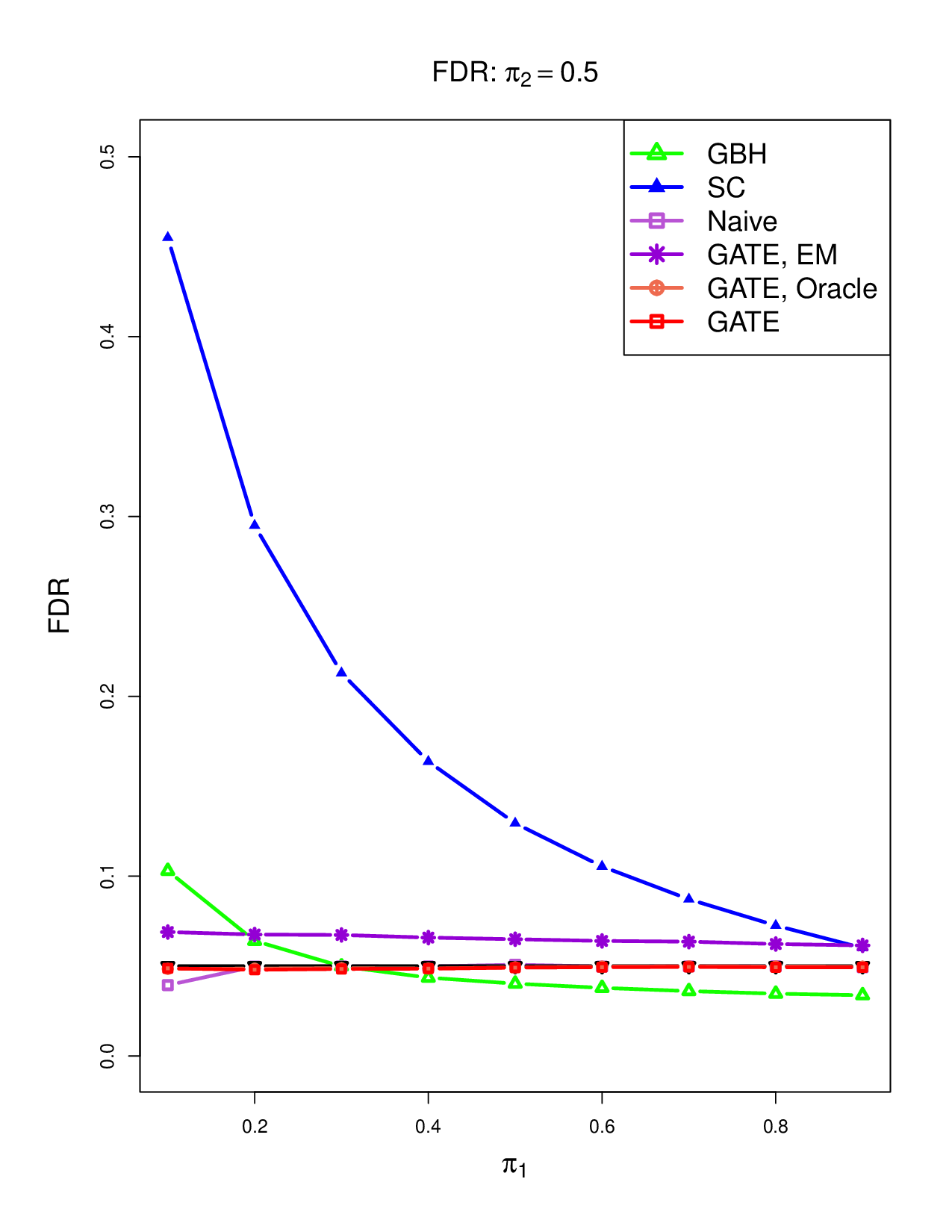}
  \includegraphics[height=50mm,width=50mm]{./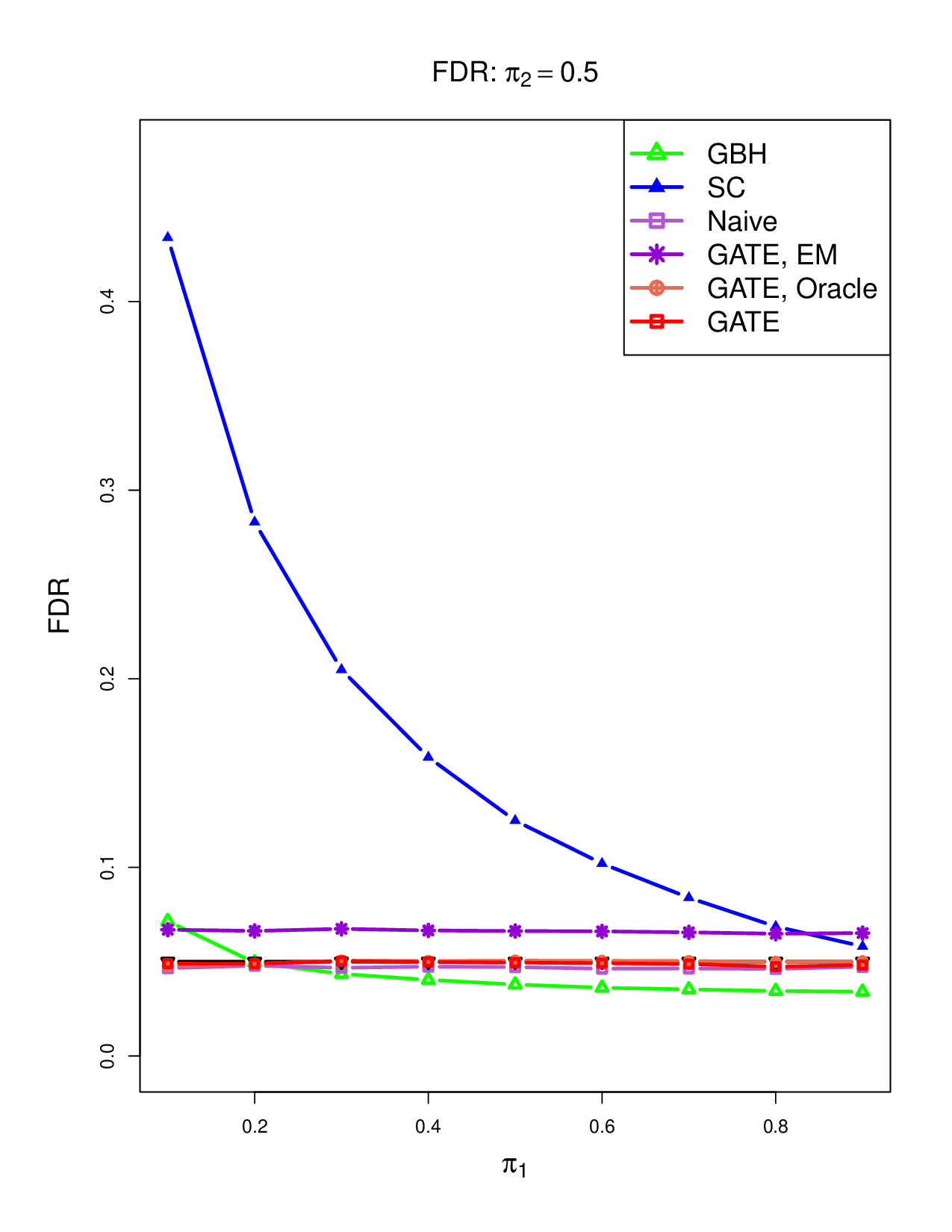}  \\
    \includegraphics[height=50mm,width=50mm]{./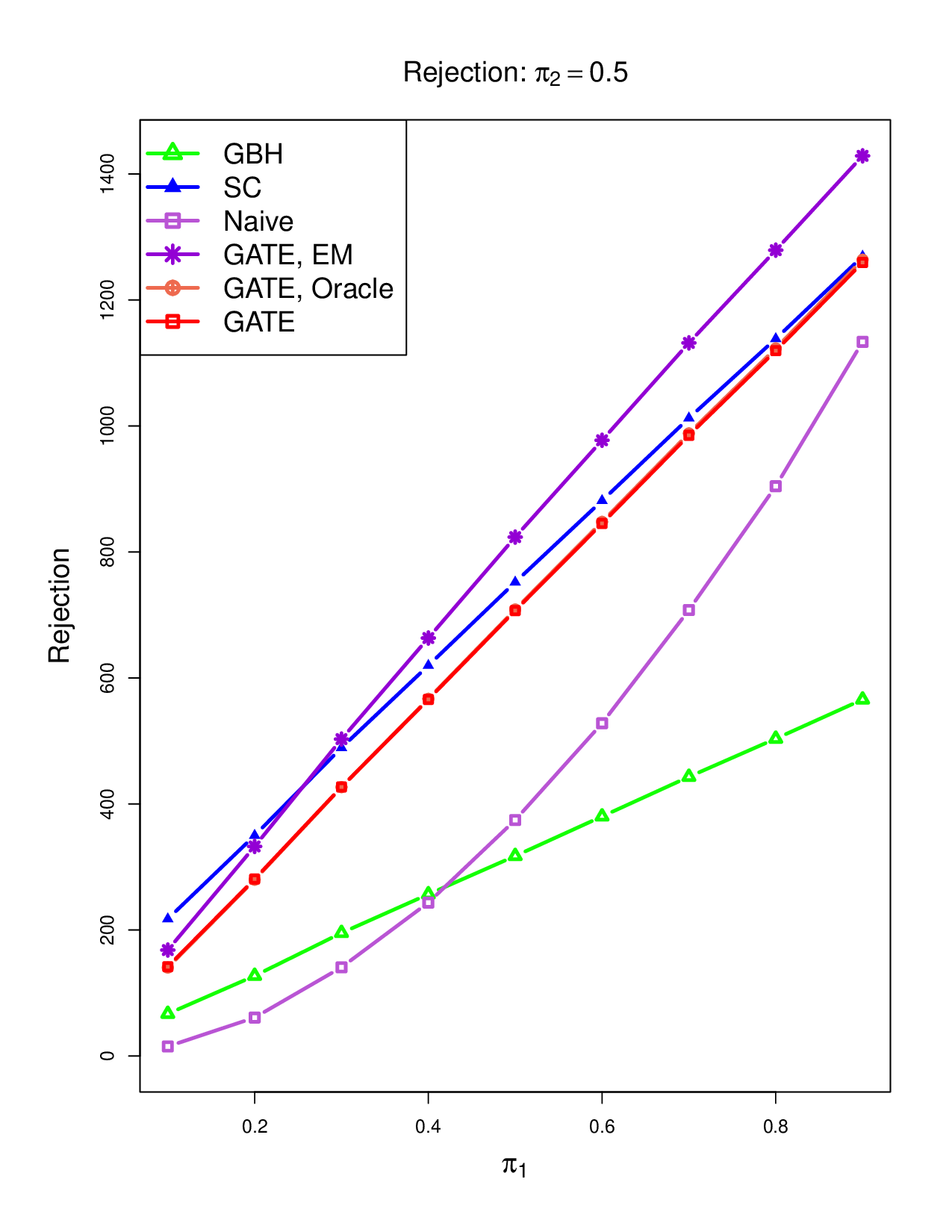}
  \includegraphics[height=50mm,width=50mm]{./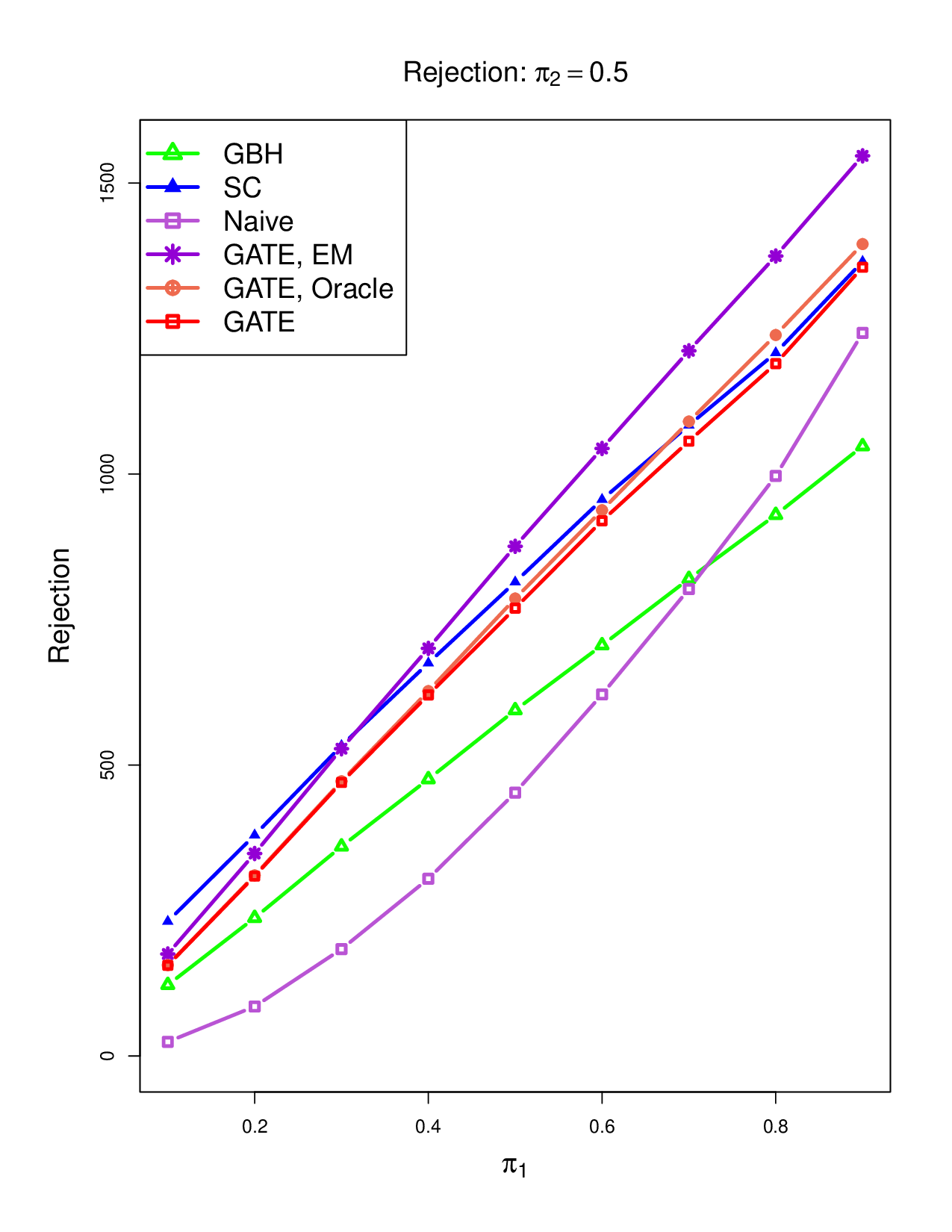}
  \caption{One-Way GATE 1: $m=100, n=50, \pi_2=0.5$. The  left and right panels correspond to $K=1$ and $2$, respectively.
  }\label{fig:gate1:fdr:s5}
\end{figure}

\begin{figure}[H]
  \centering
   
  \includegraphics[height=50mm,width=50mm]{./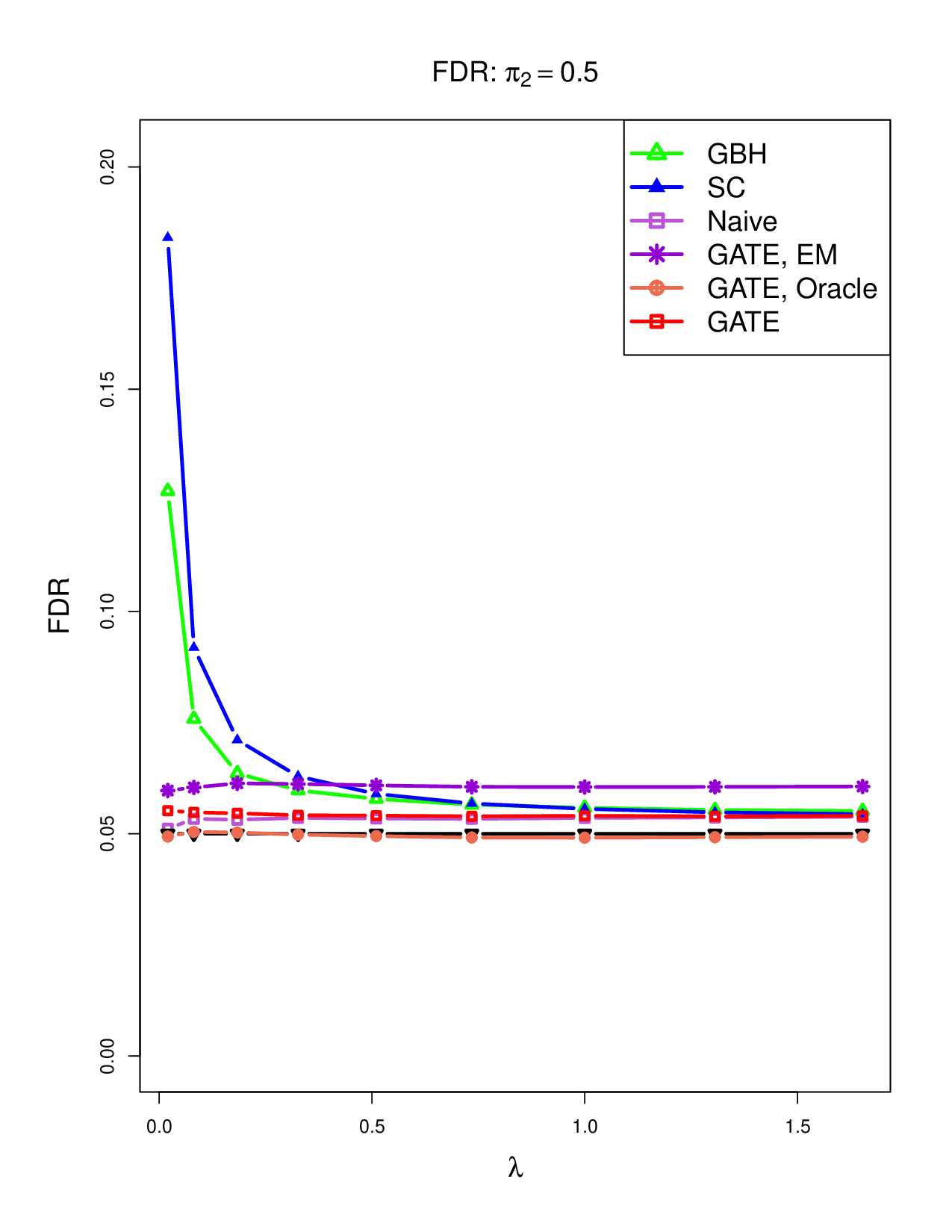}
  \includegraphics[height=50mm,width=50mm]{./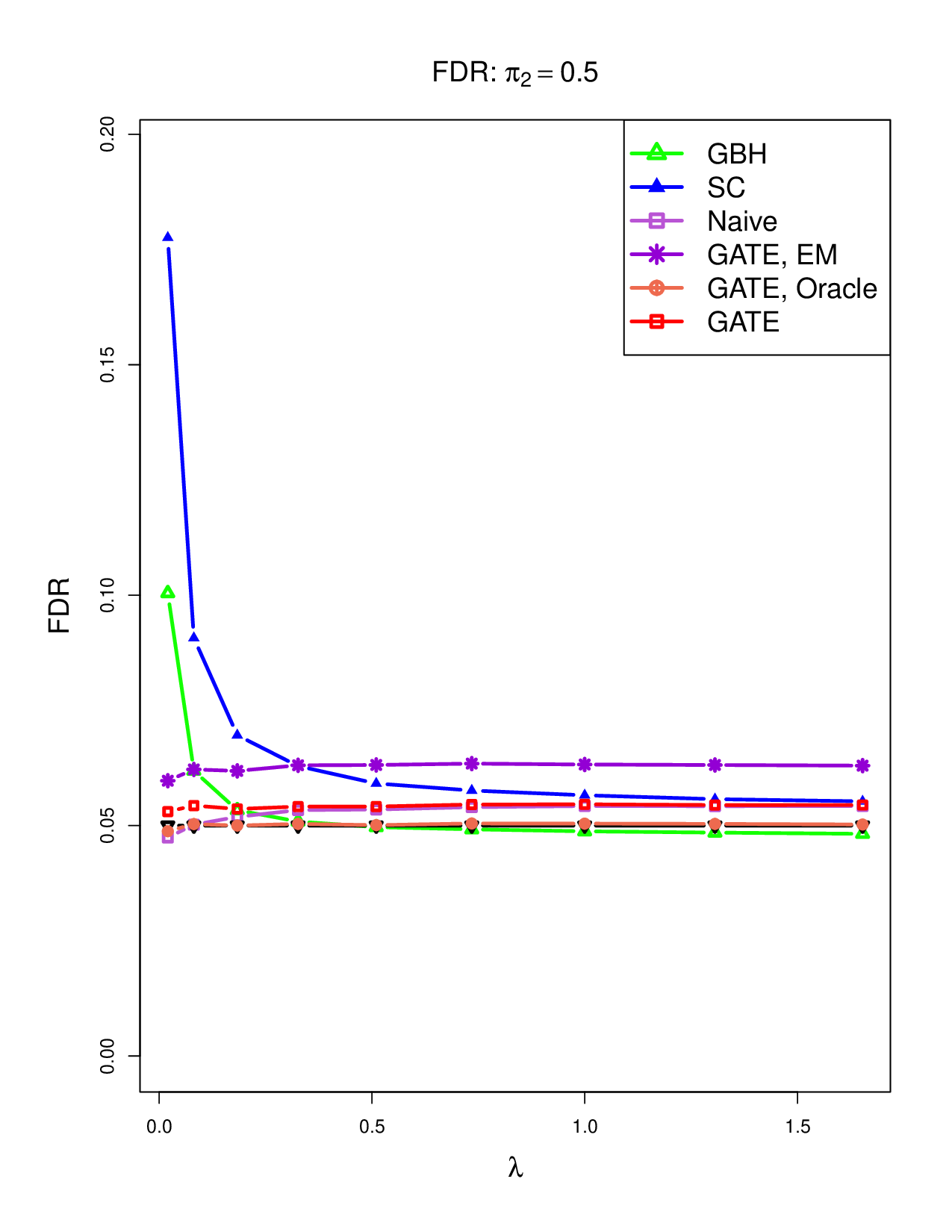}\\
  \includegraphics[height=50mm,width=50mm]{./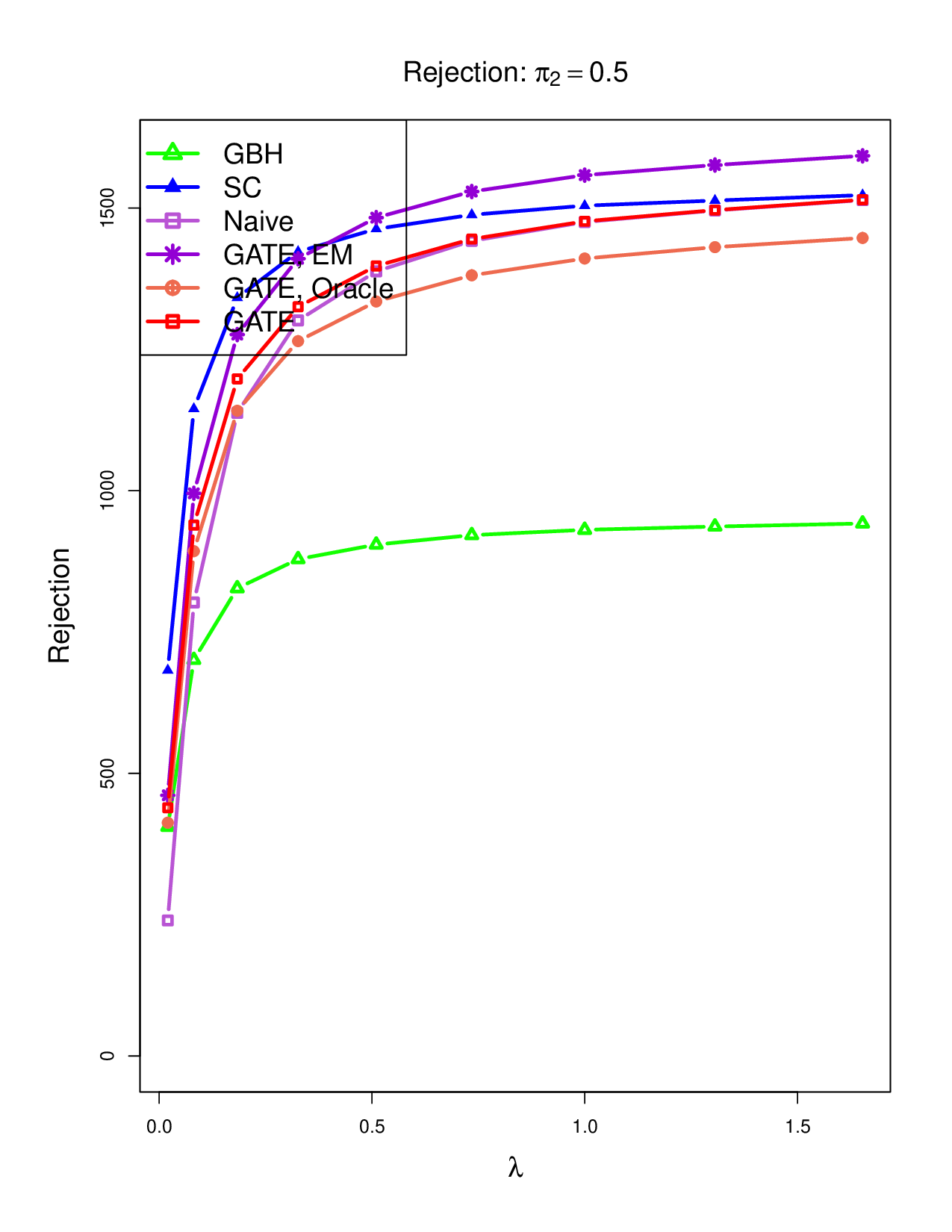}
  \includegraphics[height=50mm,width=50mm]{./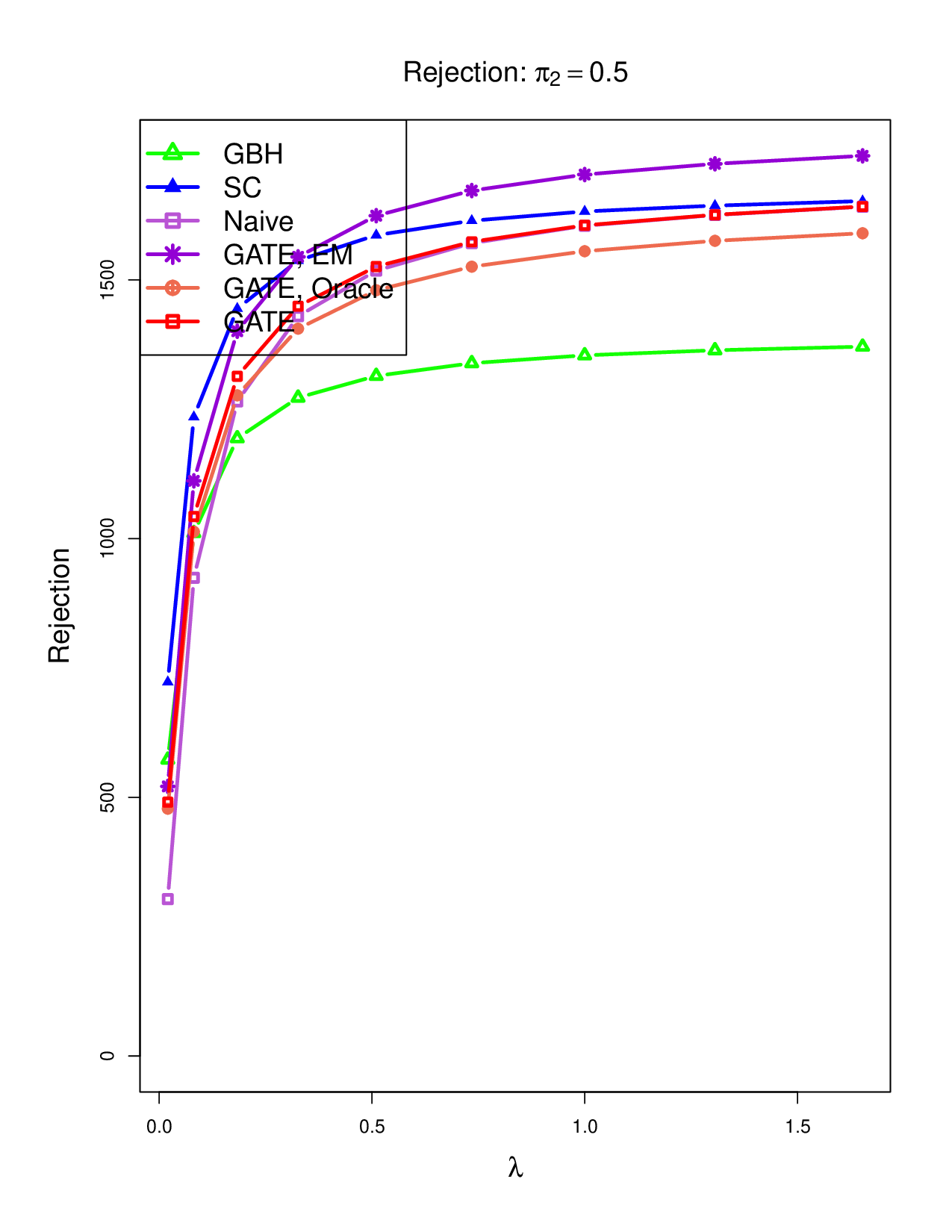}
  \caption{One-Way GATE 1: $m=1000, n=5, \pi_2=0.5$. The  left and right panels correspond to $K=1$ and $2$, respectively.
  }\label{fig:gate1:fdr:s6}
\end{figure}

\begin{figure}[H]
  \centering
  \includegraphics[height=50mm,width=50mm]{./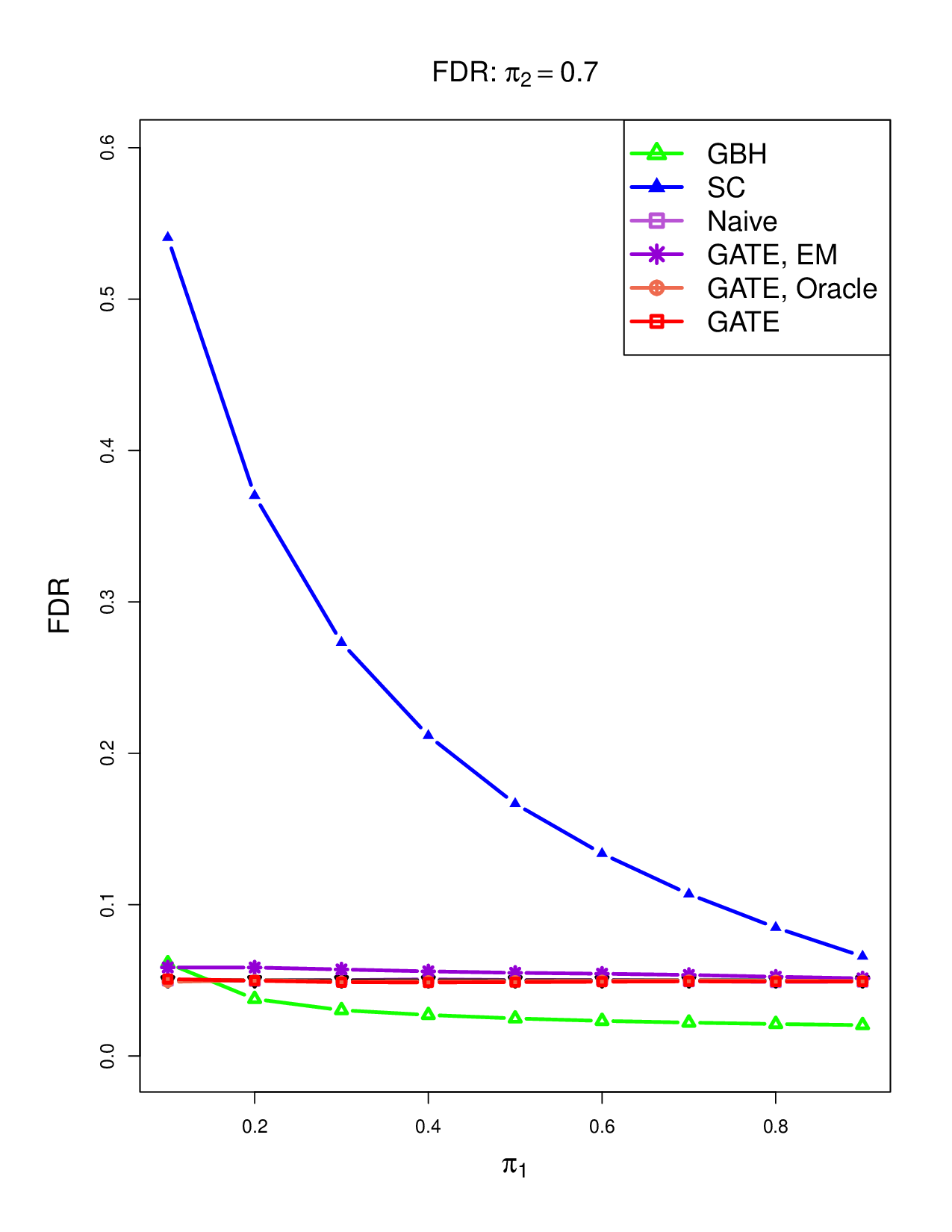}
  \includegraphics[height=50mm,width=50mm]{./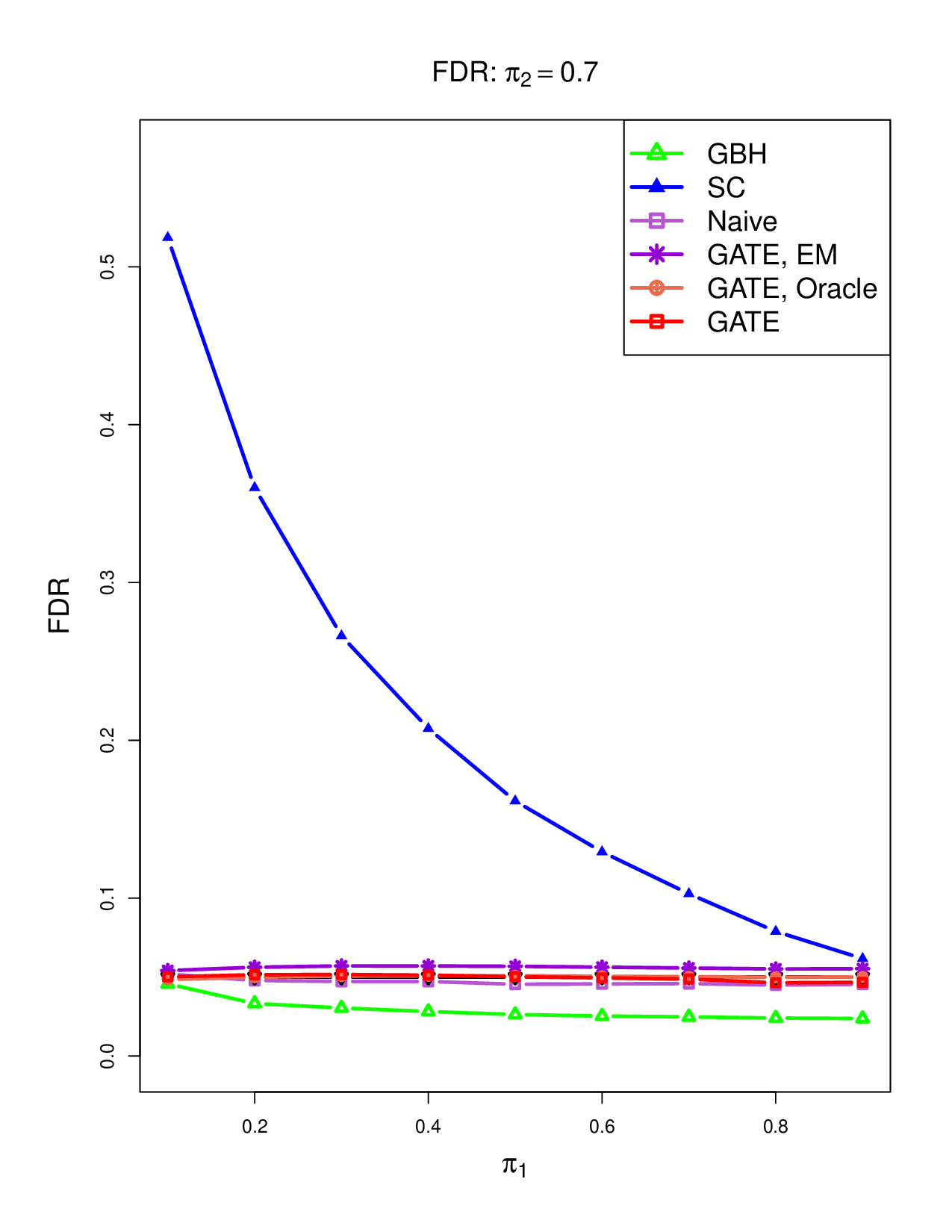}    \\
  \includegraphics[height=50mm,width=50mm]{./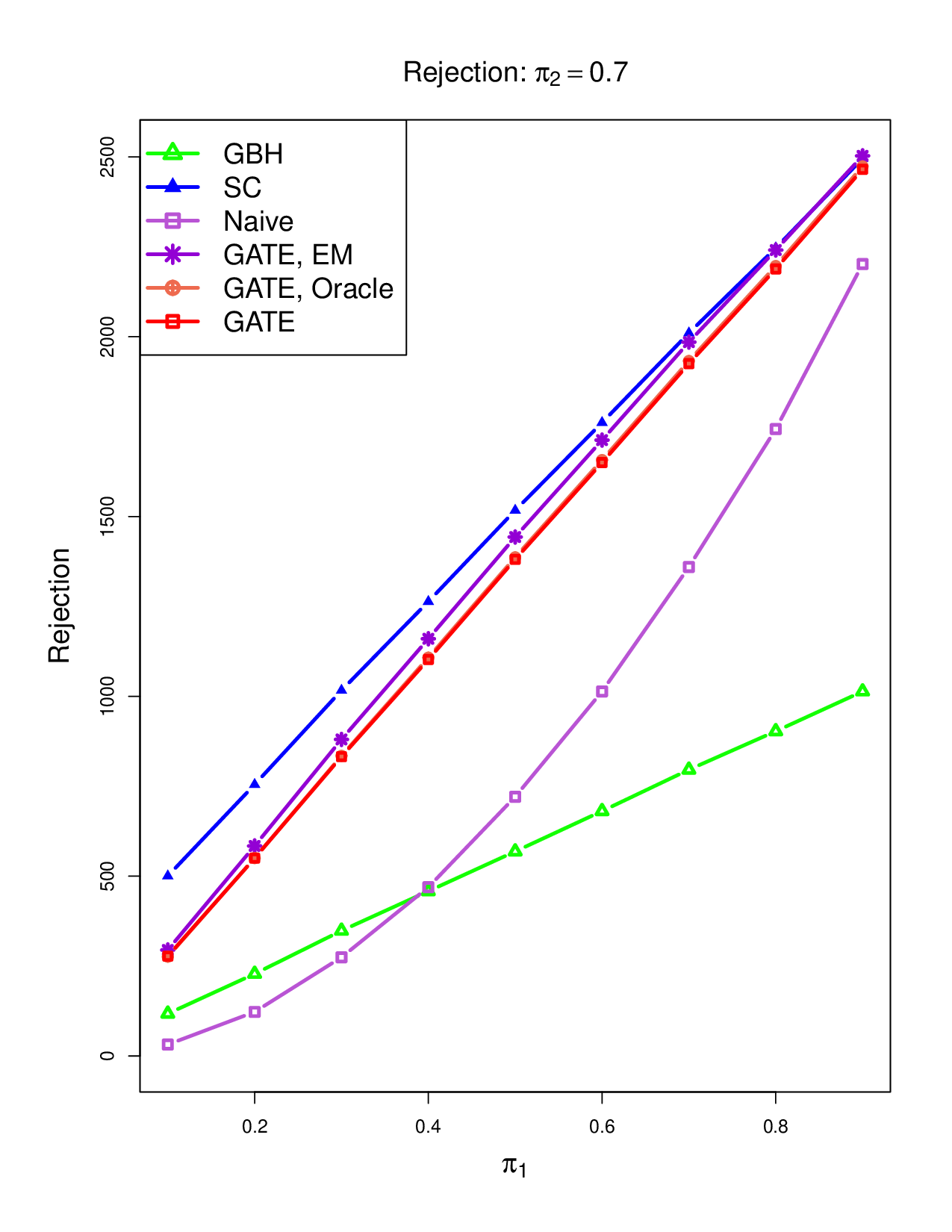}
  \includegraphics[height=50mm,width=50mm]{./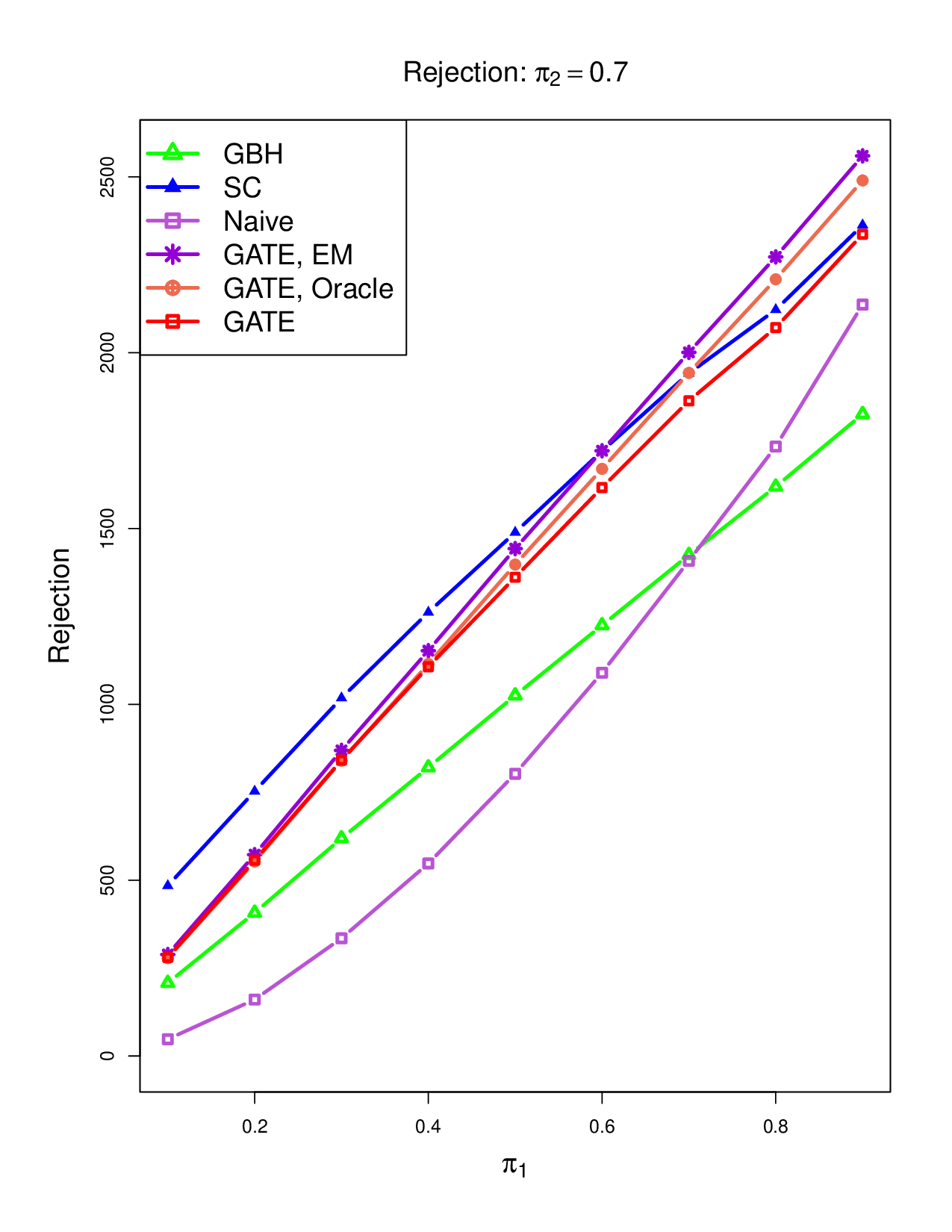}   
  \caption{One-Way GATE 1: $m=100, n=50, \pi_2=0.7$. The  left and right panels correspond to $K=1$ and $2$, respectively.
  }\label{fig:gate1:fdr:s7}
\end{figure}

\begin{figure}[H]
  \centering
  \includegraphics[height=50mm,width=50mm]{./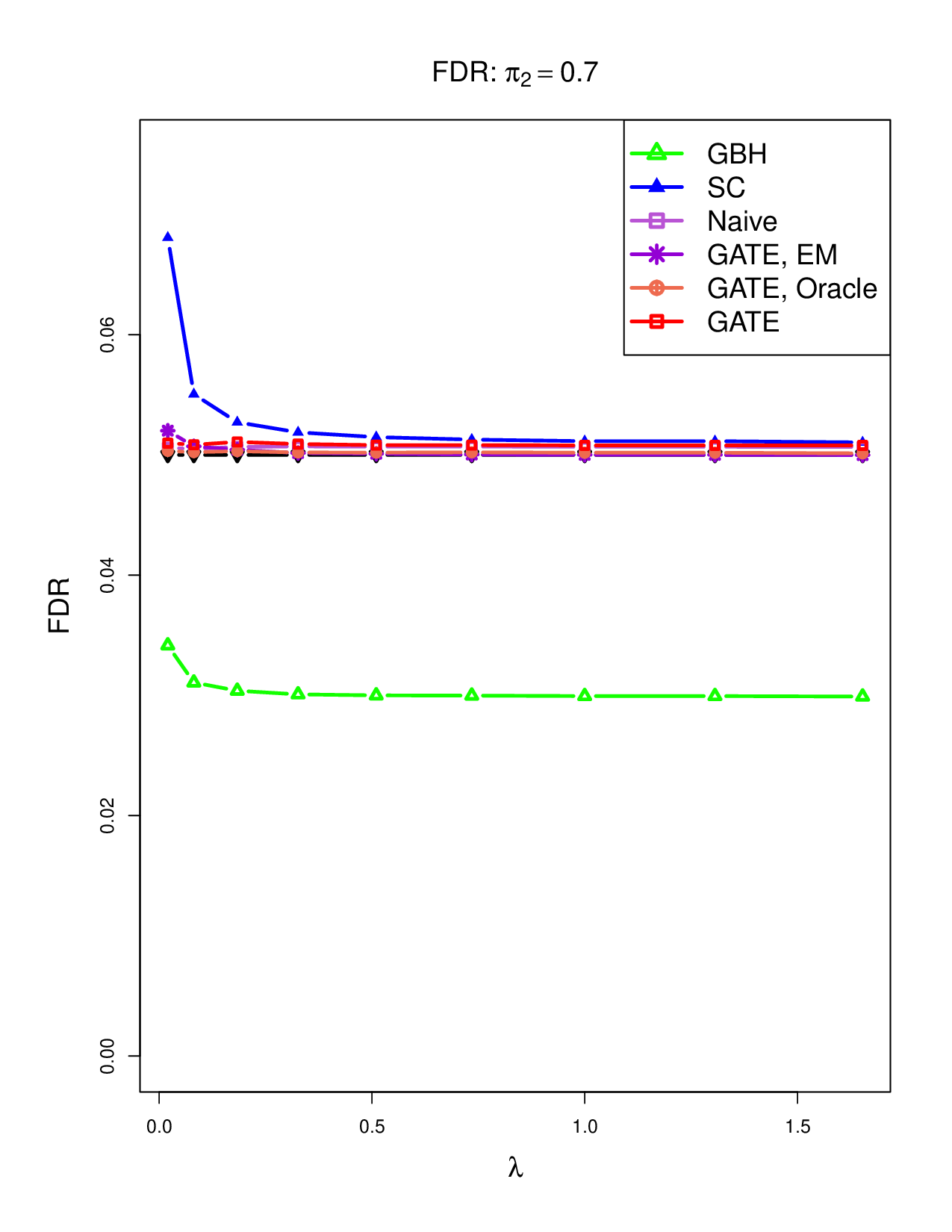}
  \includegraphics[height=50mm,width=50mm]{./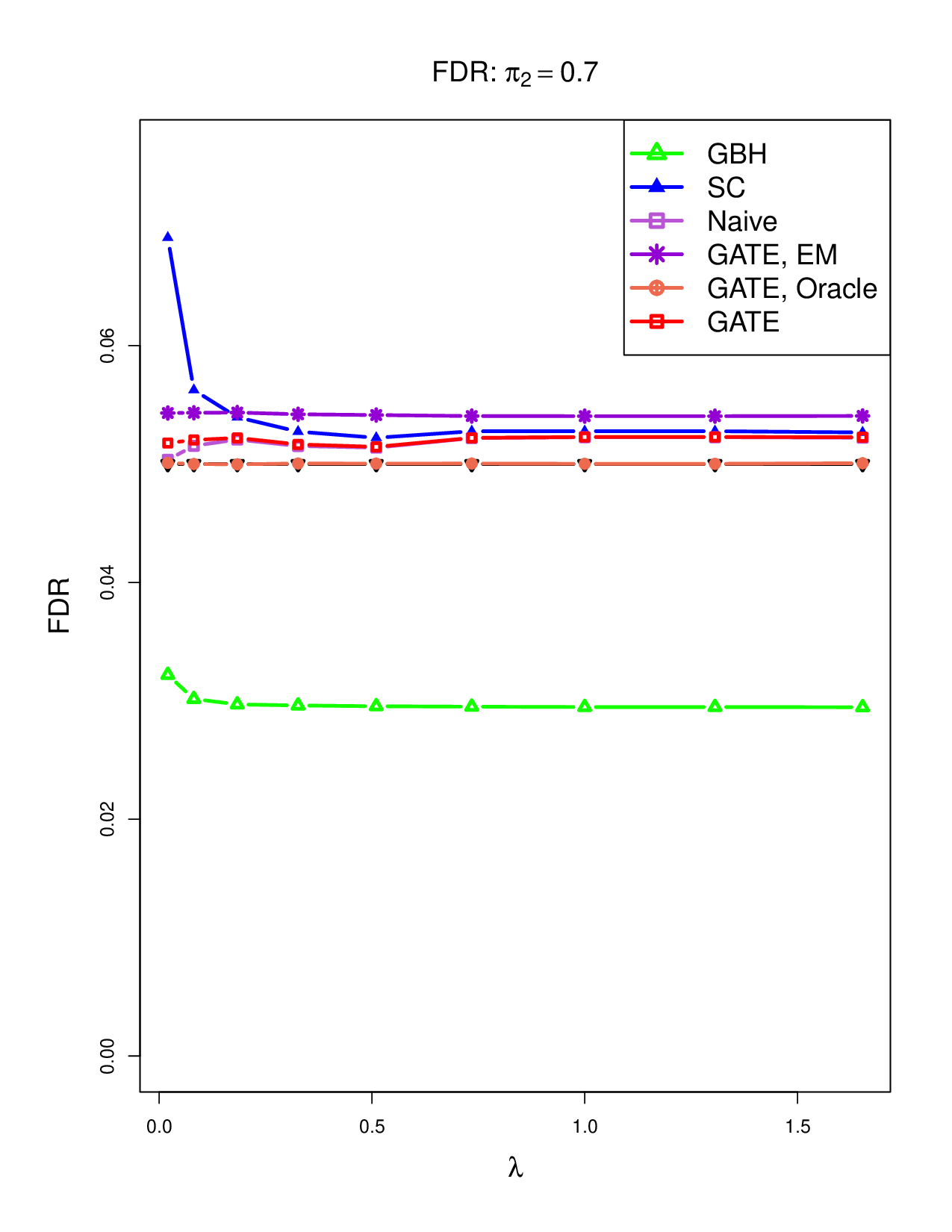}\\
  \includegraphics[height=50mm,width=50mm]{./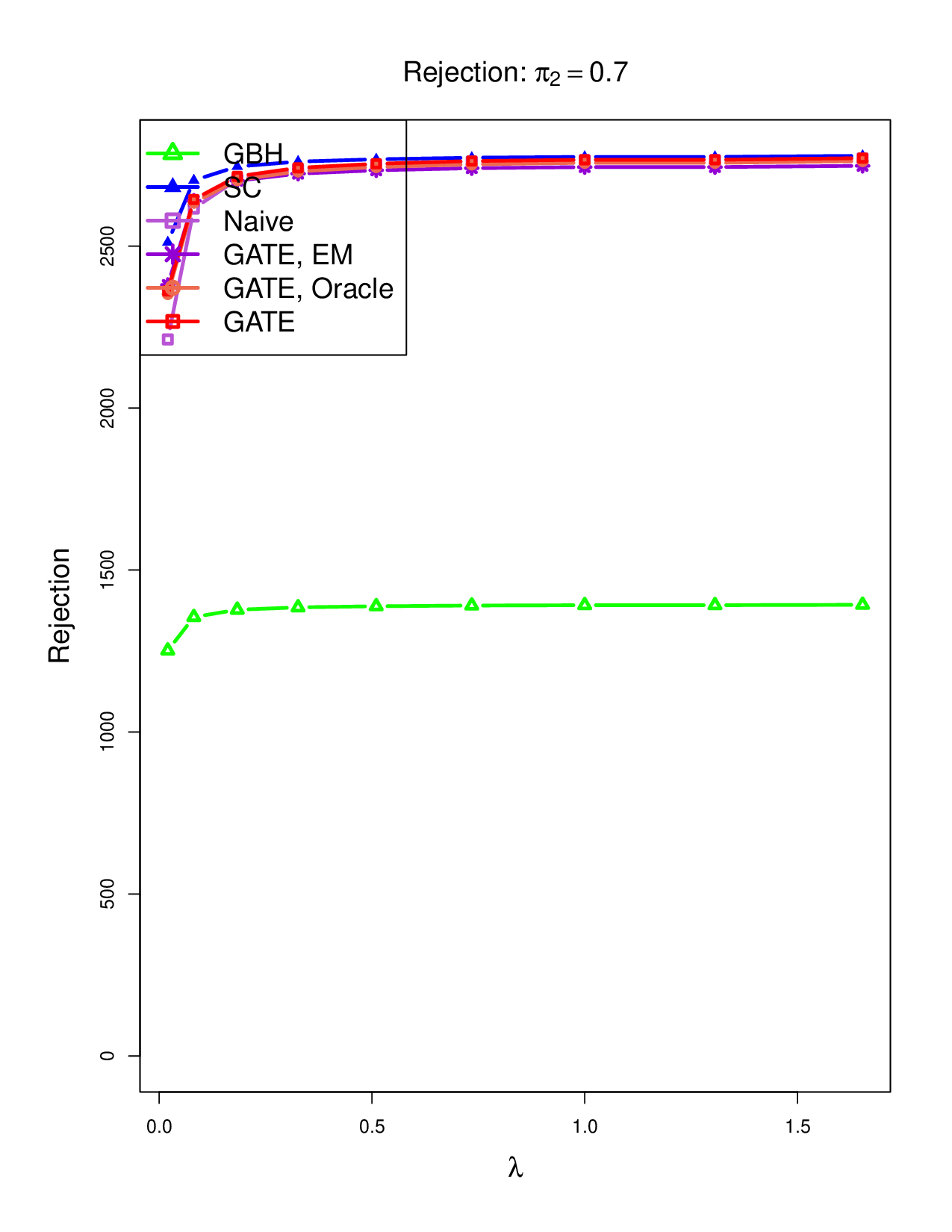}
  \includegraphics[height=50mm,width=50mm]{./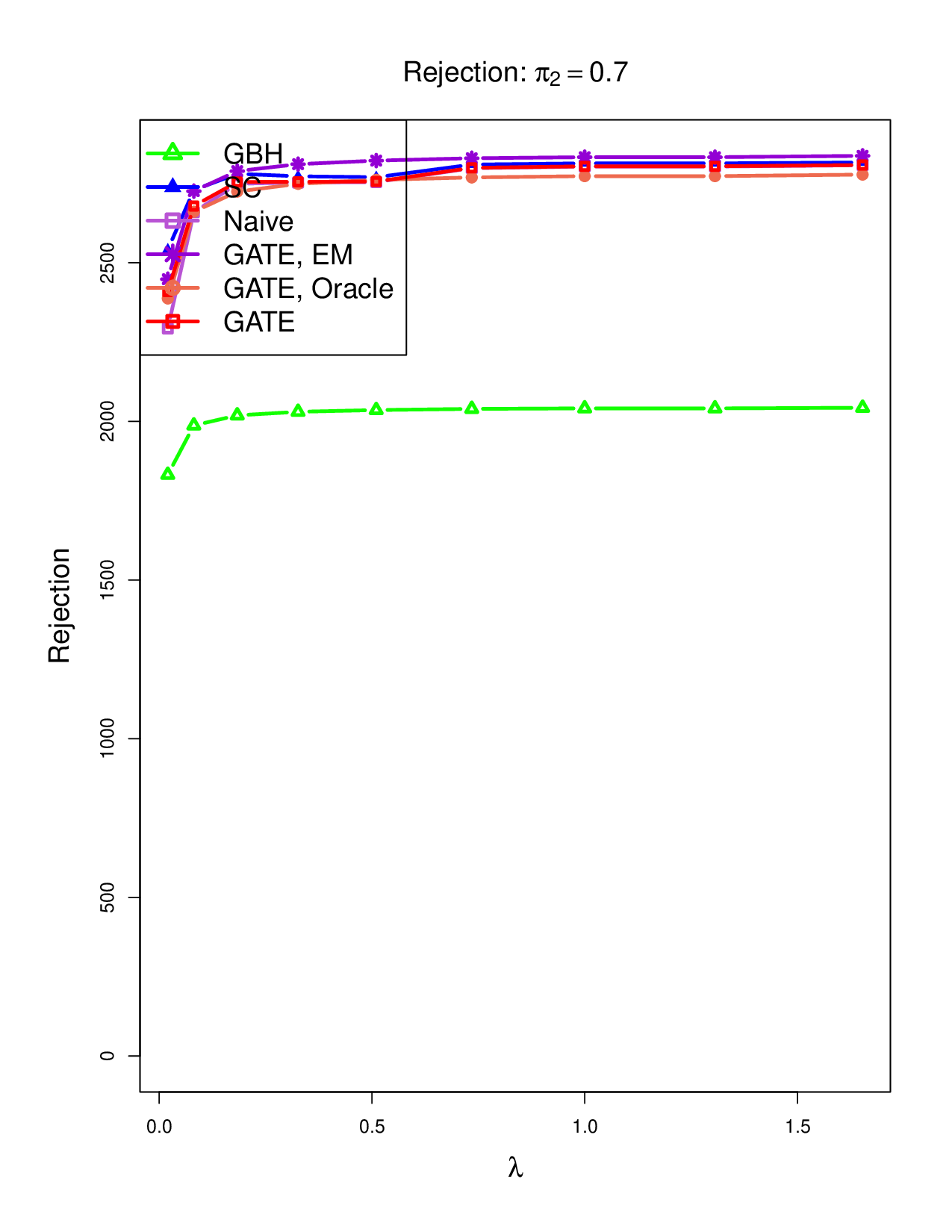}
  \caption{One-Way GATE 1: $m=1000, n=5, \pi_2=0.7$. The  left and right panels correspond to $K=1$ and $2$, respectively.
  }\label{fig:gate1:fdr:s8}
\end{figure}

\subsection{More simulation results on One-Way GATE 2}
In this section we list more simulation results on One-Way GATE 2.

\begin{figure}[H]
   \centering
   \includegraphics[height=40mm,width=40mm]{./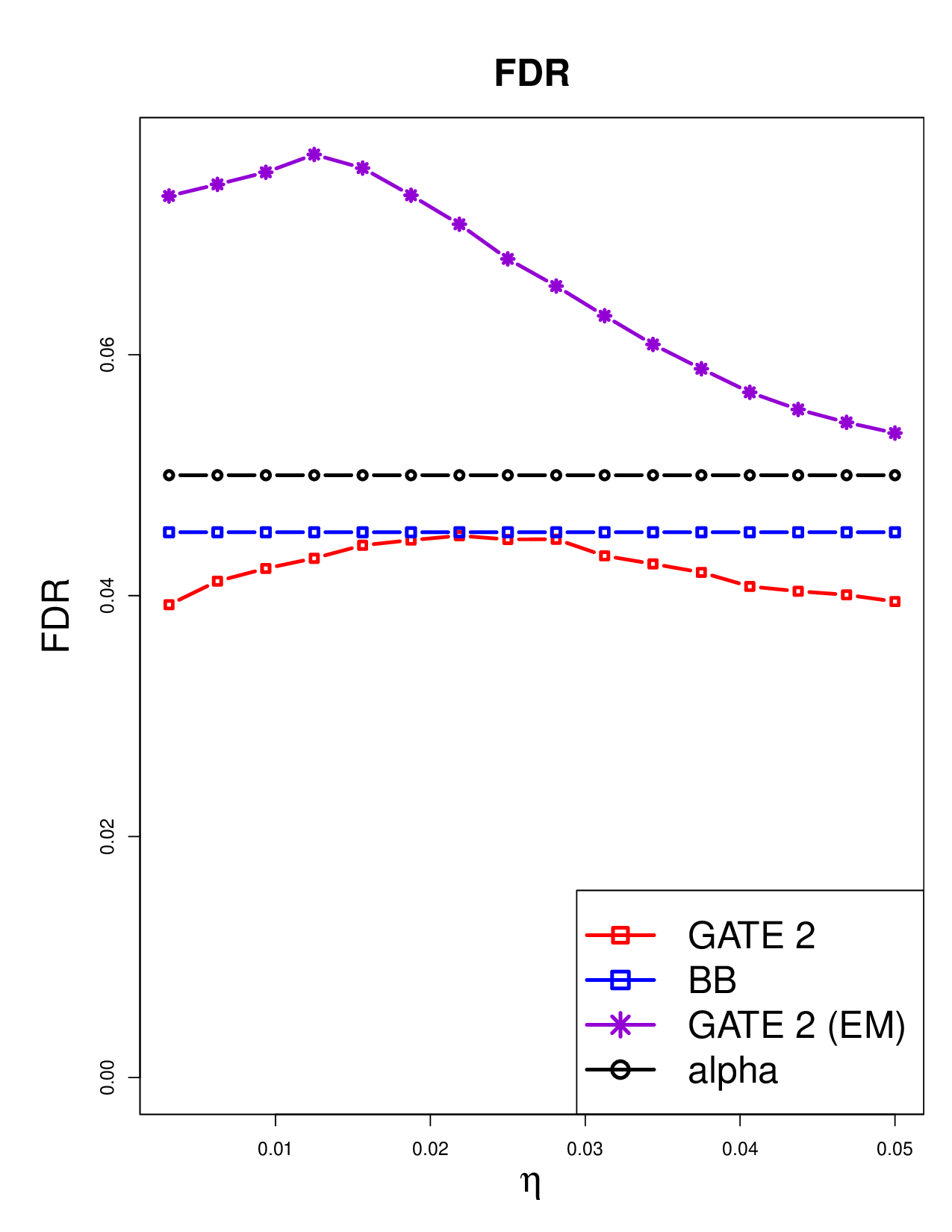}
   \includegraphics[height=40mm,width=40mm]{./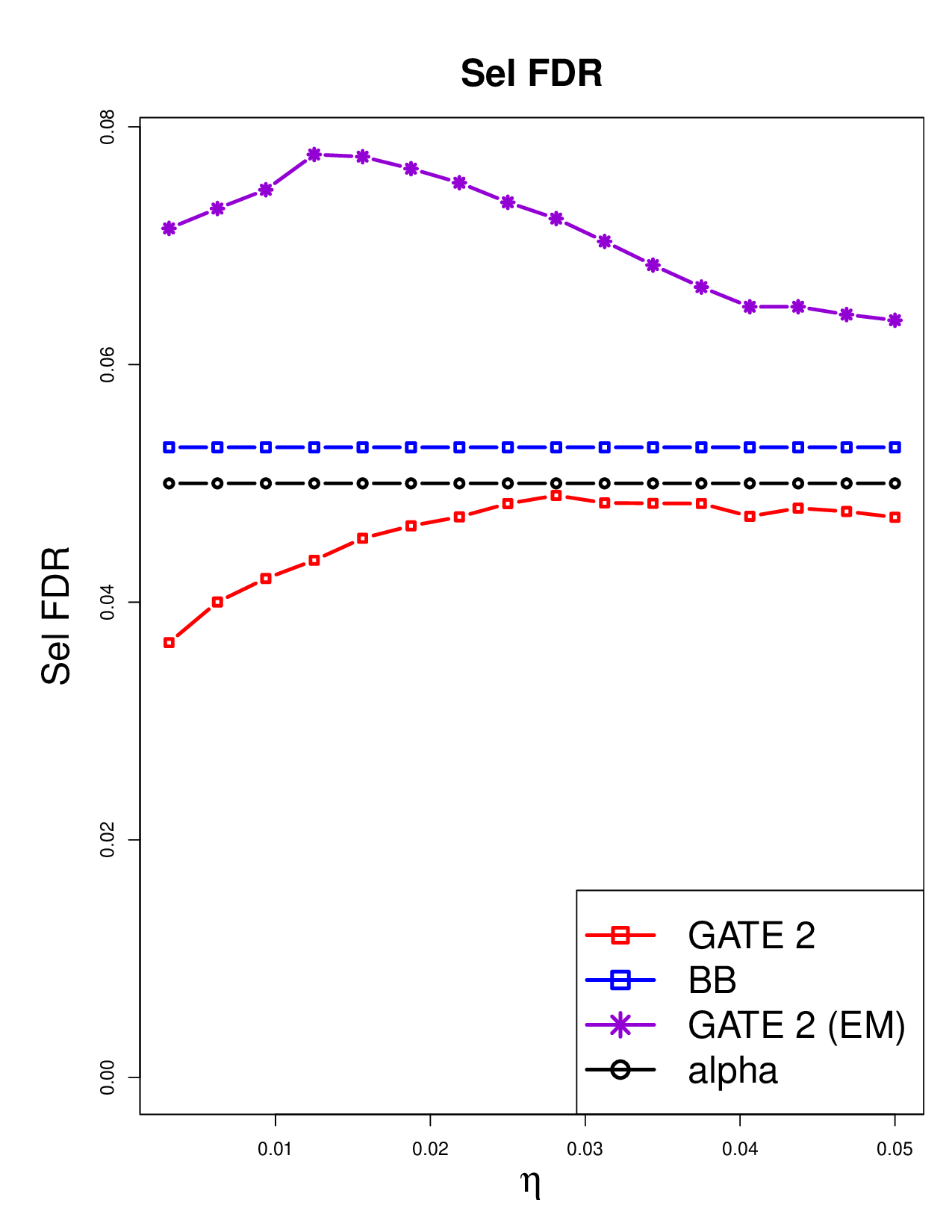}
   \includegraphics[height=40mm,width=40mm]{./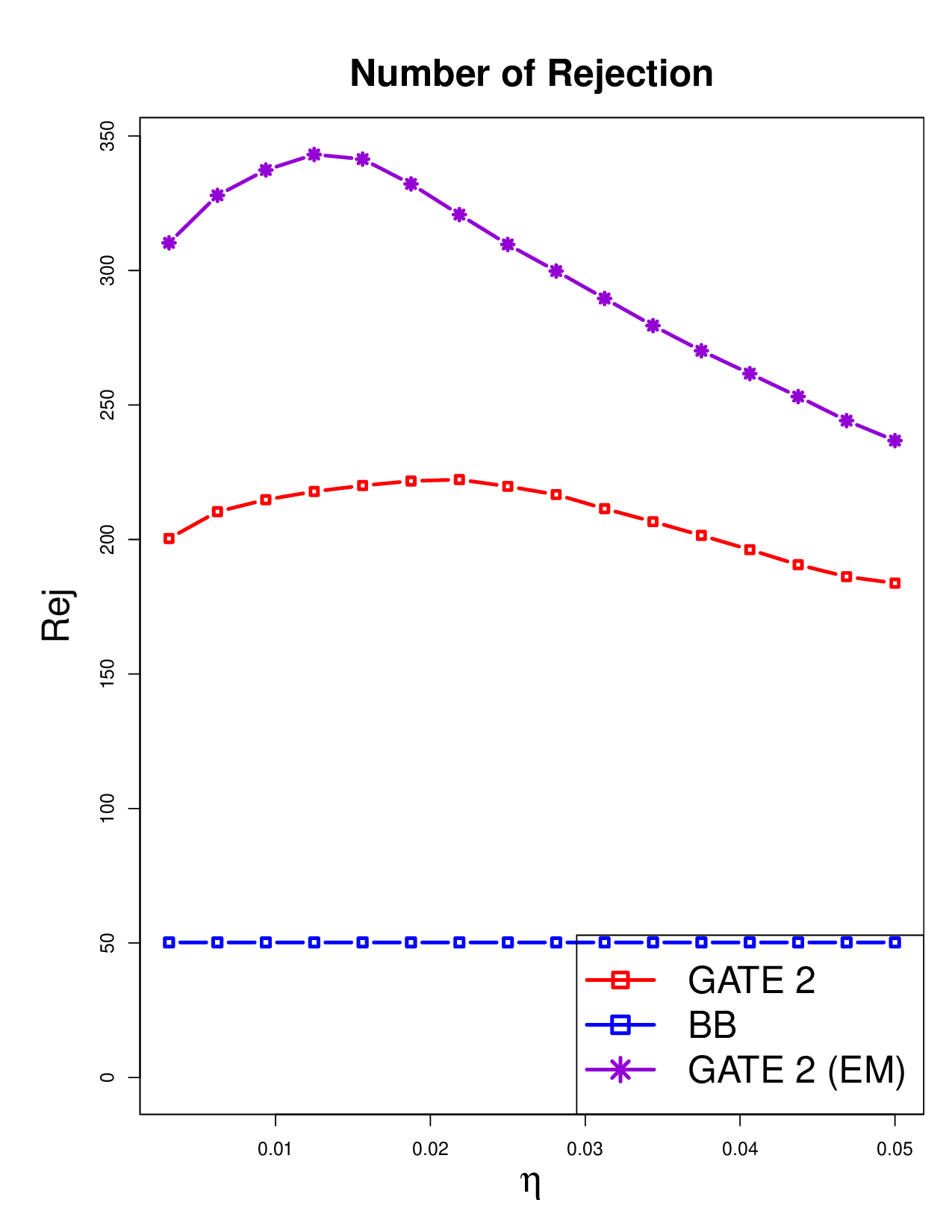}\\
   \includegraphics[height=40mm,width=40mm]{./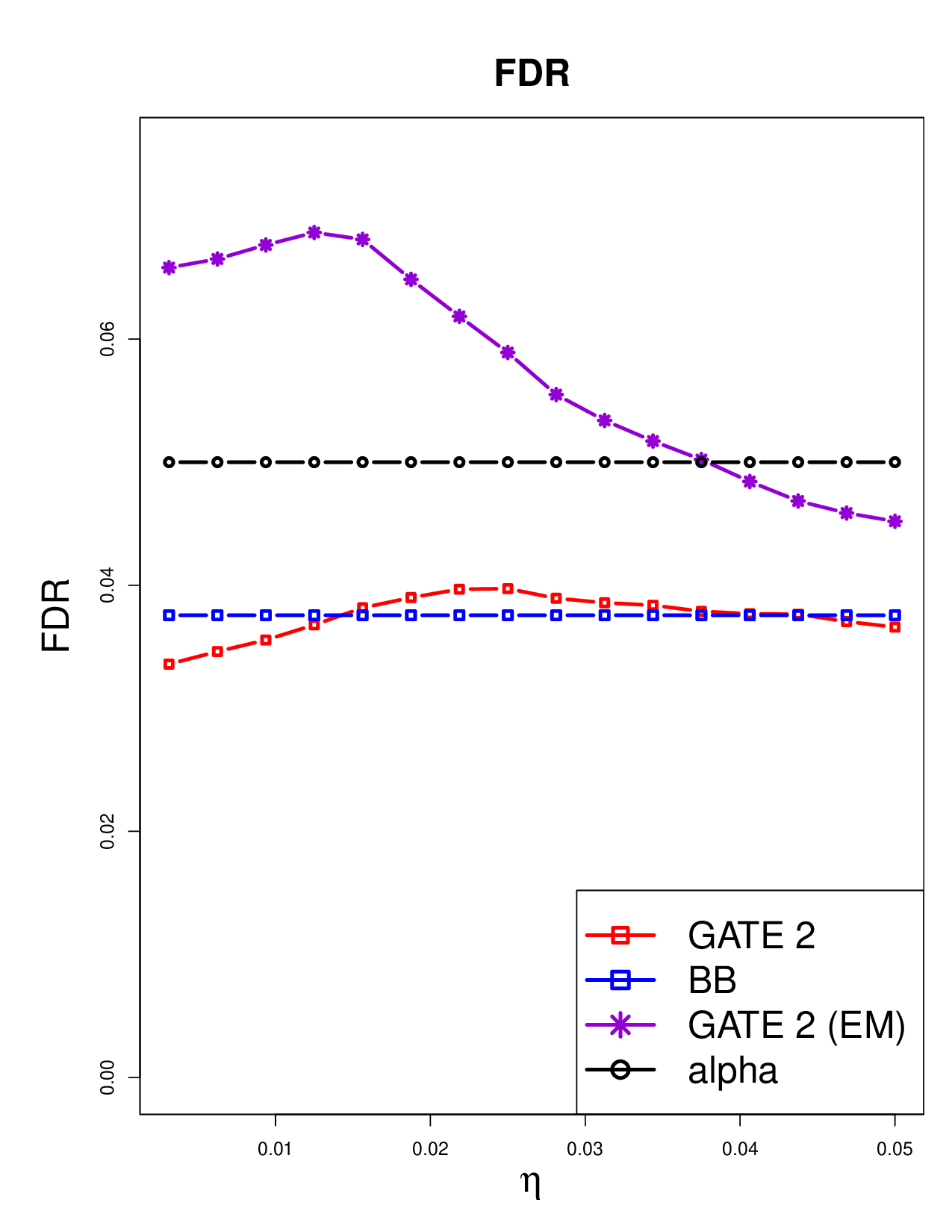}
   \includegraphics[height=40mm,width=40mm]{./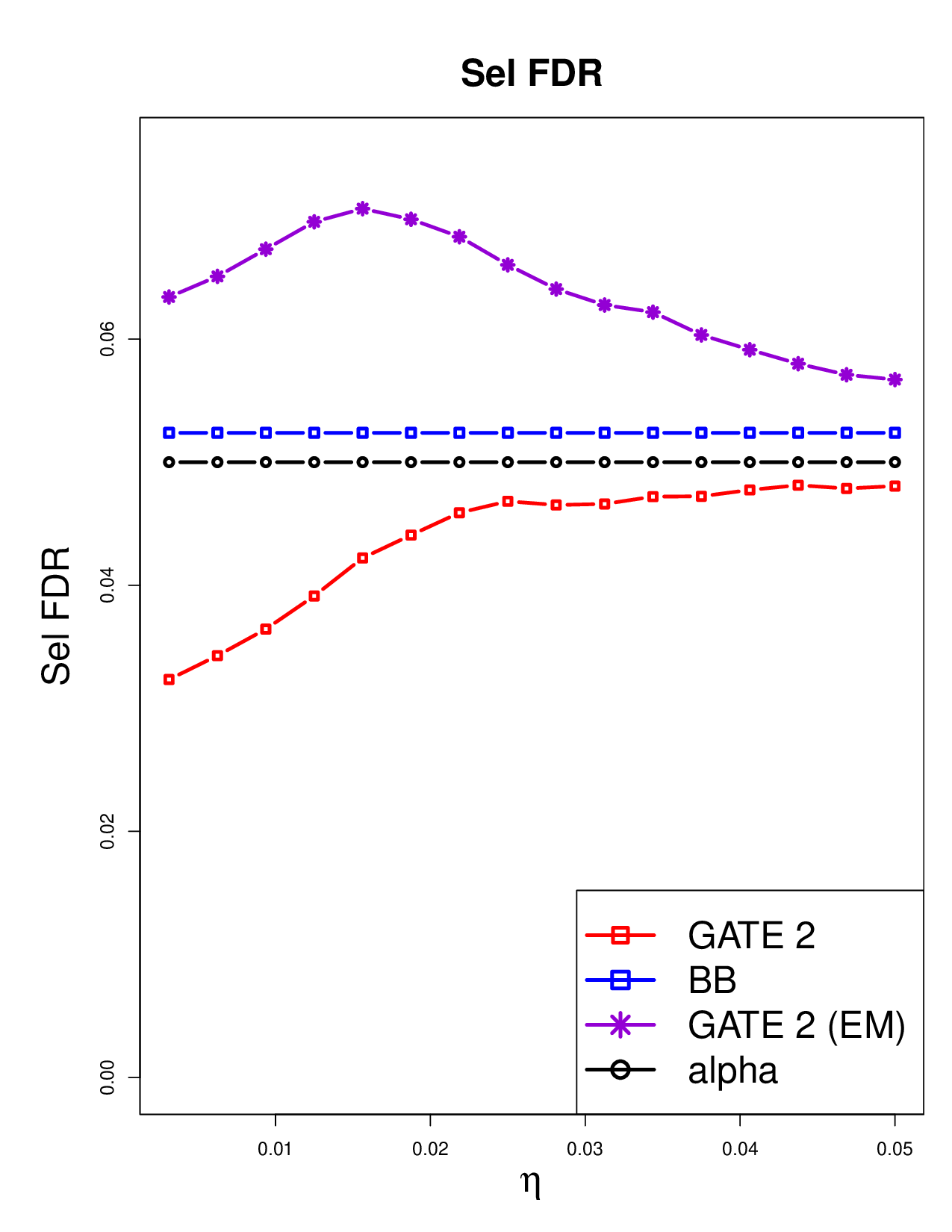}
   \includegraphics[height=40mm,width=40mm]{./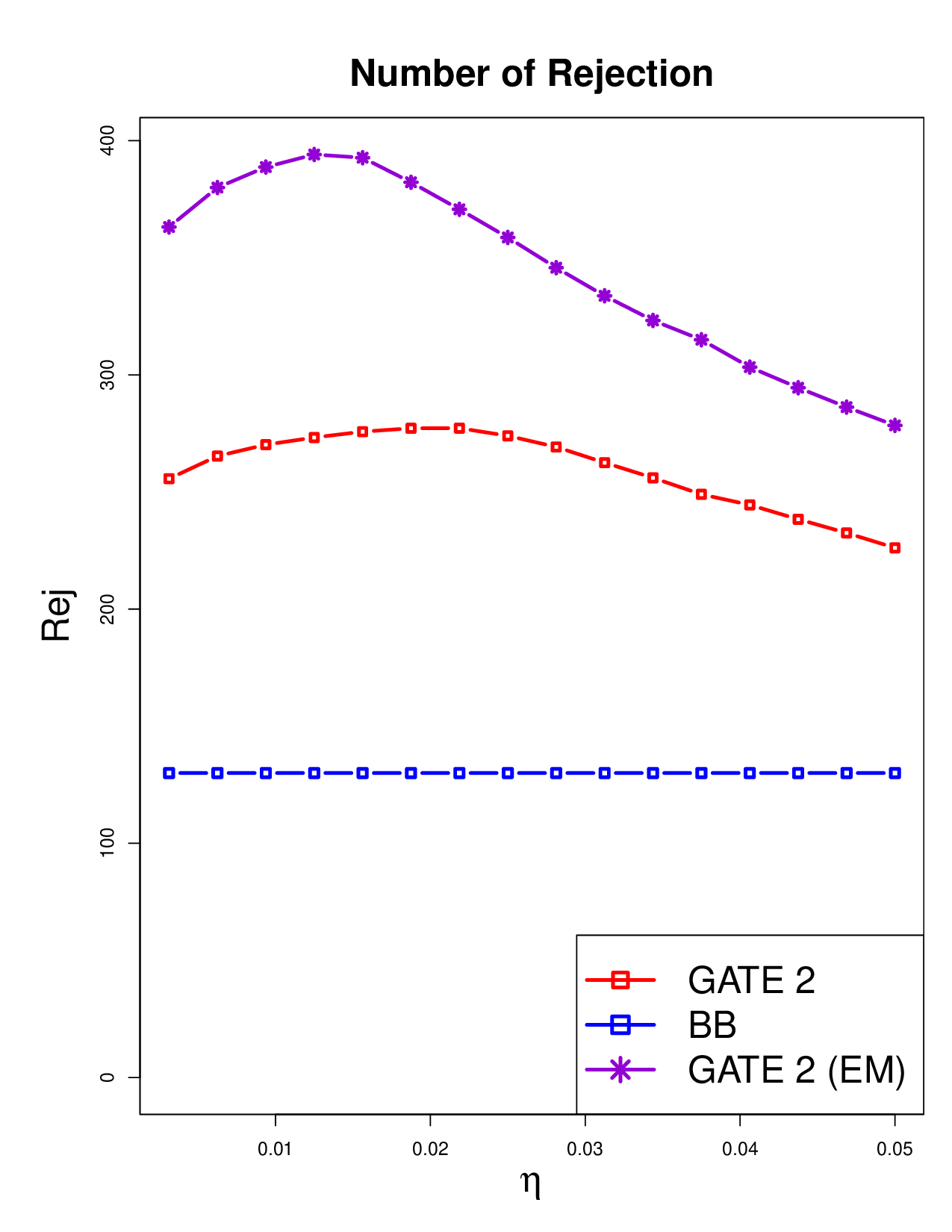}
   \caption{Performance of One-Way GATE 2 when $m=500, n=20, \pi_{1}=0.48$. The top panels correspond to cases when $K=1$ and the bottoms ones correspond to $K=2$.
   }\label{fig:gate4:s2}
\end{figure}

\begin{figure}[H]
   \centering
   \includegraphics[height=40mm,width=40mm]{./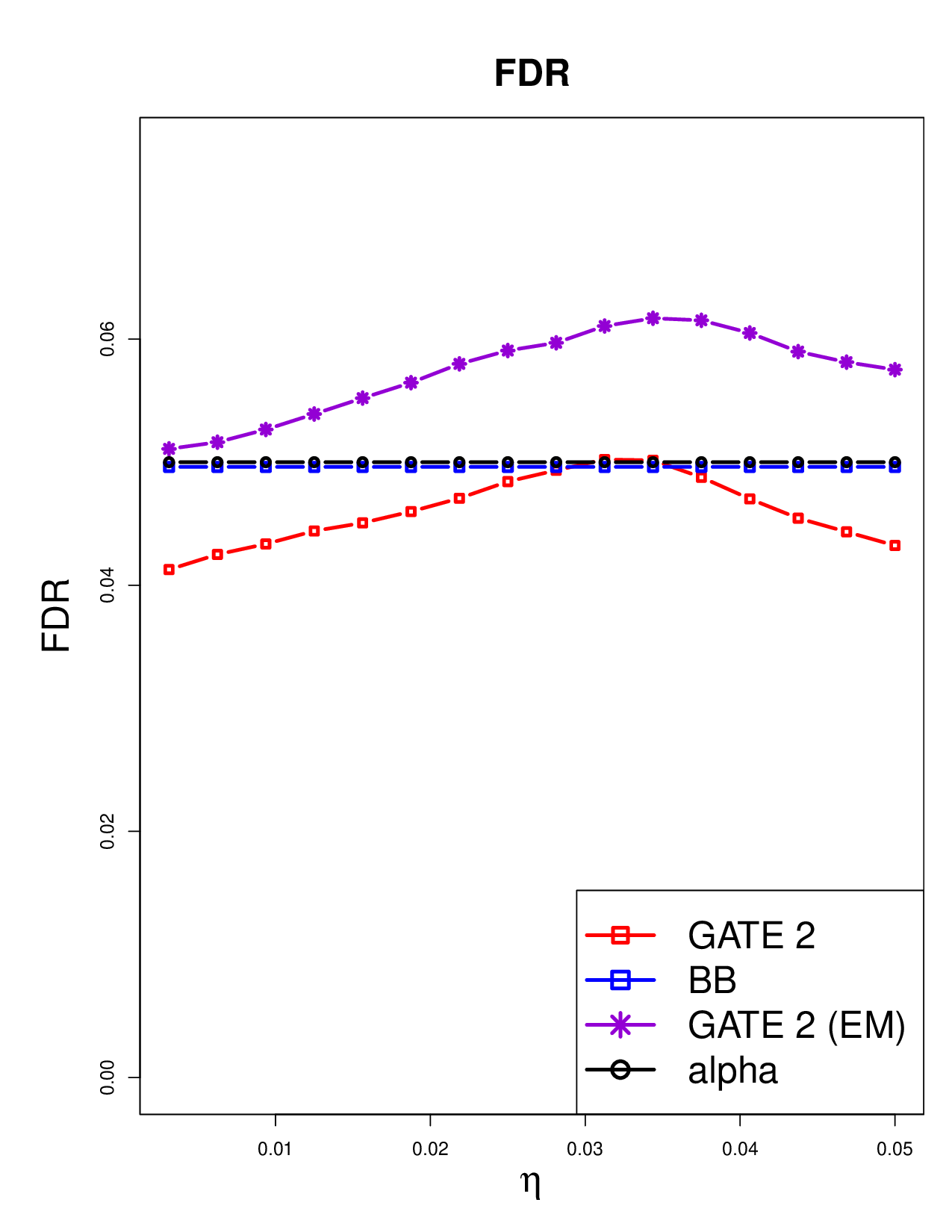}
   \includegraphics[height=40mm,width=40mm]{./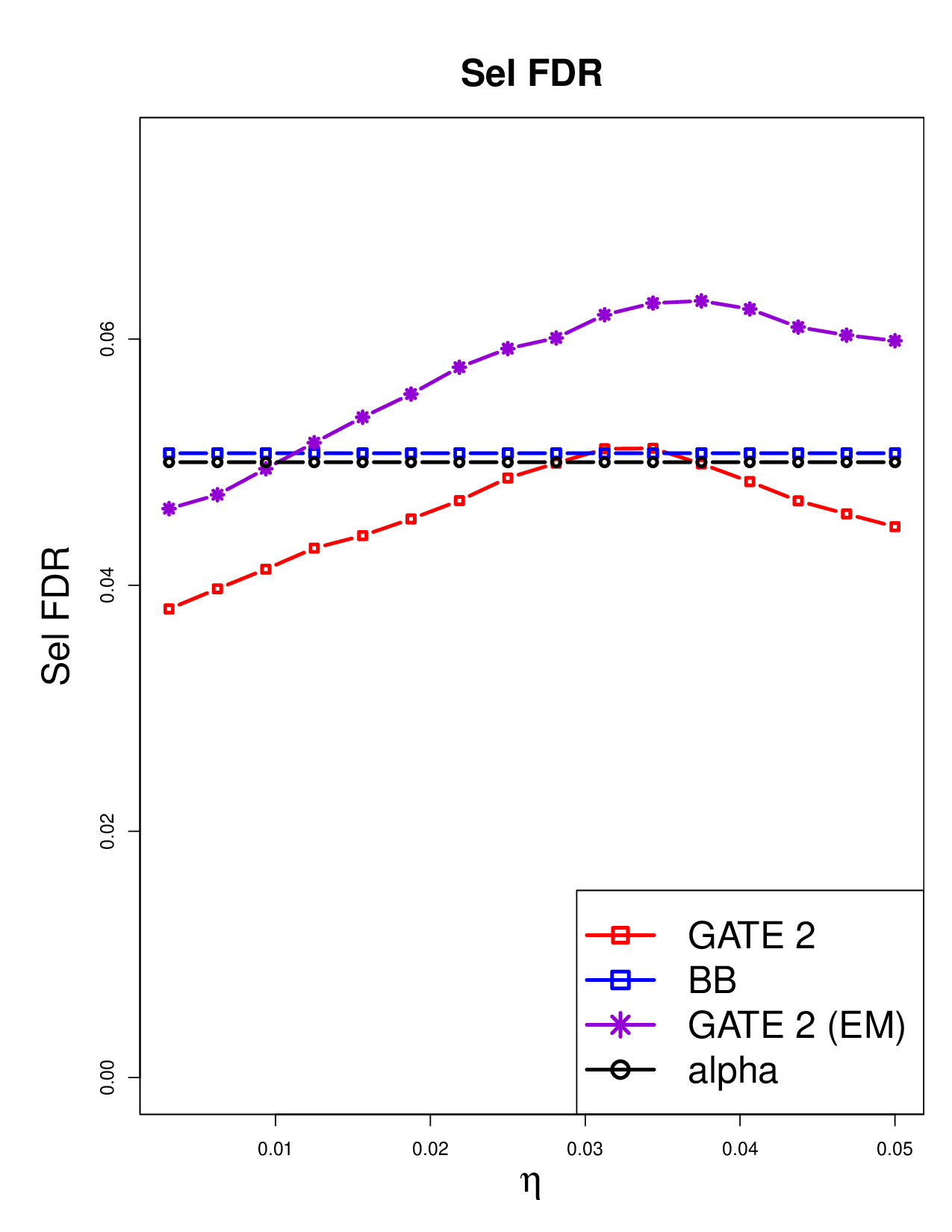}
   \includegraphics[height=40mm,width=40mm]{./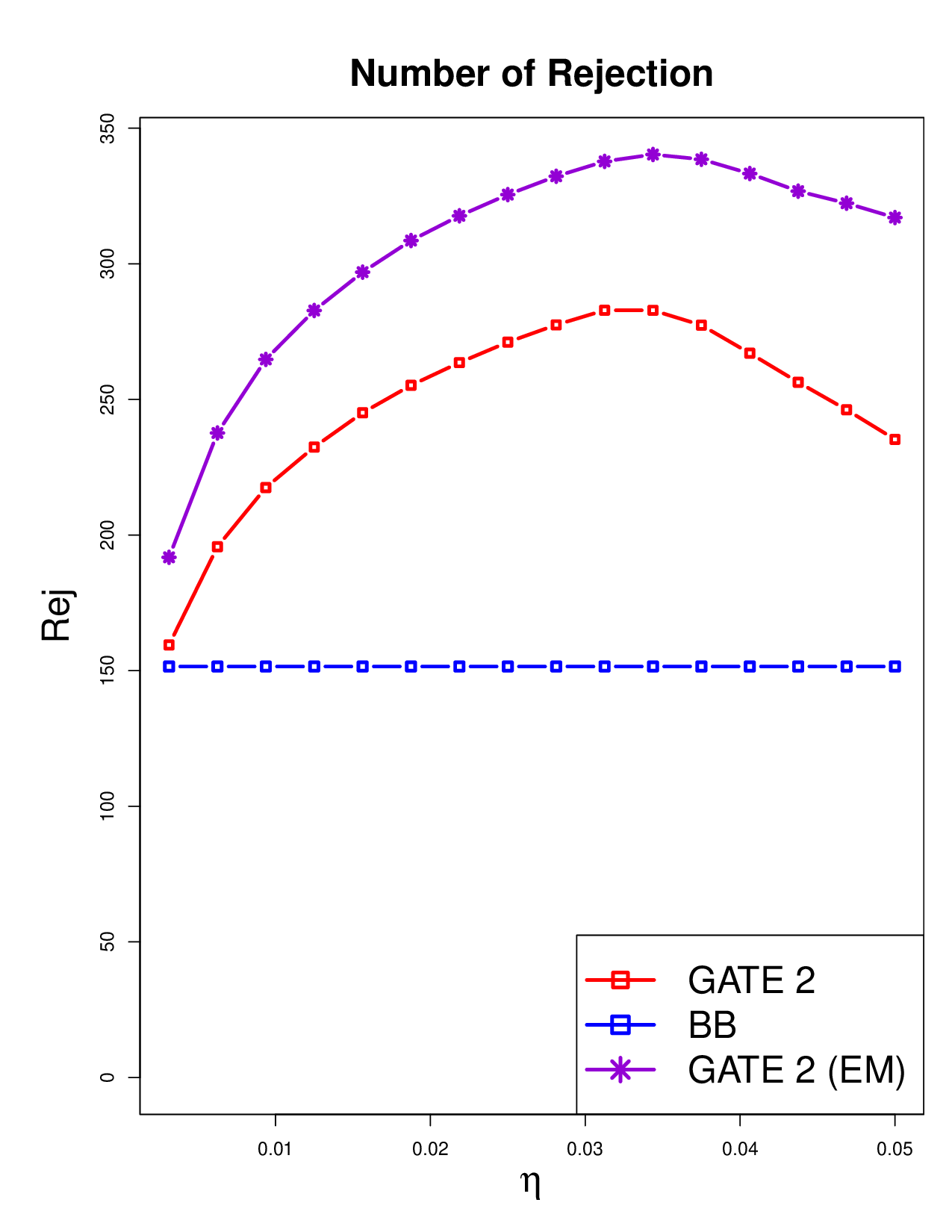}\\
   \includegraphics[height=40mm,width=40mm]{./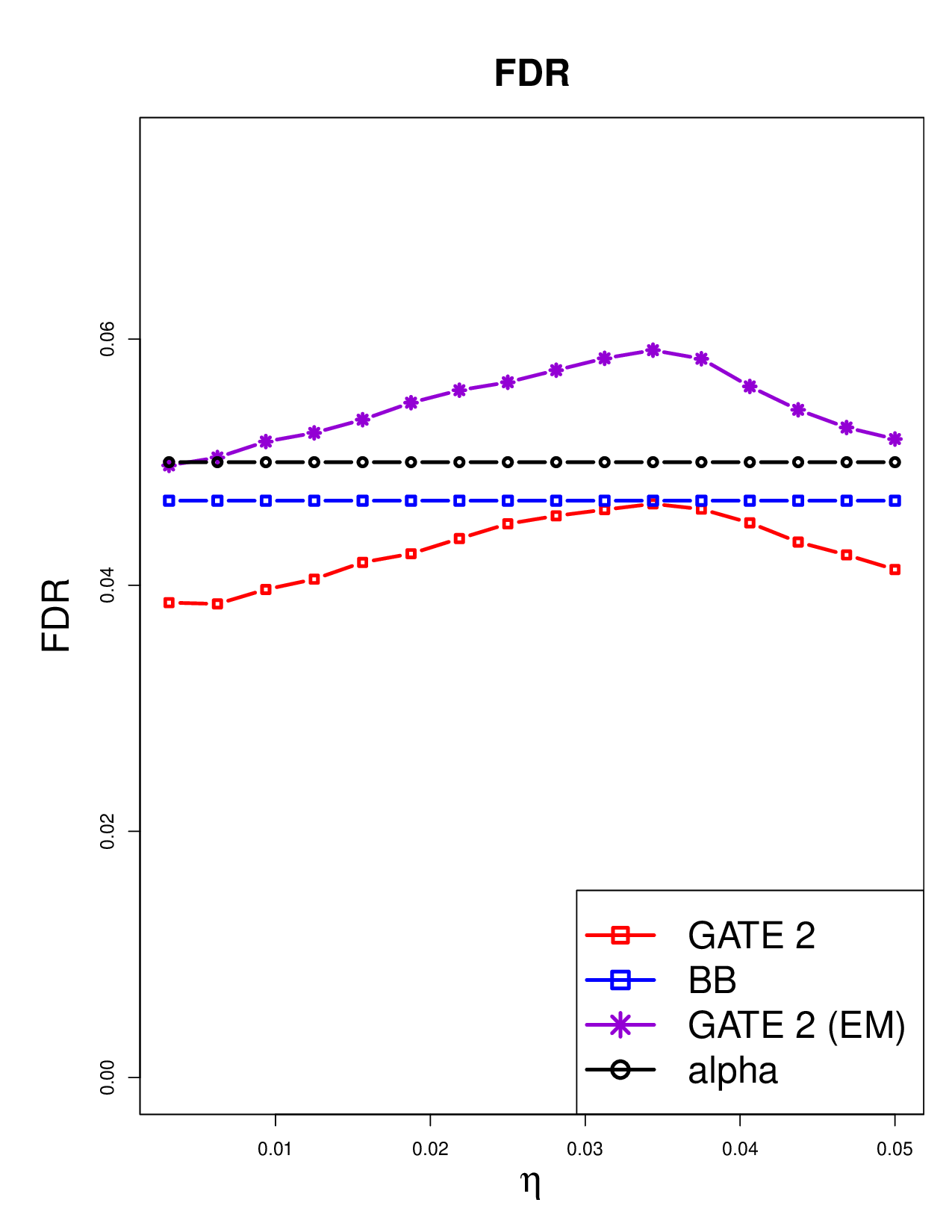}
   \includegraphics[height=40mm,width=40mm]{./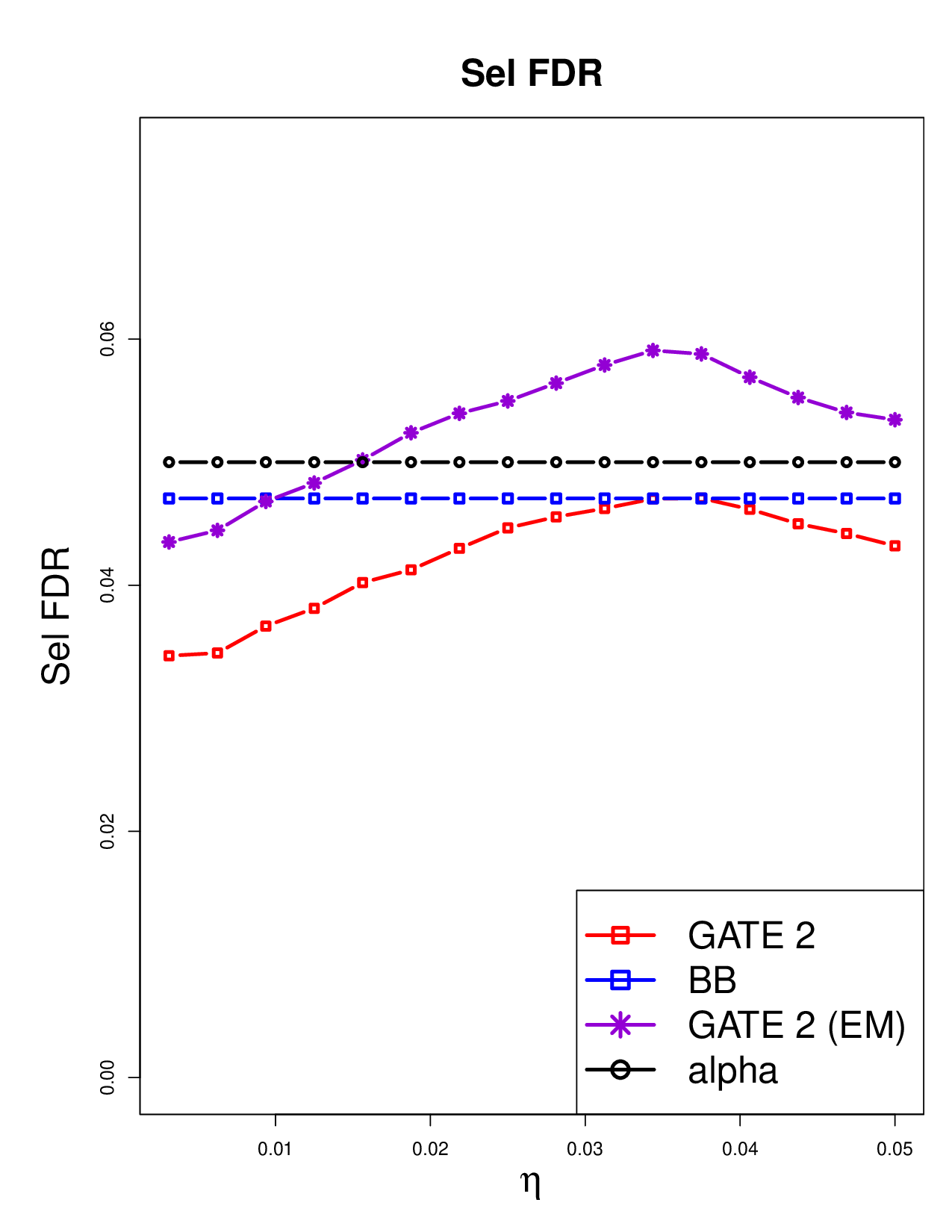}
   \includegraphics[height=40mm,width=40mm]{./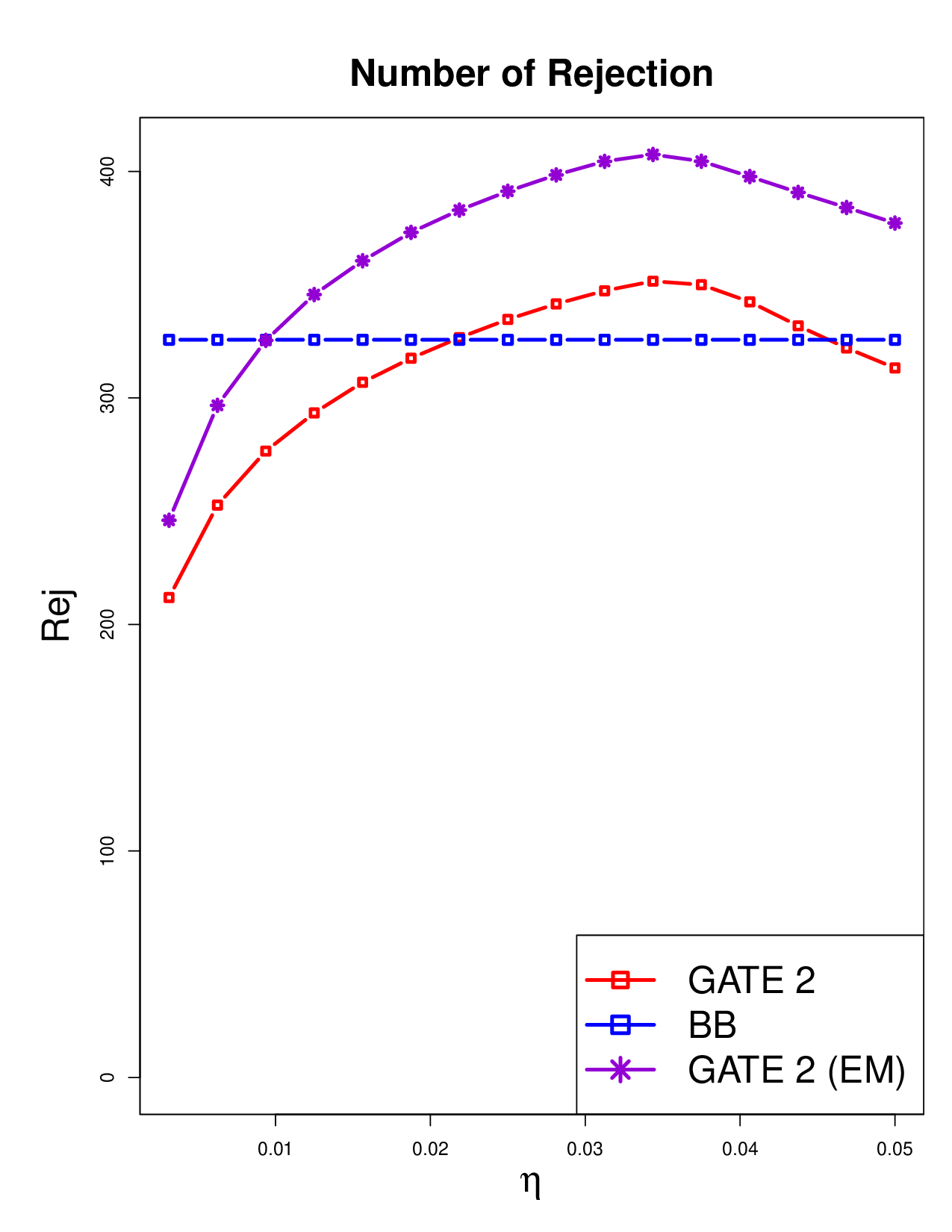}
   \caption{Performance of One-Way GATE 2 when $m=1000, n=5, \pi_{1}=0.165$. The top panels correspond to cases when $K=1$ and the bottoms ones correspond to $K=2$.
   }\label{fig:gate4:s3}
\end{figure}

\begin{figure}[H]
   \centering
   \includegraphics[height=40mm,width=40mm]{./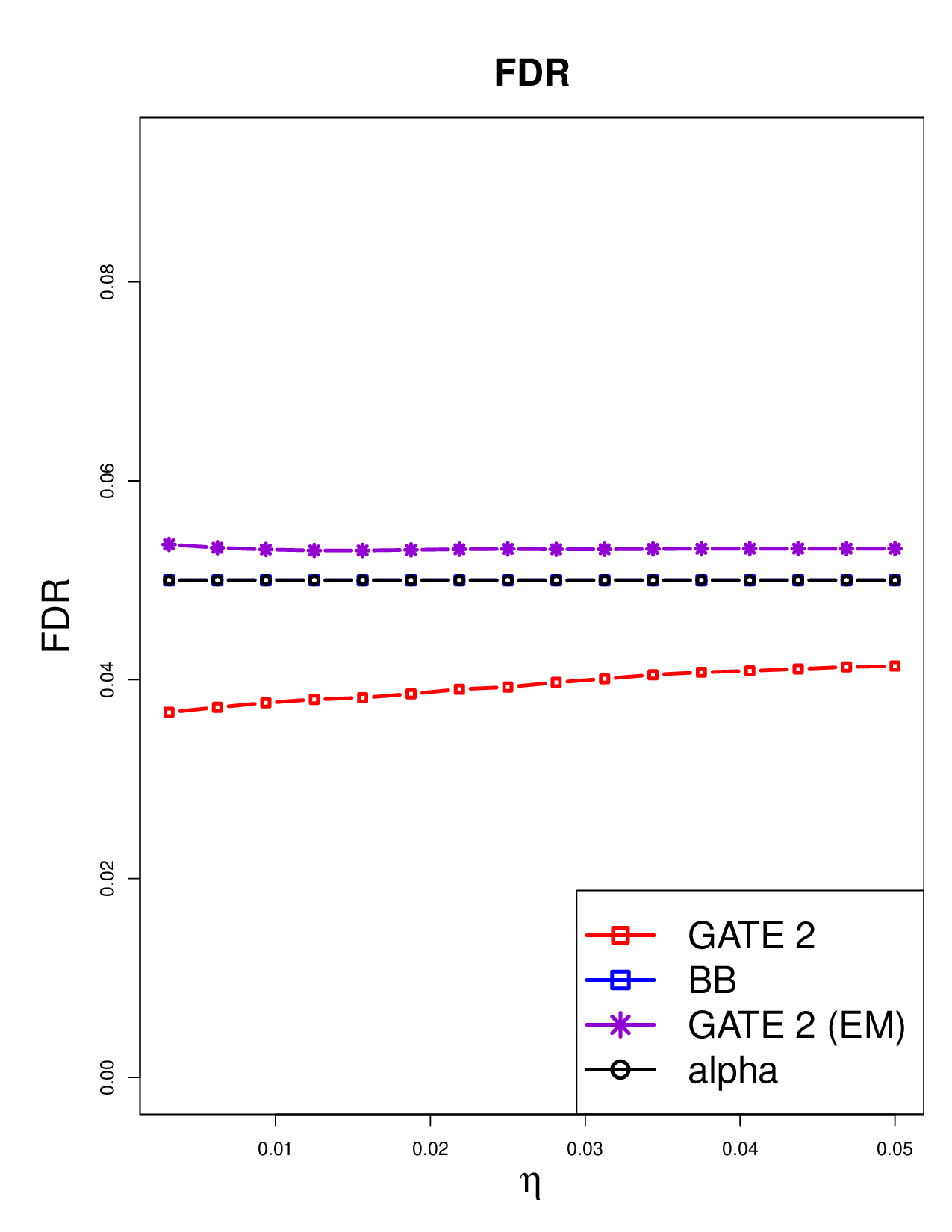}
   \includegraphics[height=40mm,width=40mm]{./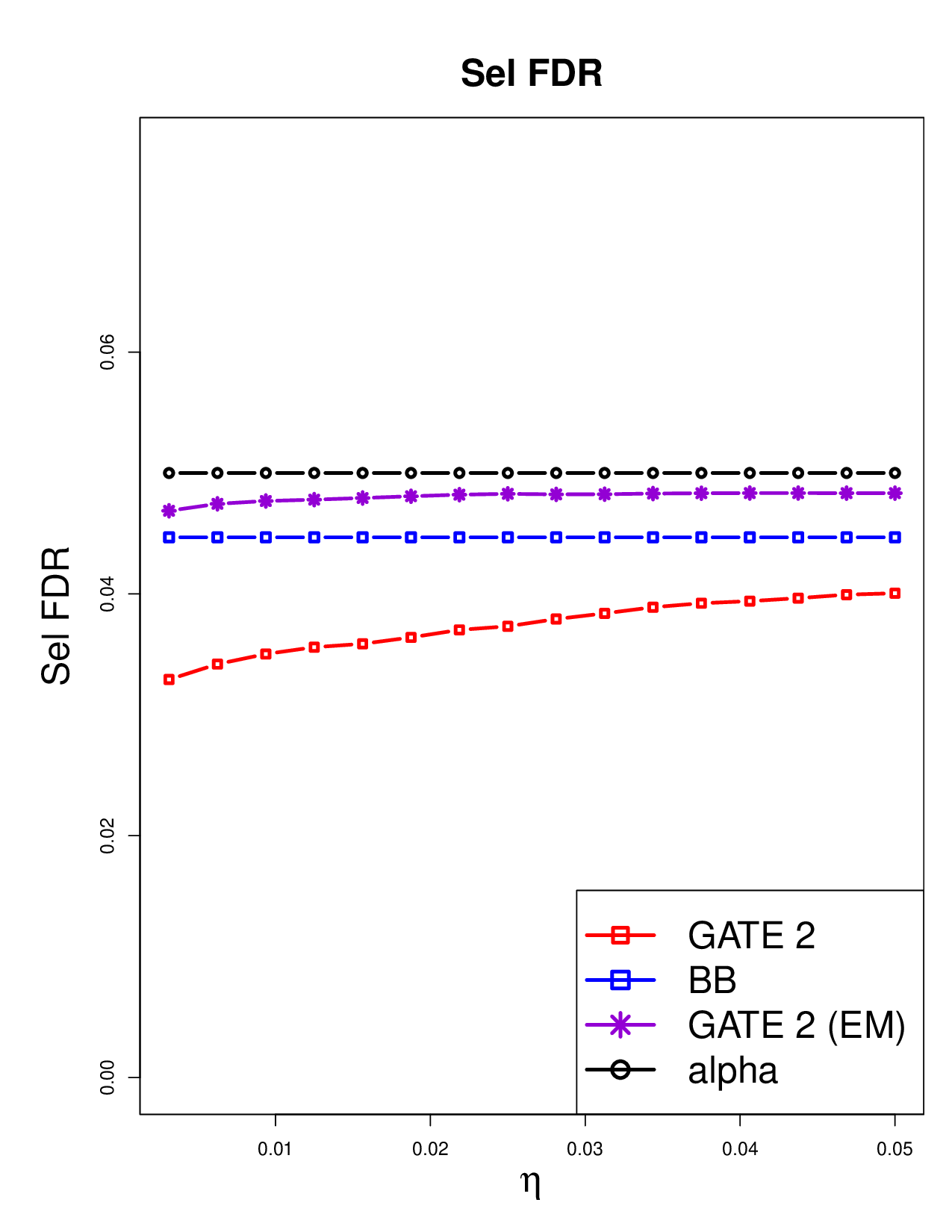}
   \includegraphics[height=40mm,width=40mm]{./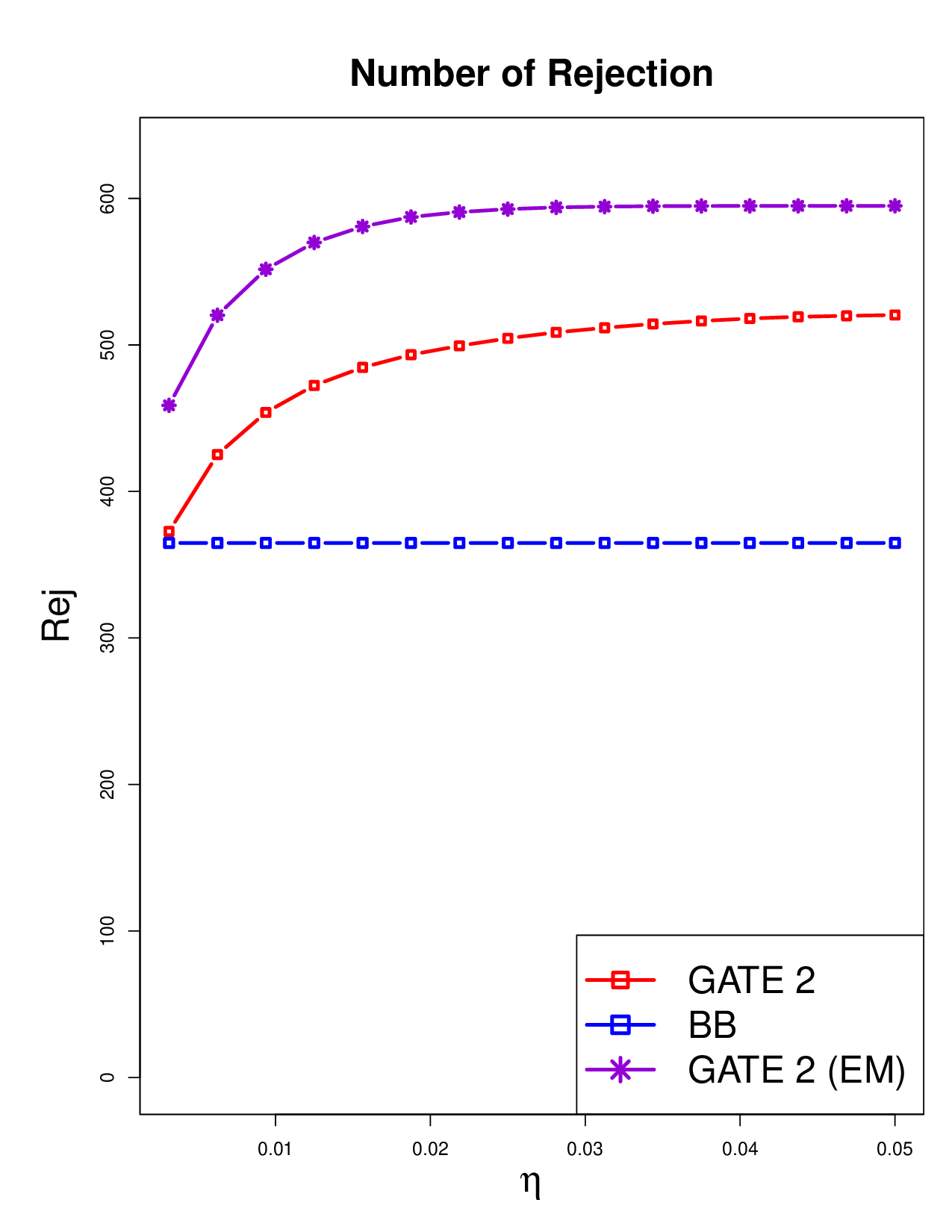}\\
   \includegraphics[height=40mm,width=40mm]{./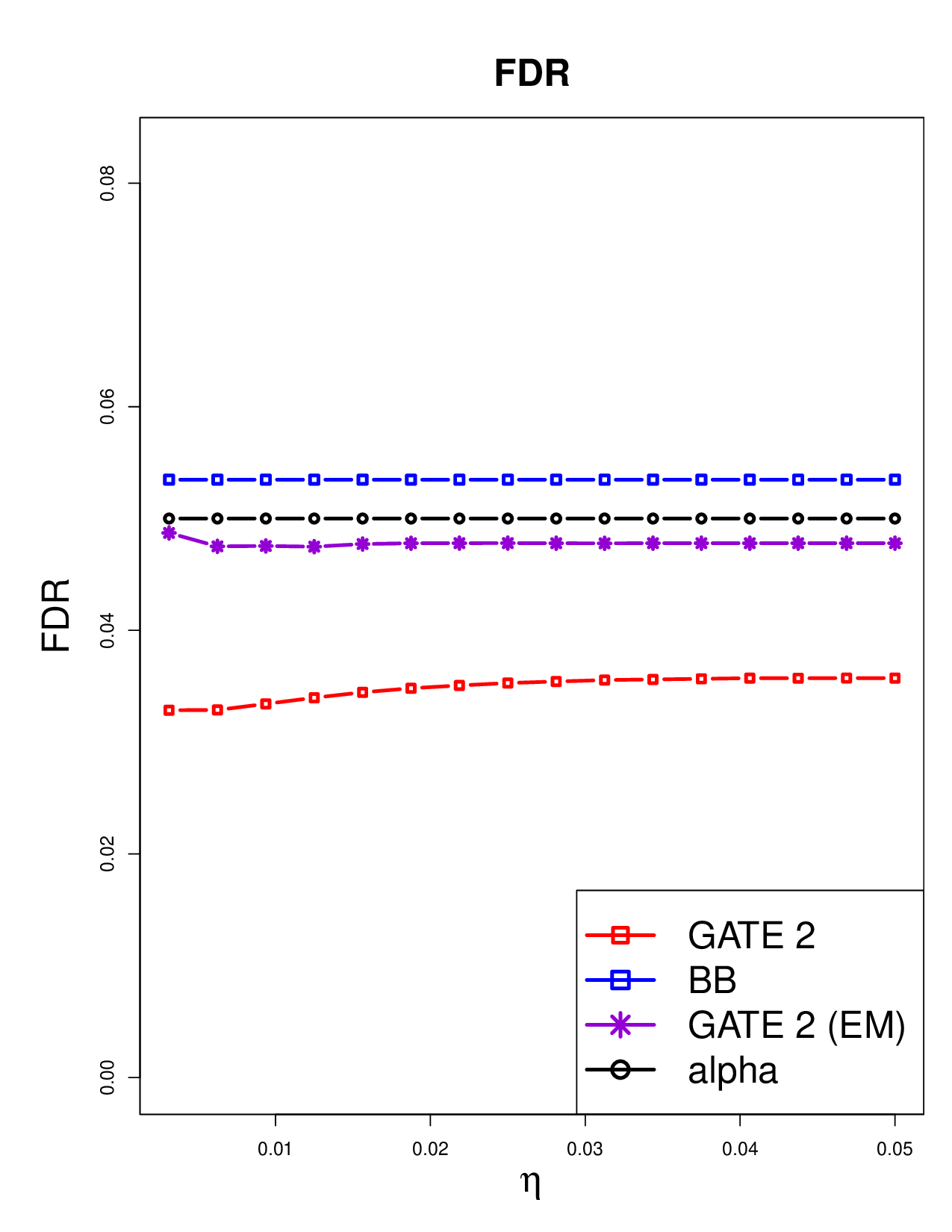}
   \includegraphics[height=40mm,width=40mm]{./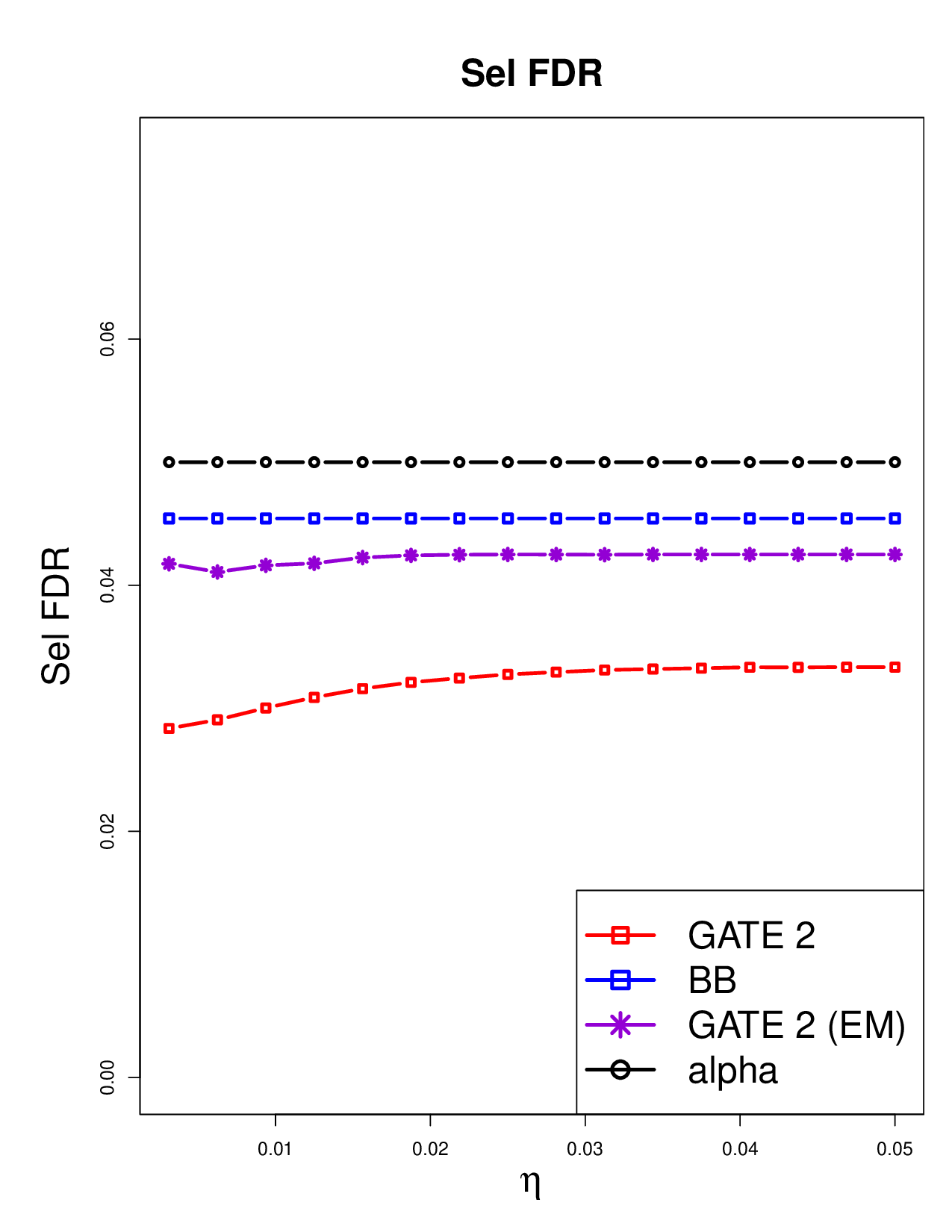}
   \includegraphics[height=40mm,width=40mm]{./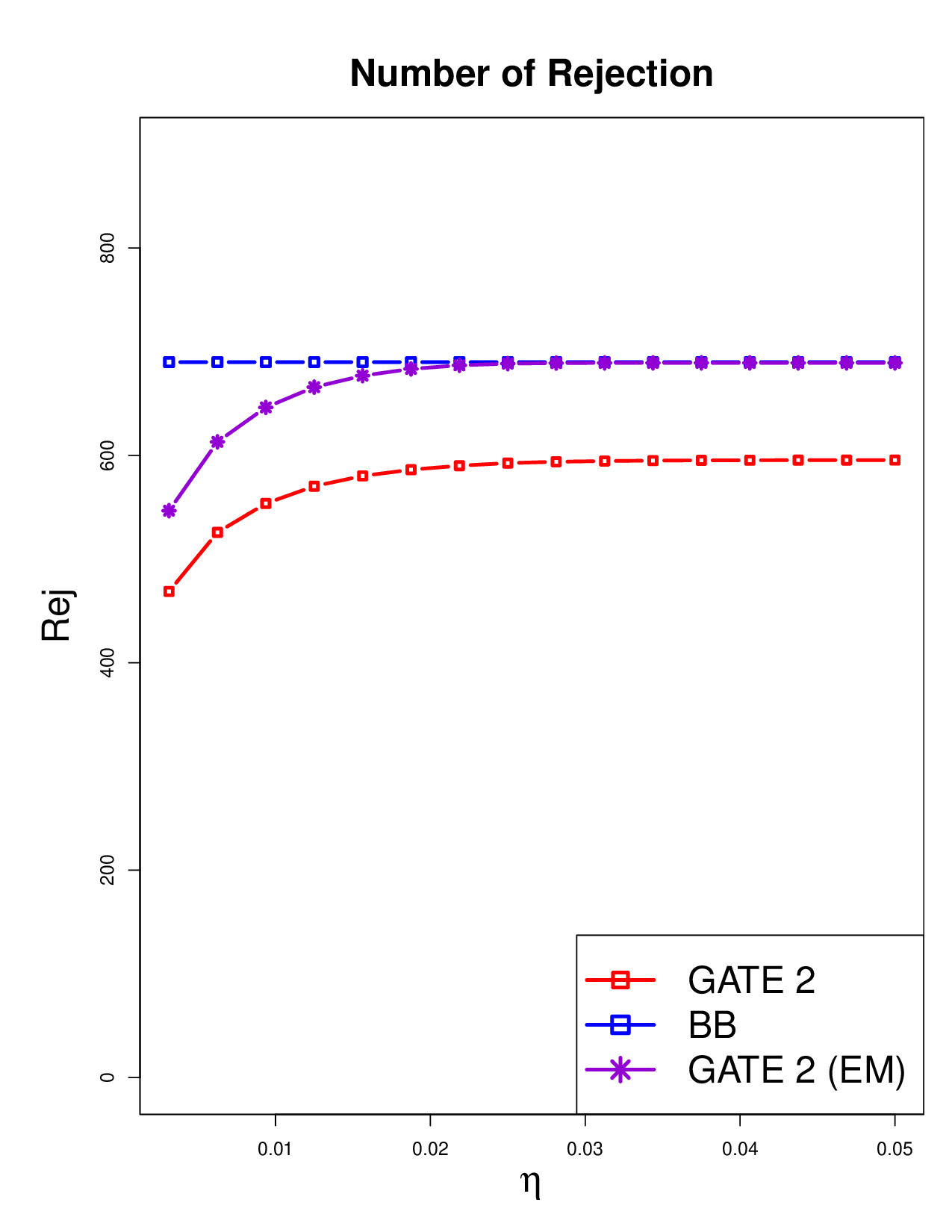}
   \caption{Performance of One-Way GATE 2 when $m=1000, n=5, \pi_{1}=0.308$. The top panels correspond to cases when $K=1$ and the bottoms ones correspond to $K=2$.
   }\label{fig:gate4:s4}
\end{figure}

\begin{figure}[H]
   \centering
   \includegraphics[height=40mm,width=40mm]{./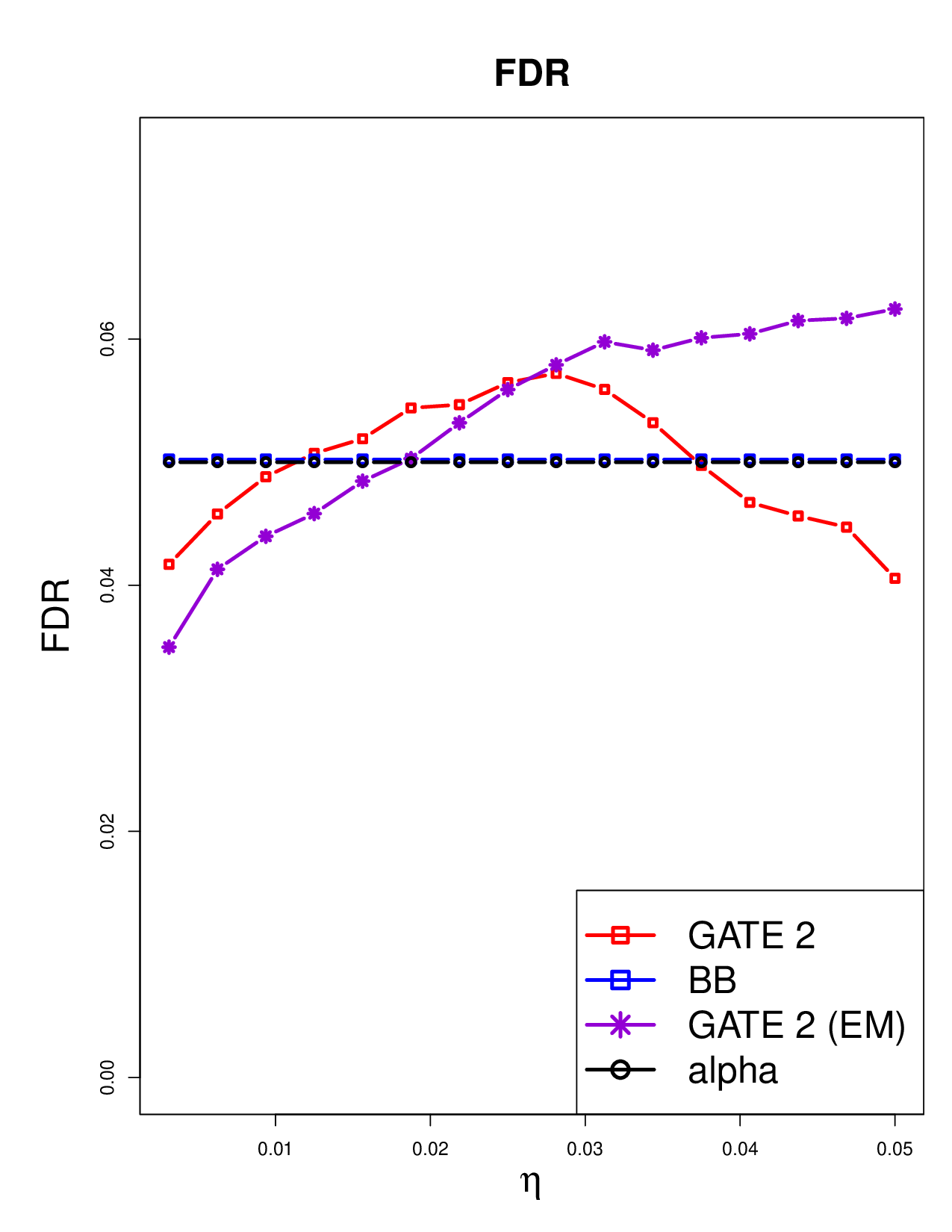}
   \includegraphics[height=40mm,width=40mm]{./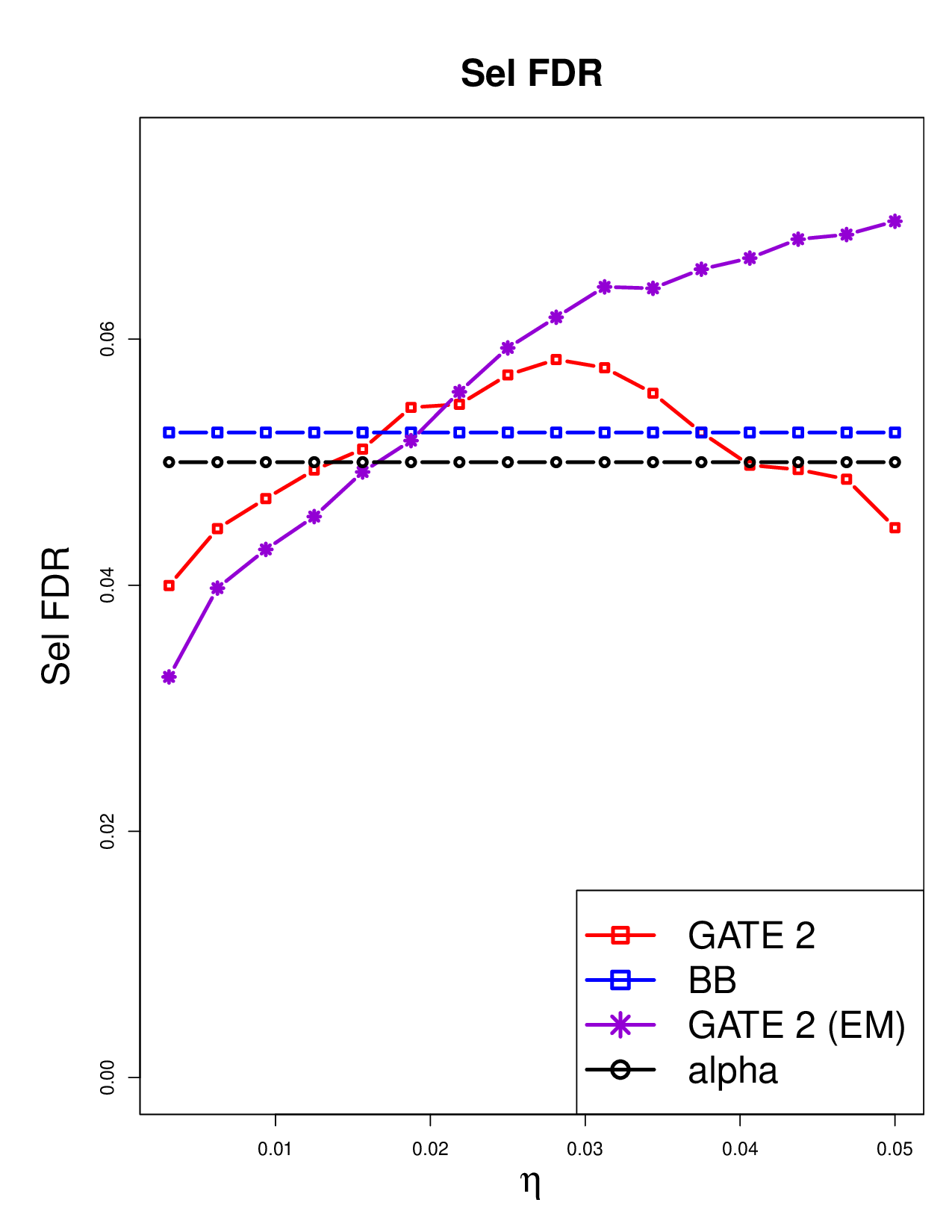}
   \includegraphics[height=40mm,width=40mm]{./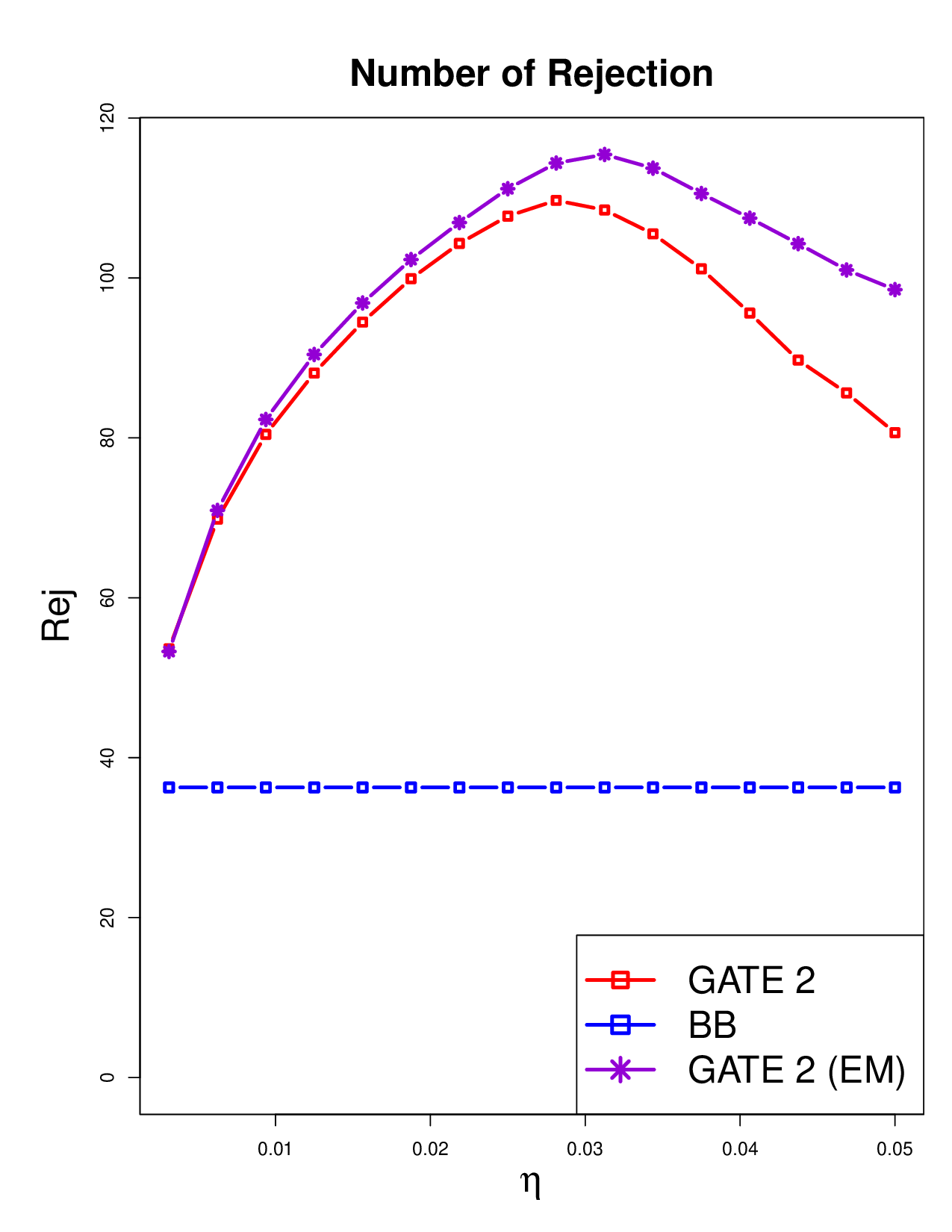}\\
   \includegraphics[height=40mm,width=40mm]{./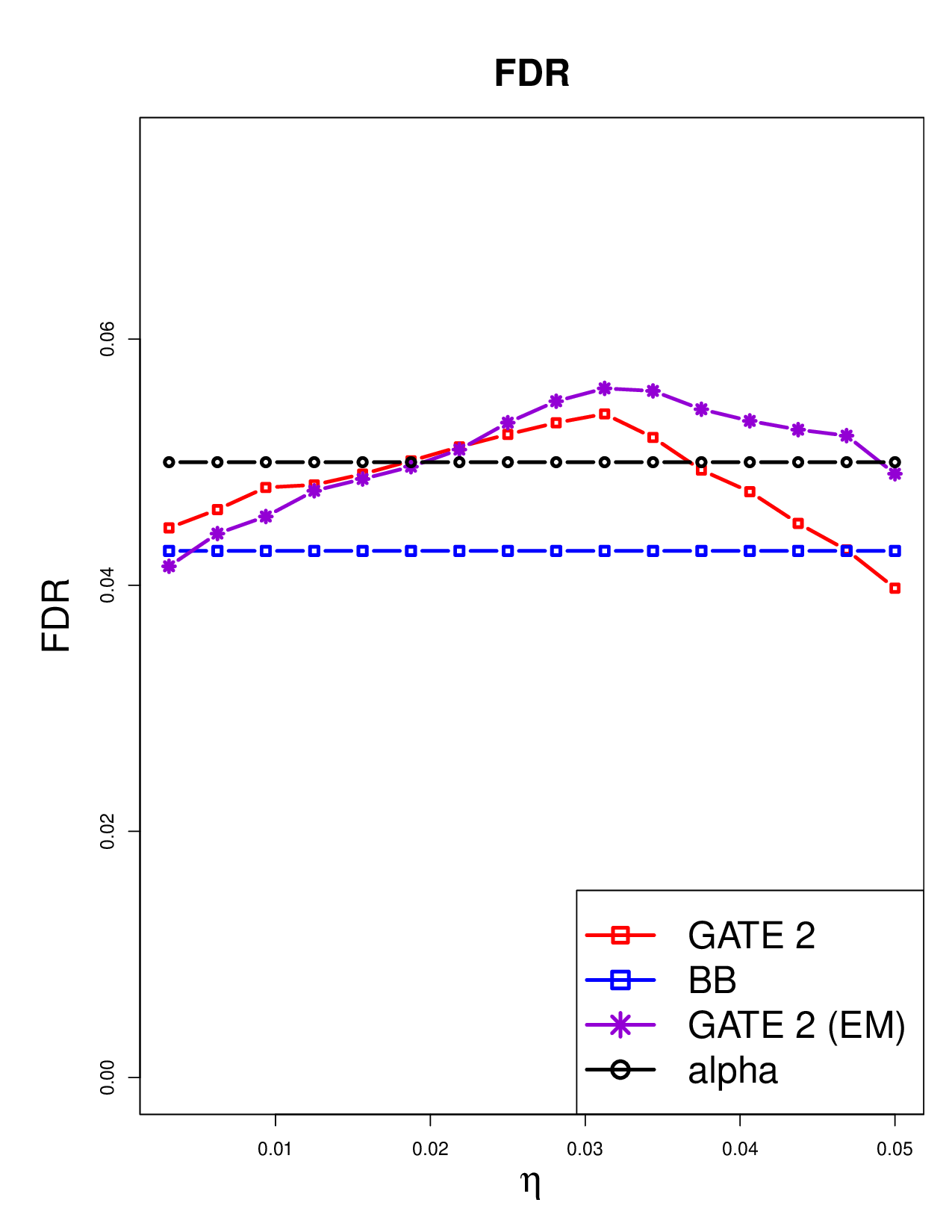}
   \includegraphics[height=40mm,width=40mm]{./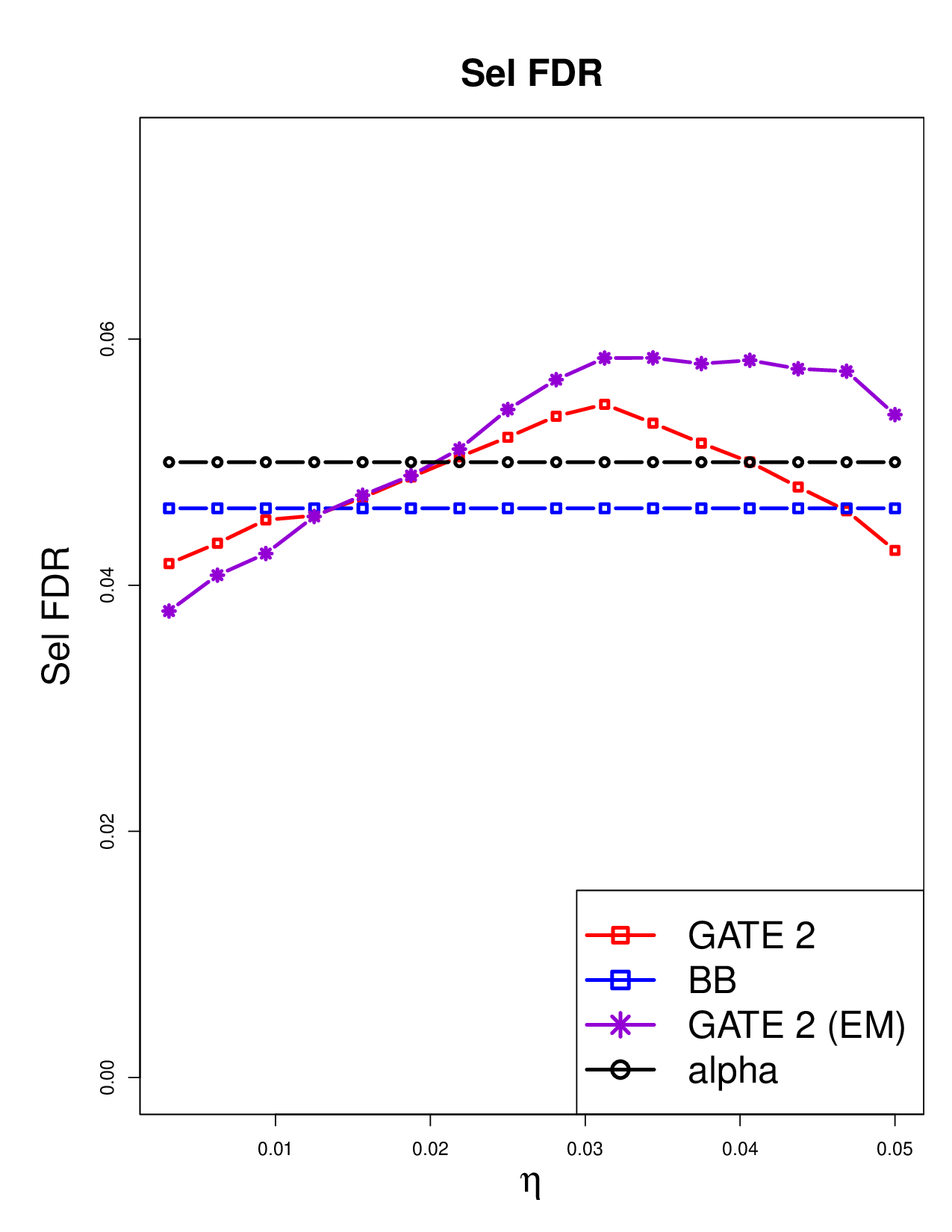}
   \includegraphics[height=40mm,width=40mm]{./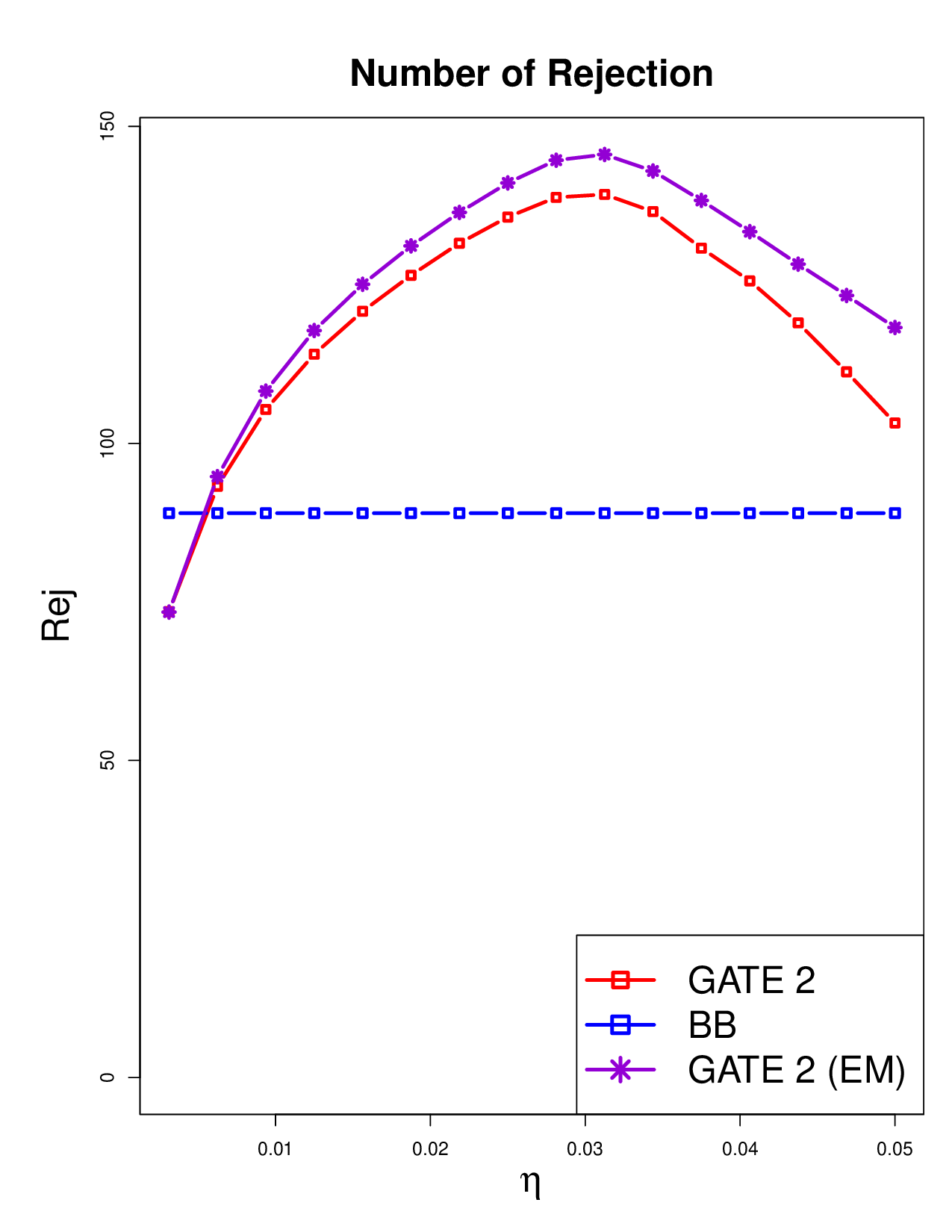}
   \caption{Performance of One-Way GATE 2 when $m=1000, n=5, \pi_{1}=0.47$. The top panels correspond to cases when $K=1$ and the bottoms ones correspond to $K=2$.
   }\label{fig:gate4:s5}
\end{figure}

\subsection{Additional results on the comparison of estimated local fdrs.}

\begin{figure}[H]
  \centering
  \includegraphics[height=50mm,width=50mm]{./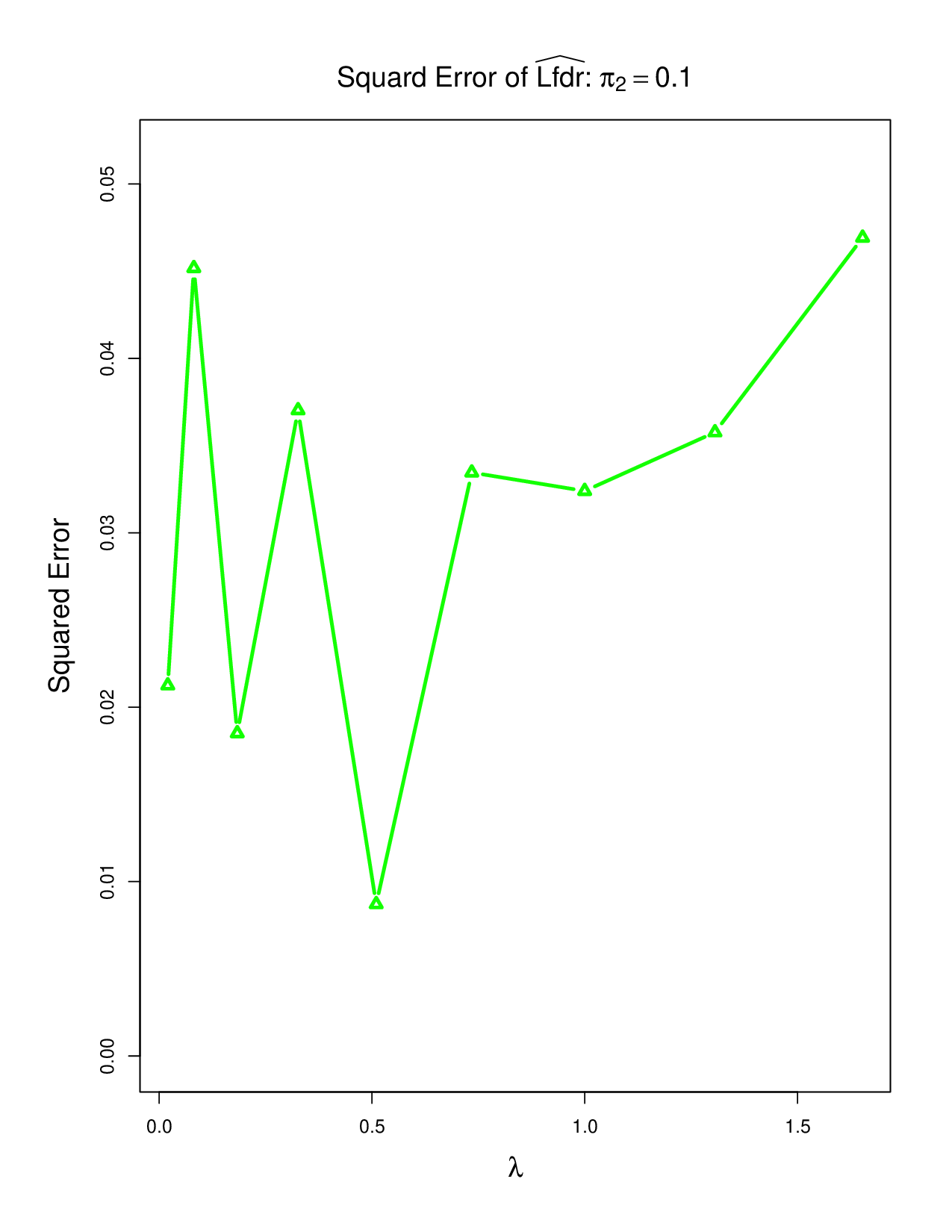}
    \includegraphics[height=50mm,width=50mm]{./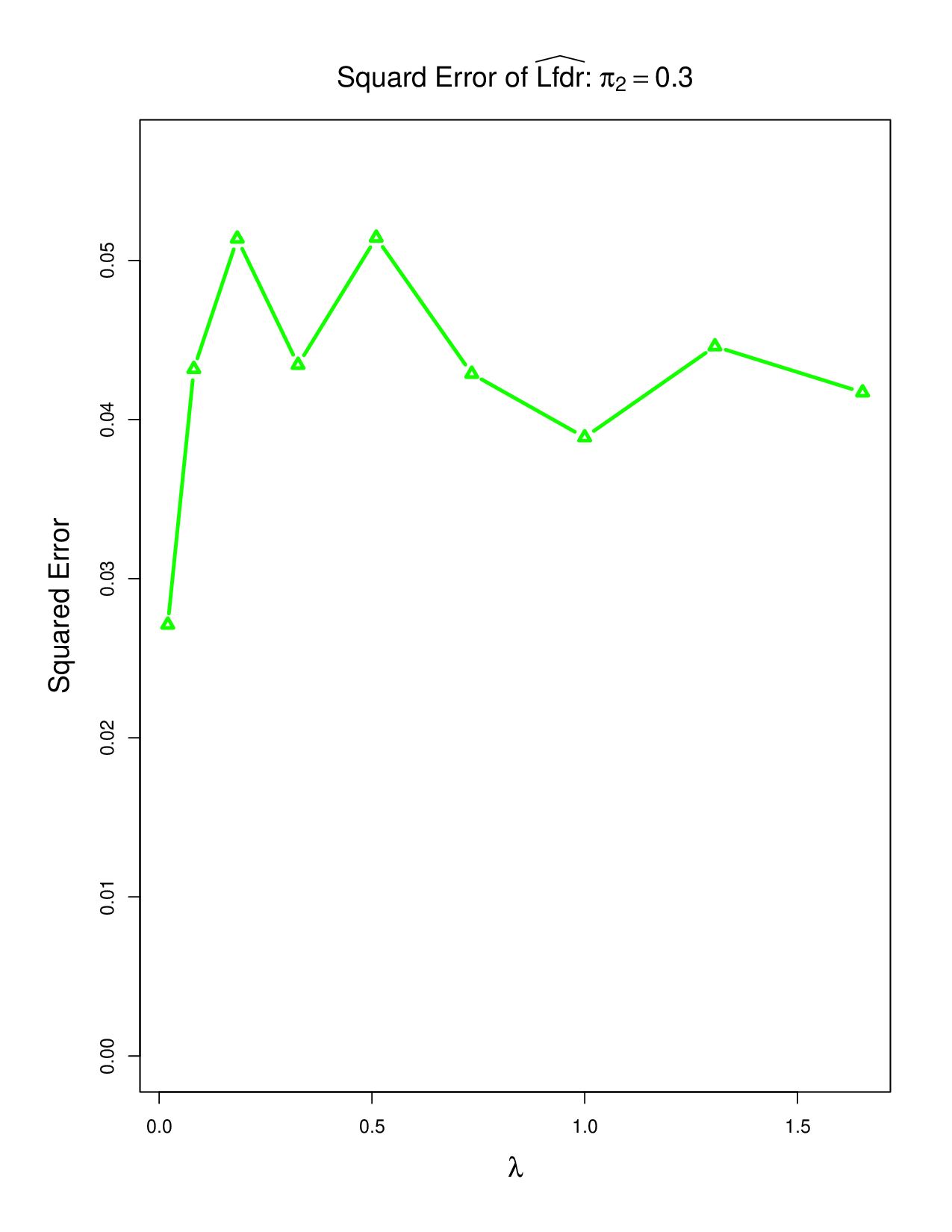}\\
  \includegraphics[height=50mm,width=50mm]{./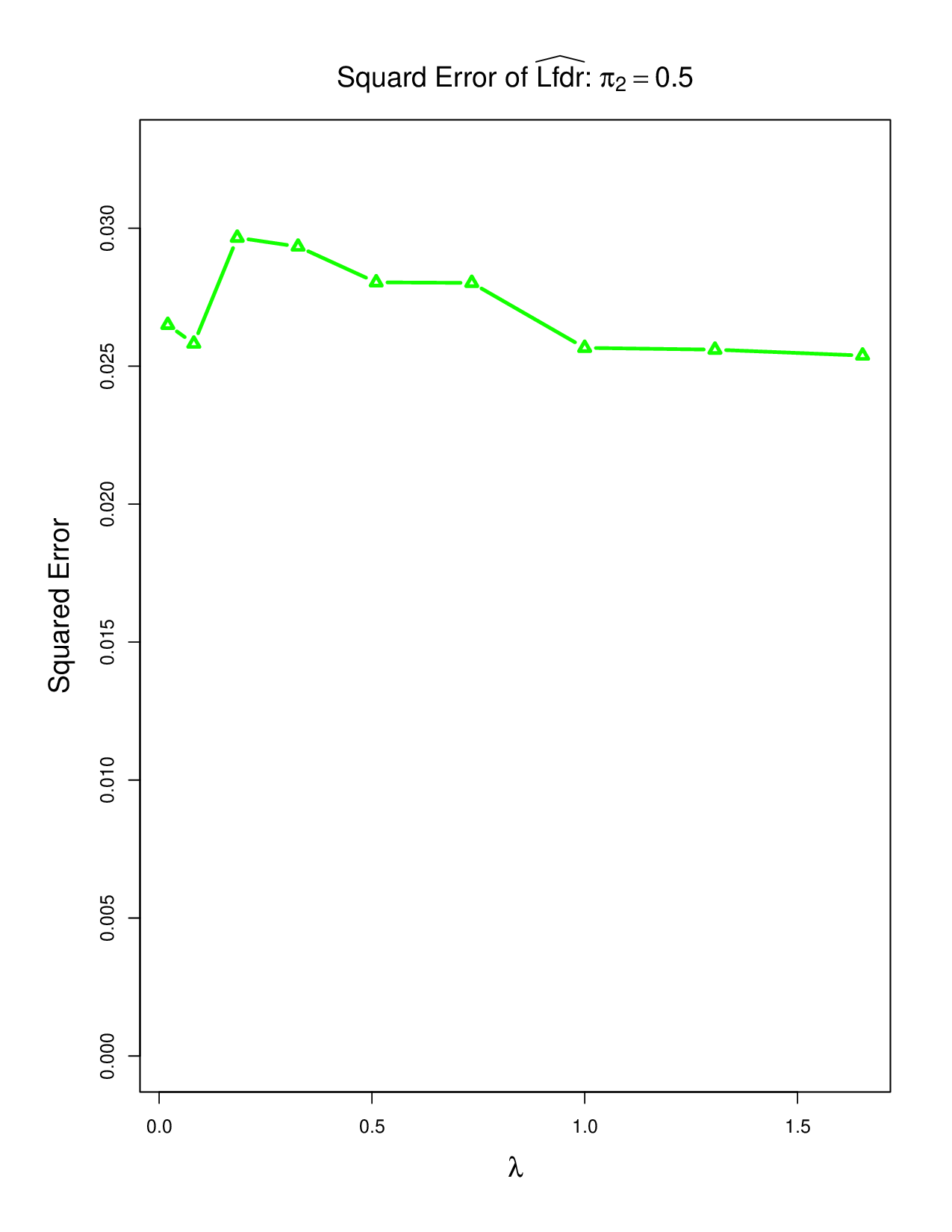}
  \includegraphics[height=50mm,width=50mm]{./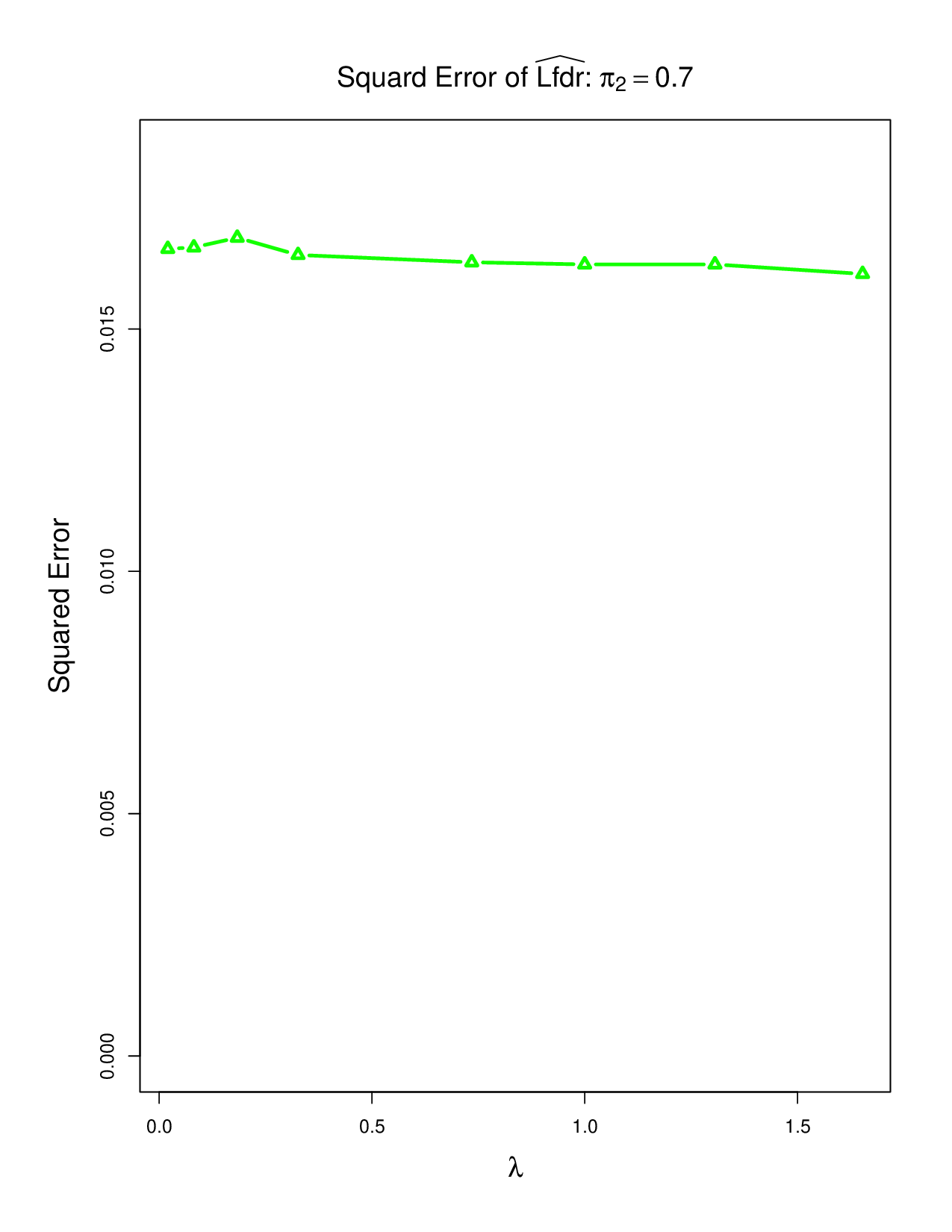}
  \caption{ The mean squared error for the estimation of the local fdr scores using the proposed method: $m=1000, n=5, L=2, \pi_2=0.1, 0.3, 0.5$ and $0.7$ respectively.
  }\label{fig:gate1:lfdr:s2}
\end{figure}

\begin{figure}[H]
  \centering
  \includegraphics[height=50mm,width=50mm]{./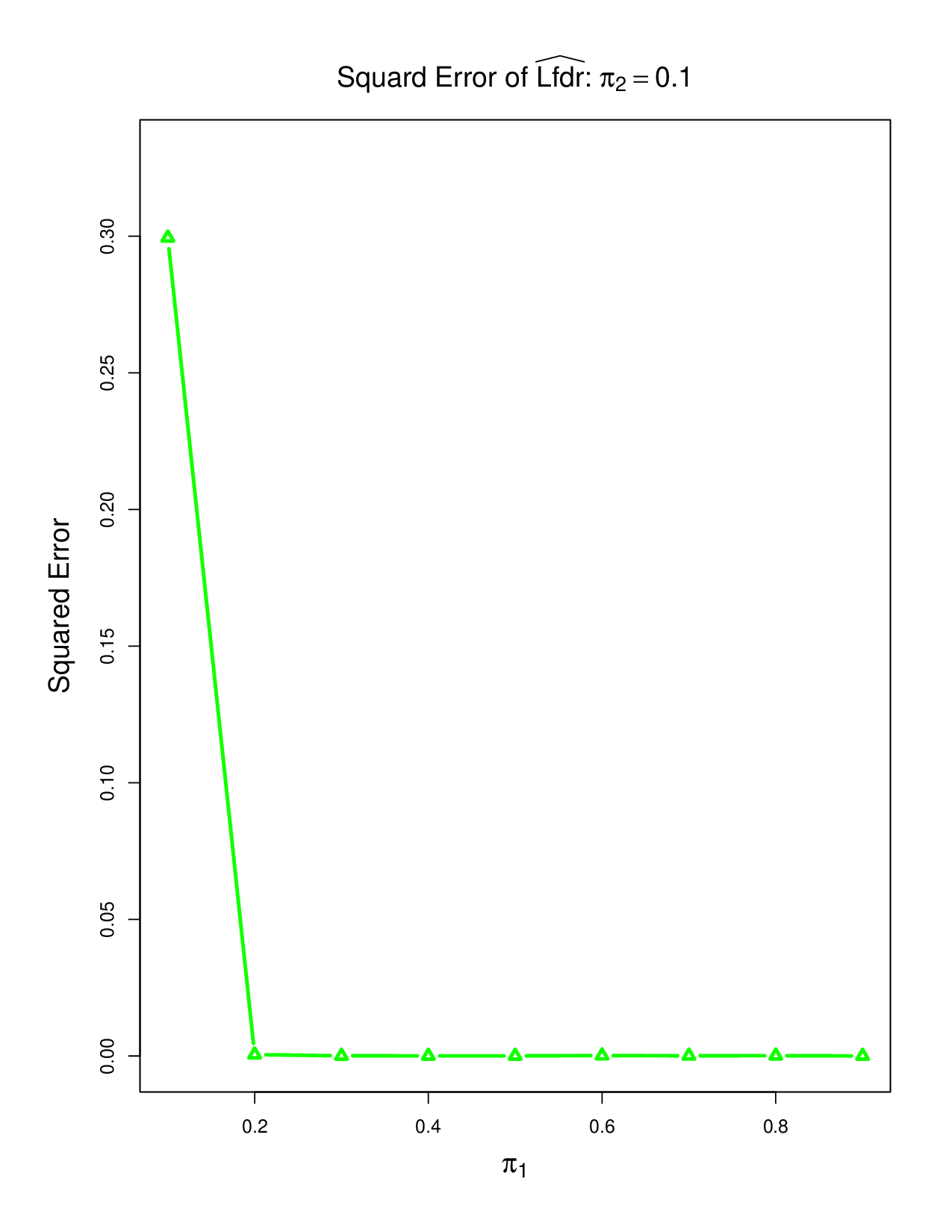}
  \includegraphics[height=50mm,width=50mm]{./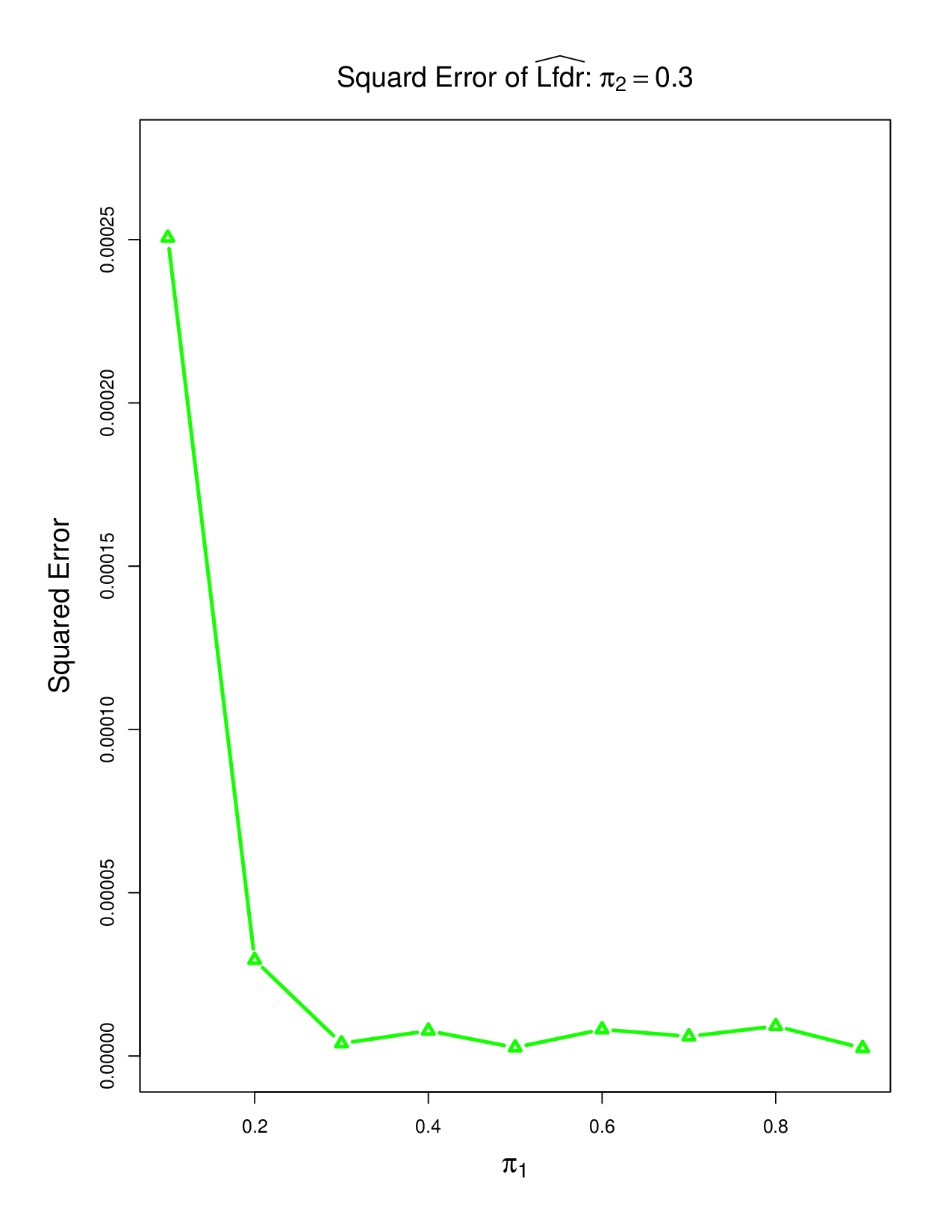}\\
  \includegraphics[height=50mm,width=50mm]{./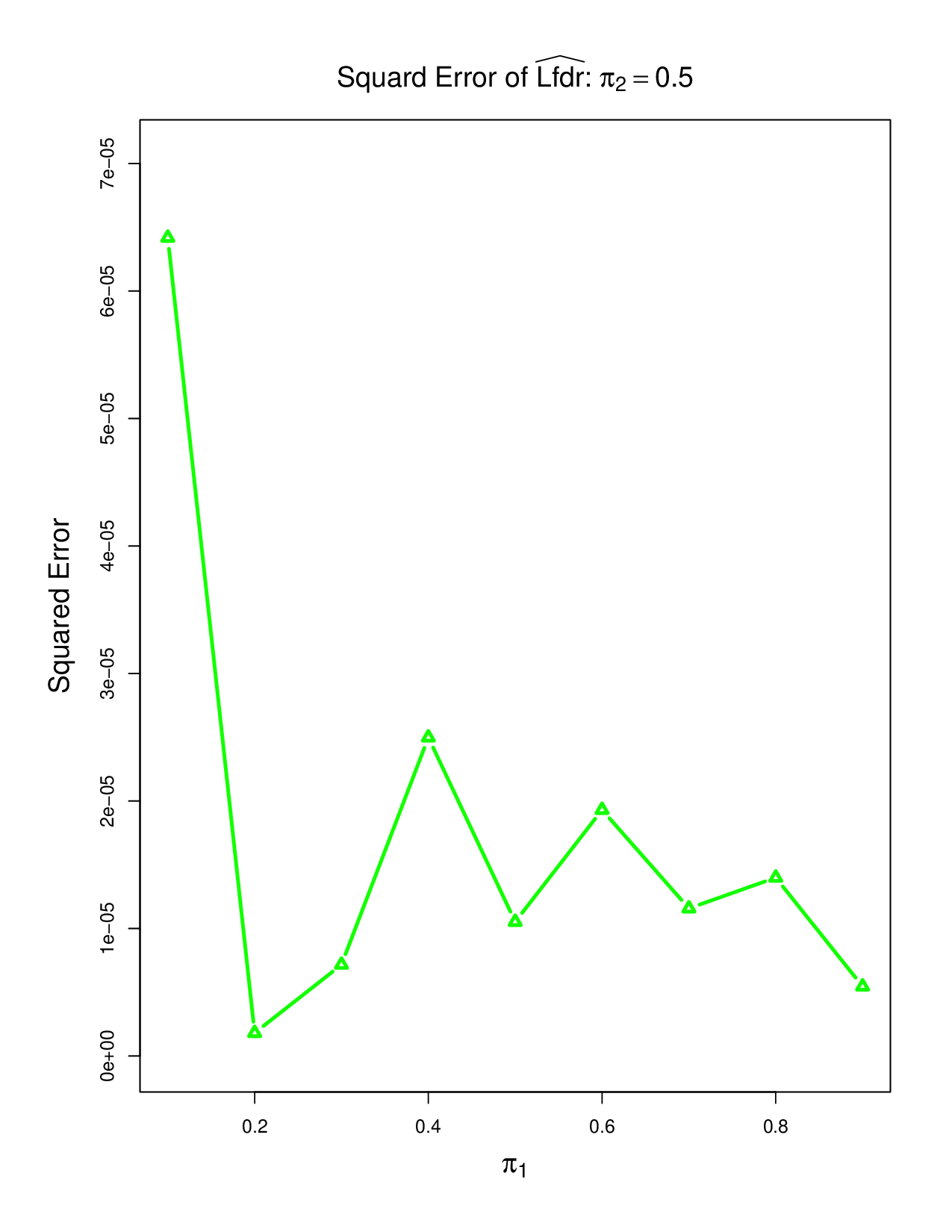}
  \includegraphics[height=50mm,width=50mm]{./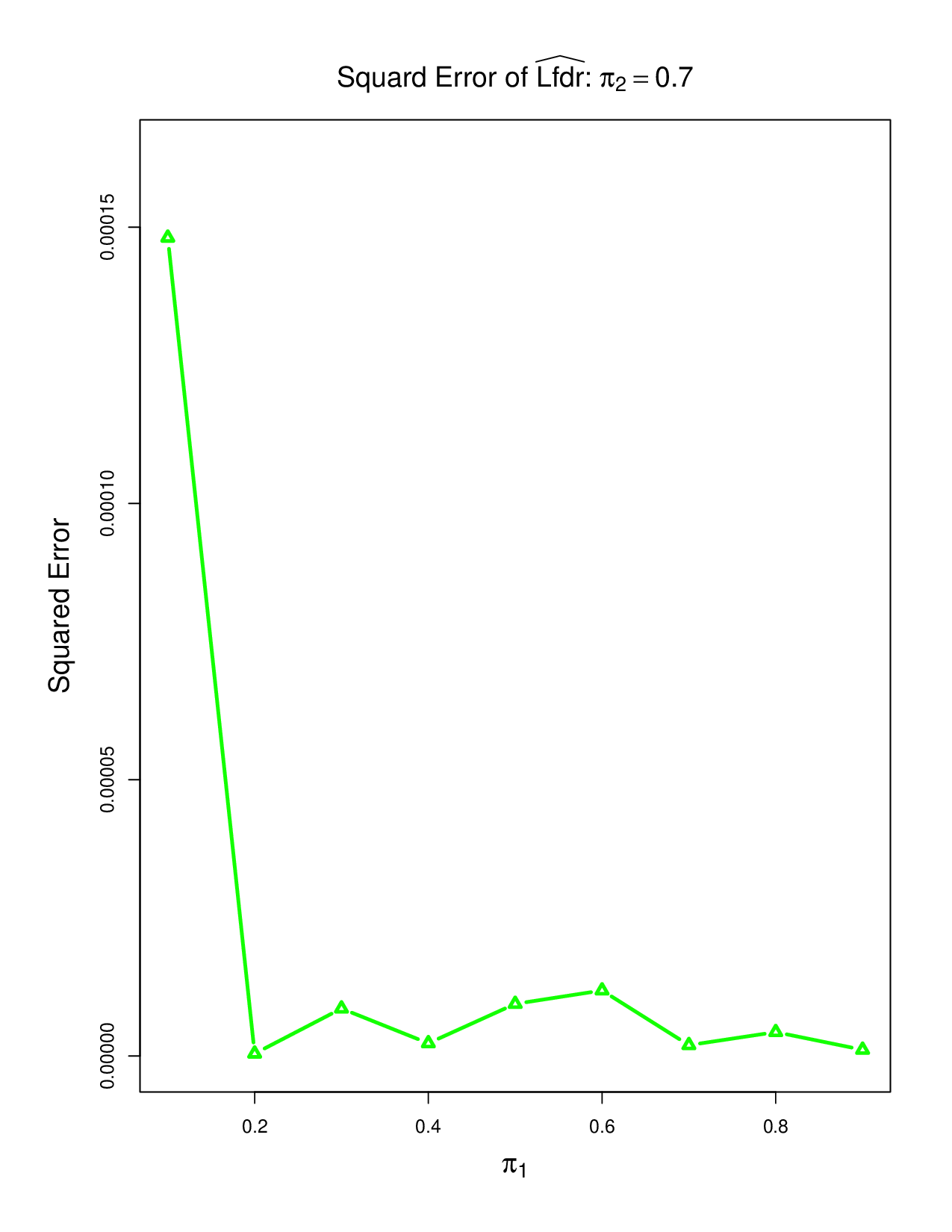}
  \caption{ The mean squared error for the estimation of the local fdr scores using the proposed method: $m=100, n=50, L=1, \pi_2=0.1, 0.3, 0.5$ and $0.7$ respectively.
  }\label{fig:gate1:lfdr:s3}
\end{figure}

\begin{figure}[H]
  \centering
  \includegraphics[height=50mm,width=50mm]{./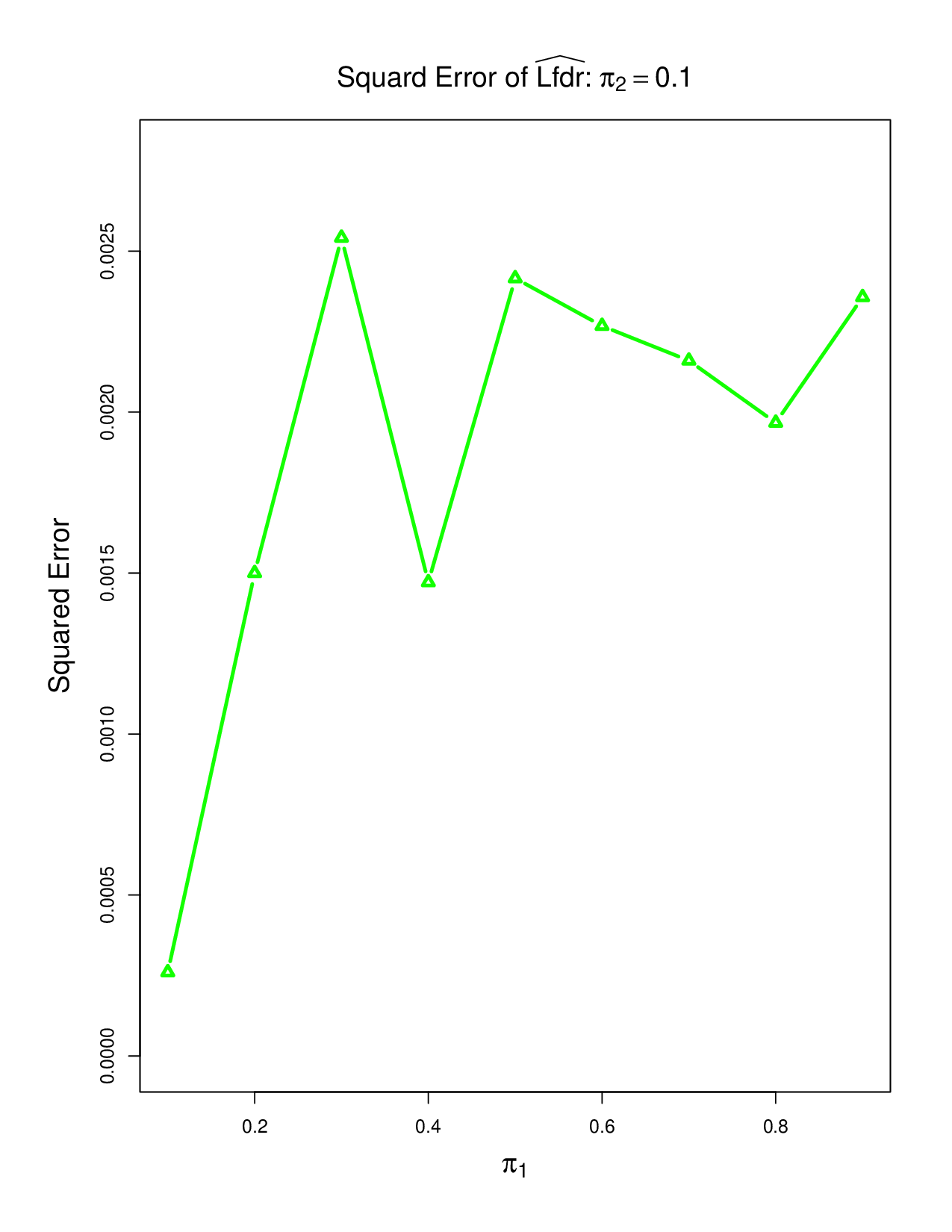}
  \includegraphics[height=50mm,width=50mm]{./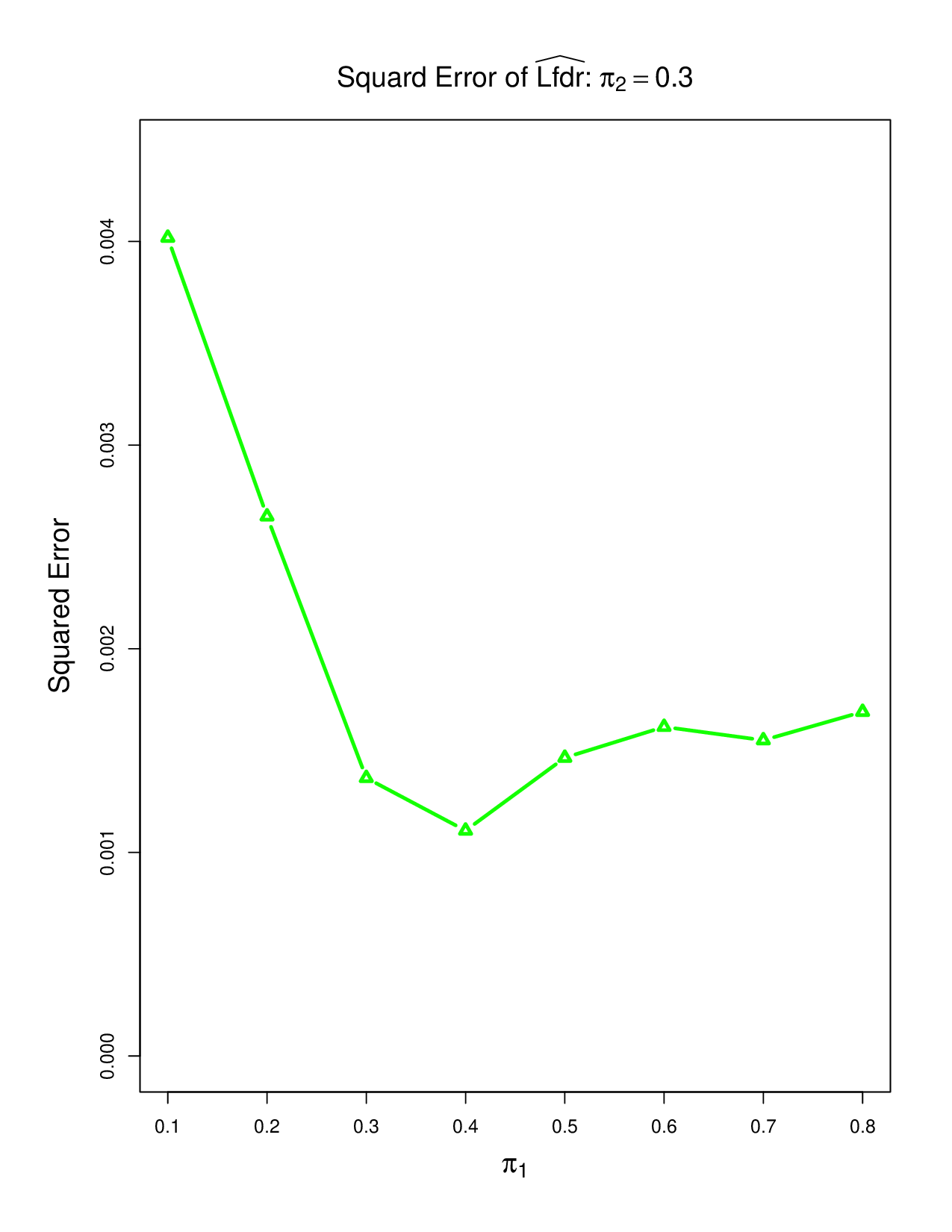}\\
  \includegraphics[height=50mm,width=50mm]{./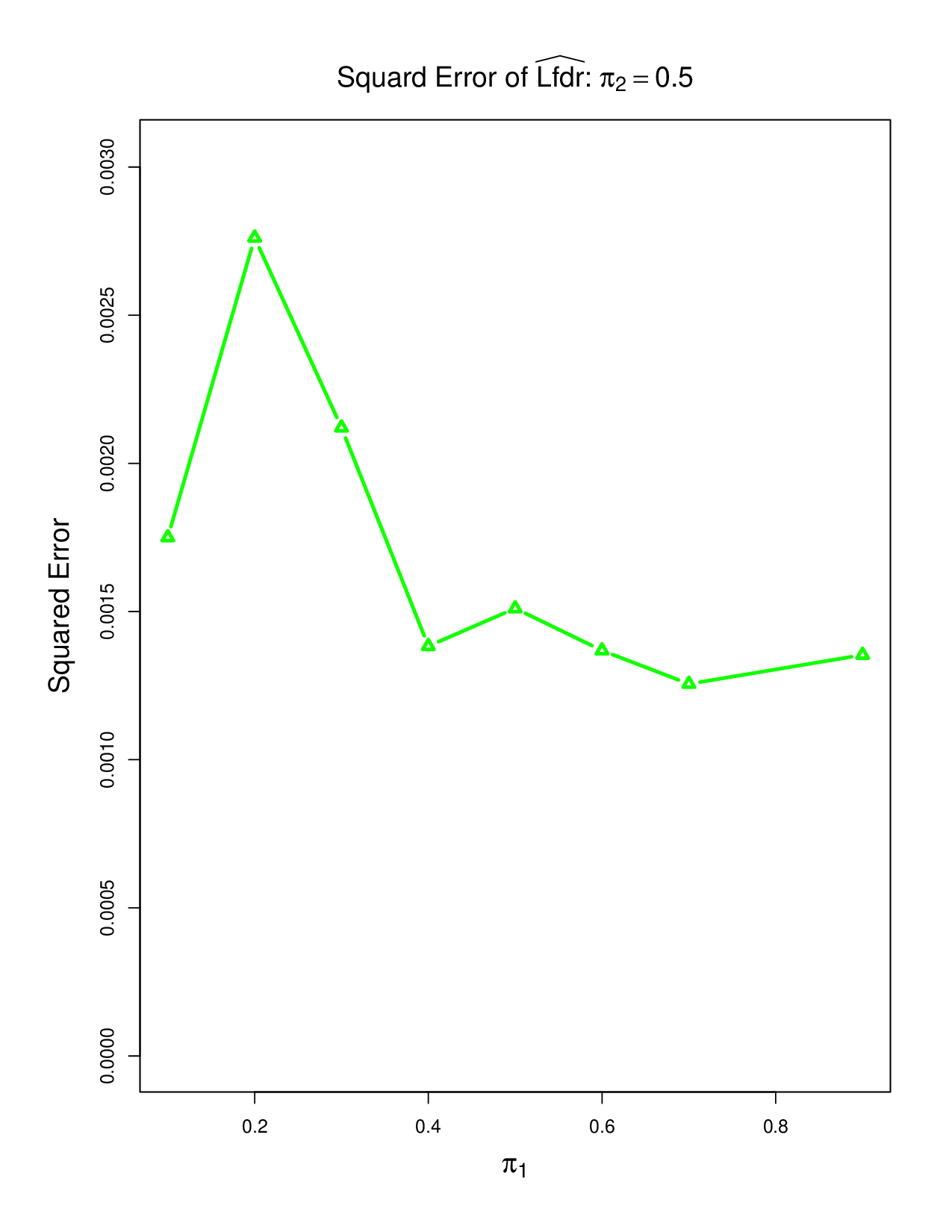}
  \includegraphics[height=50mm,width=50mm]{./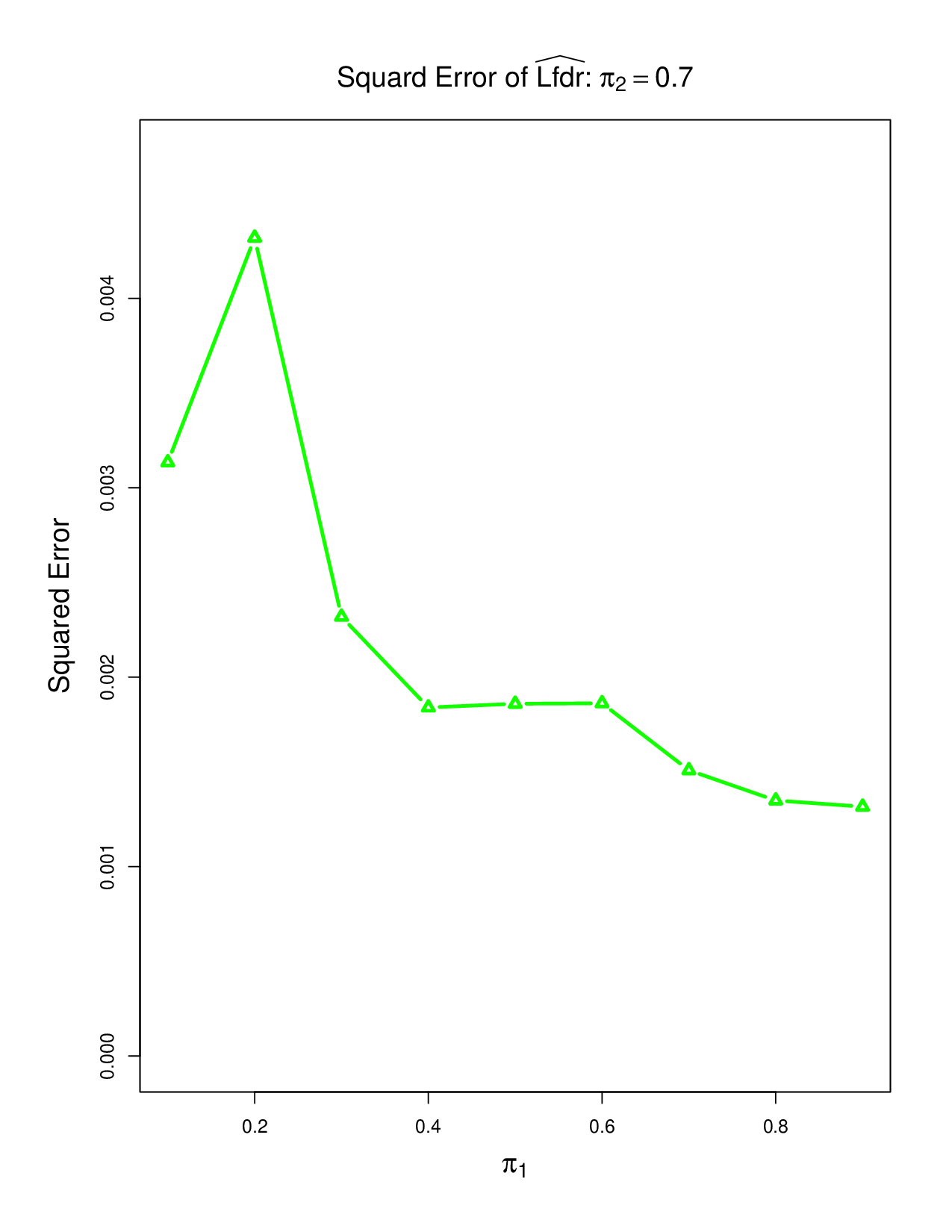}
  \caption{ The mean squared error for the estimation of the local fdr scores using the proposed method: $m=100, n=50, L=2, \pi_2=0.1, 0.3, 0.5$ and $0.7$ respectively.
  }\label{fig:gate1:lfdr:s4}
\end{figure}

\end {document}